\documentclass[acmsmall,screen,nonacm]{acmart}
\usepackage{braket}
\usepackage{amsmath}
\usepackage{mathtools}
\usepackage{mathpartir}
\usepackage{bussproofs}
\usepackage{makecell}
\usepackage{stmaryrd}
\usepackage{multirow}
\usepackage{tcolorbox} % needed for colored boxes
\tcbuselibrary{skins,breakable} % optional, for nicer boxes
\usepackage{pdflscape}
\usepackage{graphicx}
\usepackage{adjustbox} % <— for automatic fitting
\usepackage{listings}
\usepackage{xcolor}
\usepackage{subcaption}    % for subfigures
\usepackage{algorithm}
\usepackage{algpseudocode}
\usepackage{float}
\usepackage{color,soul}

\lstdefinestyle{pseudocode}{
    basicstyle=\ttfamily\scriptsize,          % monospaced font
    keywordstyle=\color{blue!70!black},  % keywords in blue
    commentstyle=\color{gray!70!black},  % comments in gray
    stringstyle=\color{orange!90!black}, % (if you use strings)
    columns=fullflexible,                % no weird spacing
    keepspaces=true,                     % preserve indentation
    showstringspaces=false,              % don't show spaces as dots
    frame=none,                          % no box around code
    xleftmargin=1em,                     % small left indent
    escapeinside={(*@}{@*)},             % allow LaTeX inside
    aboveskip=0.5em,
    belowskip=0.5em,
    breaklines=true,  
    backgroundcolor=\color{white},
    frame=single,
}

\newcommand{\halt}[0]{\texttt{halt}} 
\newcommand\sem[1]{\llbracket #1\rrbracket}

\newcommand{\horizon}[0]{k}
\newcommand{\precision}[0]{\rho}

\newcommand{\boolTerm}[0]{\texttt{BTerm}}
\newcommand{\QTerm}[0]{\texttt{QTerm}}

\newcommand{\realConst}[0]{r}

\newcommand{\RealDTerm}[0]{\texttt{PTerm}}
\newcommand{\partialTrace}[0]{\texttt{Pt}}
\newcommand{\trace}[0]{\texttt{Tr}}
\newcommand{\StatesAssertion}[0]{\texttt{StatesA}}
\newcommand{\stateAssertion}[0]{\upvarphi}

\newcommand{\DisAssertion}[0]{\texttt{EnsA}}

\newcommand{\allProbDistributions}[1]{\mathcal{D}(#1)}
\newcommand{\allensZ}[1]{\mathcal{D}_{\bot}(#1)}

\newcommand{\distribution}[0]{\belief}
\newcommand{\initialDistribution}[0]{\distribution_*}
\newcommand{\emptyDistribution}[0]{\mathbb{O}}

\newcommand{\pomdpDistribution}[0]{\widetilde{\distribution}}

\newcommand{\probability}[0]{\mathbb{P}}
\newcommand{\probVal}[0]{p}

\newcommand{\hilbertSpaceDim}[0]{n}
\newcommand{\numQubits}[0]{N}
\newcommand{\hilbertSpace}[0]{\mathcal{H}}

\newcommand{\densityMatrix}[0]{\rho}
\newcommand{\someMatrix}[0]{M}

\newcommand{\allQuantumVariables}{Q}
\newcommand{\allClassicalVariables}{\mathcal{X}}
\newcommand{\qubit}[0]{\alpha}
\newcommand{\bit}[0]{b}
\newcommand{\classicalState}[0]{c}
\newcommand{\hybridState}[0]{h}

\newcommand{\classicalStateSize}[0]{m}
\newcommand{\allClassicalStates}[0]{\mathcal{C}}
\newcommand{\allHybridStates}[0]{\mathcal{M}_{\allQuantumVariables,\allClassicalVariables}}

\newcommand{\subsetHybridStates}[0]{\mathcal{M}'}

\newcommand{\qVar}[0]{q}
\newcommand{\cVar}[0]{x}
\newcommand{\subsetProb}[0]{\texttt{prob}}
\newcommand{\normalize}[0]{\texttt{normalize}}
\newcommand{\matrixConst}[0]{M}
\newcommand{\boolConst}[0]{\bit}

\newcommand{\compBasis}[1]{Z_{#1}}

\newcommand{\identityMatrix}[0]{\mathbb{I}}

\newcommand{\notGate}[0]{\texttt{X}}

\newcommand{\hadamardGate}[0]{\texttt{H}}
\newcommand{\CXGate}[0]{\texttt{CX}}

\newcommand{\projector}[0]{M}
\newcommand{\linearOp}[0]{L}

\newcommand{\instruction}[0]{I} % a single instruction
\newcommand{\allInstructions}[0]{\mathcal{I}}

\newcommand{\noHardwareChannel}[0]{\zeta}
\newcommand{\channel}[1]{\noHardwareChannel^{#1}_\quantumHardware} % an specific quantum channel, pass as parameter an instruction

\newcommand{\quantumHardware}[0]{H}

\newcommand{\precondition}[0]{\phi}
\newcommand{\postcondition}[0]{\psi}

\newcommand{\skipInstruction}[0]{\texttt{skip}}

\newcommand{\QProgram}[0]{P} 
\newcommand{\classicalVar}[0]{x}
\newcommand{\integerVal}[0]{d}

\newcommand{\allOps}[0]{\texttt{ops}}

\newcommand{\allRelOps}[0]{\texttt{rp}}

\newcommand{\realVal}[0]{r}

\newcommand{\observableSequence}[1]{\Theta(#1)}

\newcommand{\instructionGuard}[0]{G}

\newcommand{\pomdpTransitionFunc}[0]{\delta}

\newcommand{\allBeliefs}[0]{B}

\newcommand{\beliefGraph}[0]{\mathcal{B}}
\newcommand{\belief}[0]{\beta}
\newcommand{\supp}[0]{\texttt{Supp}}

\newcommand{\IPMAInstructionSet}[0]{\text{IPMA}}
\newcommand{\IPMAInstructionSetPrime}[0]{\text{IPMA' }}
\newcommand{\CXHInstructionSet}[0]{\text{CX+H}}
\newcommand{\ResetInstructionSet}[0]{\text{Reset}}
\newcommand{\MZMXInstructionSet}[0]{\text{MZMX}}
\newcommand{\LocalBellInstructionSet}[0]{\text{LBell}}

\newcommand{\lenEnsembleSet}{N}

\newcommand{\numAlgorithms}[0]{\allPrograms_\horizon}

\newcommand{\markovChain}[0]{M}
\newcommand{\QMarkovChain}[0]{MC}
\newcommand{\QMCStates}[0]{\allConfigurations}
\newcommand{\QMCTransitionF}[0]{\Delta}

\newcommand{\quantumPolytope}[0]{\mathcal{P}}

\newcommand{\QSetPolytopes}[0]{\Omega}
\newcommand{\preconditionPolytopes}[0]{\texttt{Poly}}

\newcommand{\allPrograms}[0]{\mathcal{A}}
\newcommand{\configuration}[0]{\kappa}
\newcommand{\allConfigurations}[0]{\mathcal{K}}

\newcommand{\getFinalEnsemble}[0]{\textsc{finalEnsemble}}

\newcommand{\finiteSetPrograms}[0]{\mathbb{J}}

\newcommand{\nextIns}[0]{\textsc{NextI}}

\newcommand{\candidatePrograms}[0]{\allPrograms^*}

\newcommand{\synthesisInstruction}[0]{J}
\newcommand{\candidateProgram}[0]{\textsc{CP}}

\newcommand{\lengthProgram}[0]{\operatorname{Len}}
\newcommand{\ReachableConfigs}[0]{\operatorname{ReachConfs}}
\newcommand{\ensembleCstates}[0]{\operatorname{CStates}}

\newcommand{\DCandidates}[0]{\operatorname{DCP}}
\newcommand{\IfElseBlocks}[0]{\operatorname{CondBlock}}

\newtheorem{problem}{Problem}
\newtheorem{example}{Example}
\newtheorem{definition}{Definition}
\newtheorem{remark}{Remark}
\AtBeginDocument{%
  }

\setcopyright{acmlicensed}
\copyrightyear{2018}
\acmYear{2018}
\acmDOI{XXXXXXX.XXXXXXX}
\acmConference[Conference acronym 'XX]{Make sure to enter the correct
  conference title from your rights confirmation email}{June 03--05,
  2018}{Woodstock, NY}
\acmISBN{978-1-4503-XXXX-X/2018/06}

\begin{document}

%%
%% The "title" command has an optional parameter,
%% allowing the author to define a "short title" to be used in page headers.
\title{Noise-aware Verification and Synthesis of Quantum Programs}

\author{Stefanie Muroya}
\affiliation{%
  \institution{Institute of Science and Technology in Austria (ISTA)}
  \city{Klosterneuburg}
  \country{Austria}
}

\author{Krishnendu Chatterjee}
\affiliation{%
  \institution{Institute of Science and Technology in Austria (ISTA)}
  \city{Klosterneuburg}
  \country{Austria}
}

\author{Thomas A. Henzinger}
\affiliation{%
  \institution{Institute of Science and Technology in Austria (ISTA)}
  \city{Klosterneuburg}
  \country{Austria}
}

%%
%% By default, the full list of authors will be used in the page
%% headers. Often, this list is too long, and will overlap
%% other information printed in the page headers. This command allows
%% the author to define a more concise list
%% of authors' names for this purpose.
\renewcommand{\shortauthors}{Muroya et al.}

%%
%% The abstract is a short summary of the work to be presented in the
%% article.
\begin{abstract}
While most research on quantum programming considers an idealized, noise-free semantics for quantum programs,
we reason about quantum programs that are executed on real, noisy hardware.
We consider the error models published by quantum hardware vendors to give a hardware-dependent semantics to quantum programs.
This work presents a comprehensive study of noise-aware quantum programming, ranging from logical foundations to automated verification and synthesis.
We develop a noise-aware quantum Hoare logic, and use it to derive algorithmic methods for the bounded verification of quantum programs on specific hardware, and for the automatic synthesis of noise-optimal loop-free quantum programs. 
In this way, we synthesize hardware-dependent subroutines that commonly occur in quantum algorithms, such as parity checks, quantum state preparation, and quantum state discrimination.
We evaluate our method on the hardware specifications provided by the IBM Qiskit toolkit.
Besides finding different optimal subroutines for different noise models,
our synthesis tool also shows that classical probabilistic branching is needed for optimality in quantum programming.
\end{abstract}

%%
%% The code below is generated by the tool at http://dl.acm.org/ccs.cfm.
%% Please copy and paste the code instead of the example below.
%%
\begin{CCSXML}

<ccs2012>
   <concept>
       <concept_id>10003752.10010124.10010138.10010142</concept_id>
       <concept_desc>Theory of computation~Program verification</concept_desc>
       <concept_significance>500</concept_significance>
       </concept>
 </ccs2012>
\end{CCSXML}

\ccsdesc[500]{Theory of computation~Program verification}

%%
%% Keywords. The author(s) should pick words that accurately describe
%% the work being presented. Separate the keywords with commas.
\keywords{Quantum computing, Program synthesis.}

% \received{20 February 2007}
% \received[revised]{12 March 2009}
% \received[accepted]{5 June 2009}

%%
%% This command processes the author and affiliation and title
%% information and builds the first part of the formatted document.
\maketitle

\section{Introduction}
Quantum computing is constrained by the error rates of quantum devices.
Each device can be described by a mathematical error model, called a \textit{hardware specification}, which characterizes the probability of errors associated with each qubit and quantum instruction.
Hardware vendors publish these specifications based on calibration experiments~\cite{Qiskit,Cirq,Quantinuum}.
However, despite the reality of the Noisy Intermediate-Scale Quantum (NISQ) era~\cite{Preskill18}, most existing logics for quantum programs assume an idealized, noise-free semantics~\cite{Feng07,Ying11,ying24,chadha06,Wu2025}.
Moreover, these approaches typically use density matrices for the semantic representation of quantum states~\cite{Wu2025,Ying11,ying24,rand19}.
While density matrices accurately capture measurement statistics, they abstract away structural information about probability distributions, which limits their use for modular reasoning about subroutines of quantum programs.
A formal framework for noise-aware modular reasoning would benefit quantum hardware design and the hardware-targeted compilation of quantum programs.

We address this gap with a comprehensive study of hybrid computation executed on noisy  hardware that features both classical and quantum memory.
While most formal semantics of quantum programs are based on density matrices~\cite{kakutani09,deng22,Ying11,zhou19,ying24}, 
our semantic primitive is the {\em ensemble}~\cite{qc_bible}, which is a probability distribution with finite support over classical-quantum states.
While a density matrix often has different ensemble realizations, 
an ensemble corresponds to a unique density matrix~\cite{qc_bible}. 
Ensemble realizations with the same reduced density operator cannot be distinguished by measurements on a quantum subsystem, but they remain relevant for the modular analysis of quantum subroutines when the global state is not known, because the global behavior of a quantum system may depend on the particular realization of a subsystem.
Consequently, ensembles capture information that is lost at the level of density matrices, even on ideal hardware.
They become particularly relevant in the presence of noise, and the canonical reasoning primitive, since vendors publish noise models that naturally induce particular ensemble realizations~\cite{Qiskit}.

In noisy quantum computation, ensembles play a role analogous to probabilistic states in classical  stochastic computation.
In particular, from an initial ensemble, the execution of a terminating quantum program on noisy hardware leads to a final ensemble.
Thus, we use an assertion language for quantum programs which is interpreted over ensembles.
Three special forms of ensemble assertions are noteworthy:
a {\em singular assertion} specifies a single ensemble; 
a {\em linear assertion} defines a finite union of polytopes of ensembles;
a {\em target assertion} specifies a lower bound on the probability of a ``target set'' of  classical-quantum states.

In this noise-aware study of quantum programs, 
our first goal is the {\em logical foundation}. 
Given a hardware specification and a Hoare triple consisting of a precondition, quantum program, and postcondition,
we provide rules for proving that, when the program terminates from an initial ensemble that satisfies the precondition, 
the final ensemble satisfies the postcondition
({\bf\em noise-aware quantum Hoare logic}).
Our second goal is the {\em automated verification of bounded quantum programs on noisy hardware} using an algorithmic arbitrary-precision approach,
which explores ensemble graphs 
({\bf\em noise-aware quantum model checking}).
The third goal is the {\em automatic synthesis of hardware-optimal loop-free quantum programs}, where given a hardware specification,
a precondition, a target postcondition, and a bound $k$, the objective is to obtain a quantum program with at most 
$k$ instructions that, when executed from any ensemble that satisfies the precondition, yields an ensemble that satisfies the postcondition and,
in addition, the synthesized program has the highest precision 
(i.e., probability to reach the target set)
among all programs of length at most~$k$
({\bf\em hardware-optimal quantum program synthesis}).

\subsection{Ensembles versus density matrices}
\label{subsec:ensembles_vs_dm}

The analysis of quantum programs on noisy hardware is particularly important for small, commonly occurring subroutines of larger quantum algorithms, 
such as the parity-bitflip subroutine for error correction~\cite{Muroya25},
or quantum state preparation~\cite{qc_bible}.
These subroutines are typically short, loop-free quantum programs 
whose performance depends greatly on the noise model (i.e., hardware specification),
and on the mapping of logical qubits of the program to physical qubits of the hardware~\cite{Muroya25}.
However, to modularly reason about subroutines and quantum subsystems, 
without knowing the global quantum state, 
it is crucial to use ensembles rather than density matrices.

Consider a program that utilizes $\hilbertSpaceDim$ qubits.
The global program state can be defined either as an ensemble---a distribution over quantum states (i.e., complex vectors) of dimension $2^\hilbertSpaceDim$---or as a density matrix of dimension $2^{\hilbertSpaceDim} \times 2^{\hilbertSpaceDim}$.
If we are given a quantum system that is in pure state $\ket{\qubit_i}$ (using Dirac's bra-ket notation) with probability~$\probVal_i$, the corresponding density matrix is $\sum_i \probVal_i\ket{\qubit_i}\bra{\qubit_i}$
(where $\ket{\qubit_i}\bra{\qubit_i}$ denotes an outer product). 
The density matrix encodes the ensemble as a single operator that uniquely determines the expected values of all measurements;
it represents the equivalence class of ensembles that are physically indistinguishable by measurements.
Motivated by their compactness and the fact that they capture all measurement statistics,
previous Hoare logics \cite{deng22,Wu2025,zhou19,liu19,ying24} use density matrices for reasoning about quantum programs.
Since the ultimate correctness of a quantum program is defined by its measurement outcomes, the density matrix serves as a sufficient abstraction.

However, in the context of modular analysis, 
density matrices are not sufficiently expressive, 
because subroutines of quantum programs are generally applied to intermediate, unmeasured states, for which only a partial description is available.
For example, the very essence of parity-bitflip or quantum state preparation is to bring certain qubits into a specific state, 
without destroying the state (and entanglement) of other qubits.
While a local subsystem can be described by a local density matrix, 
obtained via a partial trace, this abstraction discards vital information about the subsystem structure and global correlations.
Crucially, different global states can induce different local ensembles with the same partial trace, yet evolve into distinguishable global states. 

\begin{example}[Local Bell-state preparation] 
Consider the two global quantum states
\begin{equation}
  \ket{\qubit_0} = \frac{1}{\sqrt{2}}(\ket{000} + \ket{110})\quad\hbox{\rm and}\quad 
  \ket{\qubit_1} = \frac{1}{\sqrt{2}}(\ket{0{+}0} + \ket{1{-}0})
\end{equation}
on three qubits.
Suppose we want to verify that a subroutine brings the second and third qubits into a Bell state,
leaving the first qubit untouched.
By discarding the first qubit from the global states, we obtain, respectively, the two ensembles 
\begin{equation}
  \distribution_0 = \{\ket{00} : 0.5,\; \ket{10} : 0.5\}\quad\hbox{\rm and}\quad \distribution_1 = \{\ket{{+}0} : 0.5,\; \ket{{-}0} : 0.5\} 
  \label{eq:ensembles_same_rho} 
\end{equation}
of a 2-qubit subsystem,
where $\ket{{+}} = (\ket{0} + \ket{1})/\sqrt{2}$ and $\ket{{-}} = (\ket{0} - \ket{1})/\sqrt{2}$.
Both $\distribution_0$ and $\distribution_1$ have the same density matrix
$\densityMatrix = (\ket{0}\bra{0} + \ket{1}\bra{1})\otimes (\ket{0}\bra{0})/2$.
If the subroutine applies a controlled-not gate (called $\CXGate$), the successor ensembles are, respectively, 
\begin{equation}
  \distribution_0' = \{\ket{00} : 0.5,\; \ket{11} : 0.5\}\quad\hbox{\rm and}\quad
   \distribution_1' = \{\ket{\Psi^+} : 0.5,\; \ket{\Psi^-} : 0.5\},
\end{equation} 
where $\ket{\Psi^\pm}=(\ket{00}\pm\ket{11})/\sqrt{2}$ are two Bell states.
Hence only $\distribution_1'$ consists of Bell states.
Although the two successor ensembles induce the same density matrix $\densityMatrix' = (\ket{00}\bra{00} + \ket{11}\bra{11})/2 = (\ket{\Psi^+}\bra{\Psi^+} + \ket{\Psi^-}\bra{\Psi^-})/2$, 
the corresponding global successor states,
\begin{equation}
  \ket{\qubit_0'} = \frac{1}{\sqrt{2}}(\ket{000} + \ket{111})\quad\hbox{\rm and}\quad 
  \ket{\qubit_1'} = \frac{1}{\sqrt{2}}(\ket{0}\otimes\ket{\Psi^+} + \ket{1}\otimes\ket{\Psi^-}),
\end{equation}
have different density matrices.
This example shows that reasoning with density-matrices do not retain enough information to verify certain subsystem properties.
Such reasoning therefore requires either the global quantum state or a richer local semantic object, such as ensembles.
This difference is visible at the level of Hoare triples.
In density-matrix semantics, the triple $\{\,\densityMatrix\,\}\ \ \CXGate(0,1)\ \{\,\densityMatrix'\,\}$ is valid.
In ensemble semantics, the triple $\{\,\distribution_1\,\}\ \ \CXGate(0,1)\ \{\,\probability(\ket{\Psi^+}\lor \ket{\Psi^-}) = 1\,\}$ is valid, but the same triple with precondition $\distribution_0$ is not, even though $\densityMatrix$ identifies $\distribution_0$ with $\distribution_1$.
Hence ensemble semantics is necessary for proving that a subroutine for local Bell-state preparation works in all global contexts.
\label{ex:motivation}
\end{example}

\noindent
Similarly, in cryptographic protocols a party may reason about its local state as an open system because it does not possess the global state.
Ignoring the actual mixture of pure states of a given density matrix can also lead to suboptimal resource use.
For instance, in noise-aware analysis, 
we may prefer quantum instructions whose errors do not significantly affect the high-probability states of an ensemble.
For all these purposes, we reason about explicit {\em pure-state probabilities}, which are provided by hardware specifications, 
rather than about {\em measurement averages}.

\subsection{Contributions} 
First, we establish logical foundations for the noise-aware analysis of quantum programs in Section~\ref{sec:nqhl}, where we introduce the {\em noise-aware quantum Hoare logic} nQHL.
This is the first logic that incorporates hardware error models into quantum program verification, bridging the gap between idealized, noise-free specifications and their execution on real, noisy devices.
Our logic operates on ensembles rather than density matrices in order to support the modular analysis of subroutines.

Second, in Section~\ref{sec:verification} we use these logical foundations to develop algorithmic methods for the noise-aware verification of loop-free quantum programs with linear preconditions. 
Our verification procedure translates a quantum program together with a hardware specification into a Markov chain called {\em ensemble graph}, and then performs bounded model checking on the ensemble graph using arbitrary-precision arithmetic.
We show that every {\em linear precondition} (paired with any postcondition) can be handled by considering a finite number of initial ensembles.

Third, in Section~\ref{sec:synthesis} we move from noise-aware program verification to the noise-aware synthesis of loop-free quantum programs.
In noise-aware bounded synthesis, we are given a hardware specification, a precondition-postcondition pair, and a horizon~$k$.
We wish to find a loop-free program with at most $k$ instructions which, when executed from any ensemble that satisfies the precondition, yields an ensemble that satisfies the postcondition.
For linear preconditions, we reduce bounded synthesis to 
dominant-program enumeration in order to avoid the brute-force enumeration
of all programs of length at most~$k$.
For {\em linear preconditions paired with target postconditions}, we avoid syntactic search altogether by developing a game-theoretic approach based on linear programming and the bounded-reachability analysis of Markov decision processes (MDPs).
If the precondition specifies a unique initial ensemble and the postcondition specifies a target set, then an optimal bounded-reachability policy can be computed on the ensemble MDP.
This policy can subsequently be translated into a {\em hardware-optimal quantum program}, which is the program that---among all loop-free programs with at most $k$ instructions---reaches the target set from the given initial ensemble with the highest probability~\cite{Muroya25}.
The guarantee of hardware-optimality makes this method valuable for quantum compilers.

While for singular preconditions (i.e., unique initial ensembles), the optimal program for reaching a target set is always deterministic,
for more general preconditions this is no longer the case.
For arbitrary linear preconditions (and target postconditions), 
we show that we can use linear programming to mix a finite number of deterministic (a.k.a.~pure) strategies on the underlying ensemble MDP.
We find that if a precondition is satisfied by two or more ensembles, 
even on noise-free hardware,  
our synthesis method produces optimal quantum programs that use classical probabilistic branching
(i.e., branching
where one branch is executed with a specified probability~$\probVal$, 
and the alternative branch with probability~$1-\probVal$).
In this way, to the best of our knowledge, 
we are the first to show the {\bf\em need for probabilistic instructions} in quantum programming
(Example~\ref{ex:state_discr}).

Fourth, in Section~\ref{sec:experiments} we report the experimental results of running our bounded-synthesis tool for five common tasks from quantum computing on 55 IBM hardware specifications accessed through Qiskit~\cite{Qiskit}.
While \cite{Muroya25} considers only the special case of a single initial ensemble and a set of target states,
our tool handles arbitrary linear preconditions and arbitrary postconditions.
This allows us to synthesize hardware-optimal quantum subroutines  
for the {\sc Parity-bitflip}, {\sc State-reset}, and {\sc GHZ state-preparation} formulations from~\cite{Muroya25},
but also for the {\sc Local Bell-state preparation} problem from Subsection~\ref{subsec:ensembles_vs_dm}, 
{\sc State-discrimination} for nonorthogonal quantum states~\cite{qc_bible},
and generalizations of these tasks to multiple initial ensembles.
In many cases, the synthesized optimal programs differ for different hardware specifications, and they differ also from textbook solutions for idealized, noise-free hardware.
We also confirm our results by numerical hardware simulations.

 % why we need new semantics
\subsection{Related work}
Existing Hoare logics for quantum programs can be broadly classified into two paradigms.
In {\em expectation-based logics} \cite{hondt06,Ying11,Feng07,zhou19,liu19,ying24},
assertions are interpreted quantitatively as observables. 
While expressive in certain settings, such logics exhibit intrinsic limitations, as Boolean operations either are inexpressible or require indirect encodings.
{\em Satisfaction-based logics} \cite{kakutani09,sun24Q,deng22,chadha06,Wu2025,Unruh19},
by contrast, adopt a Boolean semantics for assertions about observables. 
While ensembles are considered in~\cite{chadha06}, 
the associated assertions capture only measurement-observable properties of ensembles, and therefore lack the ability to reason about subsystems (see Subsection~\ref{subsec:ensembles_vs_dm}).
Furthermore, many satisfaction-based frameworks impose that all states share a common subspace. 
This assumption is often unrealistic, as contemporary quantum hardware rarely yields unit success probability.
To the best of our knowledge, nQHL is the first quantum logic whose assertions are interpreted over arbitrary ensembles, enabling precise reasoning about underlying ensemble realizations alongside standard measurement statistics.

nQHL is also the first quantum logic that incorporates hardware specifications.
Our objective is to verify and synthesize quantum programs---and their subroutines---under realistic, noisy conditions.
Most prior work on the formal verification and synthesis of quantum programs and quantum circuits~\cite{Kang23,Sohaib23,Xiao21,Furrutter24,Deng24,Xu23,Mukhopadhyay24,ZiHao24} focuses on idealized, noise-free settings, typically yielding guarantees only at the level of measurement outcomes.
One notable exception is~\cite{Muroya25}, which pioneered the POMDP approach for synthesizing hardware-dependent quantum programs. 
One of our contributions is a generalization of their approach from a unique initial ensemble to arbitrary linear preconditions, and from target-state reachability to arbitrary postconditions.
Technically, the former is achieved by reducing the analysis of linear preconditions to a finite number of initial ensembles, 
and by mixing the resulting pure strategies optimally using linear programming.
The latter is achieved by moving from the problem of bounded state reachability to bounded ensemble reachability on MDPs.
The unbounded versions of reachability are undecidable, even in the noise-free case~\cite{Bertrand13}.

\section{Framework}
\label{sec:framework}

\paragraph{{Hilbert spaces and pure quantum states.}} We consider quantum systems of $\numQubits$ qubits with fixed dimension~$\hilbertSpaceDim=2^{\numQubits}$.
Let $\hilbertSpace$ be a complex Hilbert space with vectors of dimension~$\hilbertSpaceDim$.
We write $\ket{\qubit}$ for a complex column vector,
and $\bra{\qubit}$ for the conjugate transpose.
 A closed quantum system does not interact with an environment and can be completely described by a {\em pure quantum state}, i.e., a unit vector in~$\hilbertSpace$.
Thus, a quantum state $\ket{\qubit}$ is a complex vector whose inner product $\bra{\qubit}\ket{\qubit}$, often abbreviated by $\braket{\qubit\mid\qubit}$, equals~1.
Given two quantum states $\ket{\qubit_0}$ and $\ket{\qubit_1}$ of dimensions $\hilbertSpaceDim_0$ and~$\hilbertSpaceDim_1$, respectively, 
their tensor product $\ket{\qubit_0} \otimes \ket{\qubit_1}$,
which we abbreviate by $\ket{\qubit_0\qubit_1}$, 
is a quantum state of dimension~$\hilbertSpaceDim_0\cdot \hilbertSpaceDim_1$ that resides in $\hilbertSpace_0\otimes \hilbertSpace_1$.

\paragraph{{Computational basis.}} 
An important orthonormal basis for $\hilbertSpace$ is the computational basis,
which describes classical states, 
and quantum states arise as superpositions---i.e., linear combinations with complex coefficients---of these basis elements. 
For a Hilbert space of dimension $\hilbertSpaceDim=2^{\numQubits}$, the {\em computational basis} $\compBasis{\hilbertSpace}=\{\ket{\bar{b}} \mid \bar{b} \in \{0,1\}^{\numQubits}\}$
consists of $\hilbertSpaceDim$ orthonormal unit vectors. 
Each vector $\ket{\bar{b}}$ is indexed by a bitstring $\bar{b}$ of length $\numQubits$, and is defined as the vector whose entries are 0 except for a single $1$ at the position that corresponds to the integer encoded by $\bar{b}$ in binary. 
We also identify basis vectors using the corresponding decimal numbers instead of bitstrings.

\paragraph{{Operators for $\hilbertSpace$.}} An operator $\someMatrix$ acting on $\hilbertSpace$ is a complex square matrix of dimension $\hilbertSpaceDim\times\hilbertSpaceDim$. 
One can construct a linear operator from a vector $\ket{\qubit} \in \hilbertSpace$ via the outer product denoted by $\ket{\qubit}\bra{\qubit}$. 
The result of applying the outer product to another vector $\ket{\qubit'} \in \hilbertSpace$ is defined by $ (\ket{\qubit}\bra{\qubit})\ket{\qubit'} = \braket{\qubit\mid \qubit'}\ket{\qubit}$.
The {\em trace} of an operator $\someMatrix$ is the sum of the main diagonal and can be defined using an arbitrary orthonormal basis for $\hilbertSpace$. 
Using the computational basis, it is defined by $\trace(\someMatrix) = \sum_{\ket{i} \in \compBasis{\hilbertSpace}} \bra{i} \someMatrix \ket{i}$.
The operator $\someMatrix$ is {\em Hermitian} if $\someMatrix = \someMatrix^\dagger$, where $\someMatrix^\dagger$ is the matrix obtained by transposing $\someMatrix$ and conjugating each complex entry.
A Hermitian operator $\someMatrix$ is {\em positive semi-definite} if for all vectors $\ket{\qubit}\in\hilbertSpace$, we have $\bra{\qubit} \someMatrix \ket{\qubit} \geq 0$.
Since $\bra{\qubit} \someMatrix \ket{\qubit} = \trace(\someMatrix (\ket{\qubit}\bra{\qubit}))$,
a Hermitian operator guarantees that traces are real numbers, whereas positive semi-definiteness ensures that they are nonnegative reals.
Given a (partial) density matrix $\densityMatrix$ acting on a Hilbert space $\hilbertSpace_0 \otimes \hilbertSpace_1$, the {\em partial trace} for subsystem 0 returns a (partial) density operator on $\hilbertSpace_0$ and it is defined by $\partialTrace_0(\densityMatrix) = \sum_{i} (\identityMatrix \otimes \bra{i}) \densityMatrix  (\identityMatrix \otimes \ket{i})$, where $\identityMatrix$ is the identity operator acting on $\hilbertSpace_0$ and $\ket{i} \in \compBasis{\hilbertSpace_1}$.

\paragraph{Mixed quantum states.}
Compared to closed quantum systems, open systems interact with an environment whose state is unknown.
We generally require a description in terms of {\em mixed quantum states} to capture uncertainty and the effects of quantum correlations with an environment.
Such mixed states are generalizations of pure quantum states and can be described via ensembles or density matrices.
An {\em ensemble} $\beta=\{\ket{\qubit_i}: \probVal_i  \}$ with $\sum_i \probVal_i=1$ is a distribution over pure quantum states with finite support,
where the support, denoted by $\supp(\beta)$, is the set of quantum states that have positive probability in~$\beta$.
If $\sum_i \probVal_i \leq 1$, then $\beta$ is called a {\em subensemble}.
A {\em density matrix} $\densityMatrix$ is a positive semi-definite operator acting on $\hilbertSpace$ with trace $\trace(\densityMatrix) = 1$.
If  $0\leq \trace(\densityMatrix) < 1$, then $\rho$ is called a {\em partial density matrix}.
The density matrix of the ensemble $\beta$ is defined by 
$\densityMatrix_\beta = \sum_i \probVal_i \cdot \ket{\qubit_i}\bra{\qubit_i}$.
Note that  
$\ket{\qubit_i}\bra{\qubit_i}$ is the density matrix of a pure quantum state in $\supp(\beta)$
and $\trace(\densityMatrix_\beta) = \sum_i \probVal_i$.

\subsection{Hybrid quantum-classical memory}
{Given a finite ordered set} $\allQuantumVariables$ of $\numQubits$ qubit variables,
the subsystem associated with a qubit variable $\qVar \in \allQuantumVariables$ with local Hilbert space of dimension 2 is denoted by $\hilbertSpace_{\qVar}$, and
the Hilbert space of the combined system is $\hilbertSpace_{\allQuantumVariables} = \bigotimes_{\qVar \in \allQuantumVariables}\hilbertSpace_\qVar$. 
The computational basis for $\hilbertSpace_{\allQuantumVariables}$ can be defined by considering assignments $b : \allQuantumVariables \rightarrow \{0,1\}$ that map every quantum variable to either 0 or 1. 
In particular, the computational basis vector corresponding to an assignment $b$ is the state $\ket{b} = \bigotimes_{\qVar \in \allQuantumVariables} \ket{b(\qVar)} \in \hilbertSpace_{\allQuantumVariables}$.
Furthermore, tensor factors can be permuted. Given a quantum state $\ket{\qubit}\in \hilbertSpace_{\allQuantumVariables}$ and a finite tuple $\overrightarrow{\qVar} = (\qVar_0, \cdots, \qVar_{m-1})$ of $m$ distinct qubit variables,
we denote by $\ket{\qubit}_{\overrightarrow{\qVar}}$ the state that results from applying a linear unitary isomorphism $\Pi_{\overrightarrow{\qVar}}: \hilbertSpace_{\allQuantumVariables} \rightarrow \hilbertSpace_{\overrightarrow{\qVar}} \otimes \hilbertSpace_{\allQuantumVariables \setminus{\overrightarrow{\qVar}}}$ that permutes Hilbert spaces, where $\hilbertSpace_{\overrightarrow{\qVar}} = \hilbertSpace_{\qVar_0}\otimes\cdots\otimes \hilbertSpace_{\qVar_{m-1}}$ is chosen by the tuple~$\overrightarrow{\qVar}$.
The isomorphism is defined by its action on the computational basis: $\Pi_{\overrightarrow{\qVar}} (\ket{b}) = \left(\bigotimes_{0 \leq i< m} \ket{b(\qVar_i)}\right) \otimes \bigotimes_{\qVar \in( \allQuantumVariables\setminus{\overrightarrow{\qVar}})} \ket{b(\qVar)}$.

In addition to quantum systems, 
we consider classical systems of $\classicalStateSize$ classical bits,
for a set $\allClassicalVariables$ of $\classicalStateSize$ classical variables. 
A {\em classical state} is a function from $\allClassicalVariables$ to $\{0,1\}$.
Let $\allClassicalStates_\allClassicalVariables$ be the set of all $2^\classicalStateSize$ classical states.
A {\em hybrid state} $(\ket{\qubit},\classicalState)$ consists of a quantum state $\ket{\qubit}\in\hilbertSpace_{\allQuantumVariables}$ and a classical state $\classicalState\in\allClassicalStates_\allClassicalVariables$. 
The {\em empty subensemble} that
assigns zero probability to all hybrid states is denoted by  $\emptyDistribution$.
We denote by
$\allHybridStates = \hilbertSpace_{\allQuantumVariables} \times \allClassicalStates_{\allClassicalVariables}$ the set of all hybrid states, 
the set of ensembles over hybrid states by 
$\allProbDistributions{\allHybridStates}$, and $\allensZ{\allHybridStates} = \allProbDistributions{\allHybridStates} \cup \{\emptyDistribution\}$ denotes the lifted set that includes the empty subensemble $\emptyDistribution$.
For ensembles $\distribution, \distribution'\in \allProbDistributions{\allHybridStates}$ over hybrid states, addition and multiplication with a scalar $ \realVal\in \mathbb{R}$ are element-wise:
$(\distribution + \distribution')(\hybridState) = \distribution(\hybridState) + \distribution'(\hybridState)$ and 
$(\realVal\cdot \distribution)(\hybridState) = \realVal\cdot \distribution(\hybridState)$
for all $\hybridState \in \allHybridStates$.
We write $\supp(\distribution) = \{\hybridState \in \allHybridStates \mid \distribution(\hybridState) > 0\}$ to denote the support of $\distribution$.

\subsection{Atomic instructions}
We have three types of atomic instructions: (i) classical instructions, (ii) unitary instructions, which operate on quantum states, and (iii) measurement instructions, which involve both quantum and classical states.
A {\em classical instruction} is a write operation defined as a pair $\langle \cVar, \bit\rangle$ consisting of a classical variable $\cVar \in \allClassicalVariables$ and a bit $\bit \in \{0,1\}$.
We denote the set of classical instructions by $\allInstructions^C$.
A {\em unitary instruction} is a pair $\langle U, \overrightarrow{\qVar} \rangle$ consisting of a unitary matrix $U$ and a tuple $\overrightarrow{\qVar}$ of distinct qubit variables, such that $U$ is an operator for $\hilbertSpace_{\overrightarrow{\qVar}}$.
Let $\allInstructions^U$ be a finite set of unitary instructions.
The quantum hardware we evaluate supports only single-qubit measurements in the computational basis.
Accordingly, a {\em measurement instruction} is a pair $\langle \qVar, \cVar \rangle$ consisting of 
a qubit variable $\qVar\in\allQuantumVariables$ and a classical variable $\cVar \in \allClassicalVariables$, 
which indicate the qubit we wish to measure and the classical bit that stores the result.
We write $\allInstructions^M$ for the set of measurement instructions.
The set of all {\em atomic instructions} is $\allInstructions=\allInstructions^C \cup \allInstructions^U \cup \allInstructions^M$.

\subsection{Quantum hardware specifications}
A {\em hardware specification} $\quantumHardware$ provides for every unitary and measurement instruction $\instruction\in\allInstructions^U\cup\allInstructions^M$, a quantum channel $\quantumHardware(\instruction)$ to model the noise when instruction $\instruction$ is executed on hardware~$\quantumHardware$.

{\em Quantum channels for unitary instructions.}
A {\em noise operator} is a pair $\langle \linearOp, \overrightarrow{\qVar} \rangle$ consisting of a tuple
$\overrightarrow{\qVar}$ of qubit variables and a linear operator $\linearOp$ of the corresponding size. 
Noise operators are in general not unitary. 
The quantum channel $\quantumHardware(\instruction)$ for a unitary instruction $\instruction=\langle U,\overrightarrow{\qVar}\rangle$ is a 
finite set of noise operators which is \textit{completely positive trace-preserving (CPTP)}~\cite{qc_bible}.
For simplicity, we assume that all noise operators in $\quantumHardware(\instruction)$ are applied to the same tuple $\overrightarrow{\qVar}$ of qubits (if this is not the case, we can always extend each operator to the same qubit tuple by tensoring with appropriate identity matrices). 
The set $\quantumHardware(\instruction)$ is CPTP iff $\sum_{\langle\linearOp, \overrightarrow{\qVar}\rangle\in \quantumHardware(\instruction)} \linearOp^\dagger \cdot \linearOp = \identityMatrix$.

{\em Quantum channels for measurement instructions.}
The quantum channel $\quantumHardware(\instruction)$ for a measurement instruction $\instruction=\langle\qVar, \cVar\rangle$ affects the classical outcome of the measurement.
As a result, the post-measurement quantum state and the classical outcome may not coincide.
Formally, the quantum channel $\channel{\instruction} = \quantumHardware(\instruction)$ specifies four probability values $\channel{\instruction}(i,j)$ for $0\le i,j\le 1$, where $\channel{\instruction}(i,j)$ is the probability that the outcome of a measurement is $i$ if the post-measurement quantum state is~$\ket{j}$.
Note that $\channel{\instruction}(0,0)+\channel{\instruction}(1,0)=1$ and $\channel{\instruction}(1,1)+\channel{\instruction}(0,1)=1$.

\subsection{Probabilistic instruction semantics}
\subsubsection{Instructions executed without noise} 
For all atomic instructions $\instruction\in\allInstructions$, the execution of $\instruction$ is defined by a probabilistic successor function 
$
    \sem{\instruction} : \allHybridStates \rightarrow \allensZ{\allHybridStates}
$ from hybrid states to ensembles of hybrid states, or to the empty ensemble $\emptyDistribution$.
For a classical instruction 
$\instruction=\langle \cVar,b\rangle$ and a
hybrid state $\hybridState = (\ket{\qubit}, \classicalState)$, 
let $\classicalState'$ be the classical state, such that for all 
$\cVar'\in\allClassicalVariables$, we have $\classicalState'(\cVar') = b$ if $\cVar'=\cVar$; otherwise $\classicalState'(\cVar') = \classicalState(\cVar')$.
Then $\sem{\instruction}(\hybridState) = \{ (\ket{\qubit}, \classicalState'): 1\}$. 

{\em Noise-free unitary instructions.} 
We define this case for general linear operators of the appropriate dimension. 
Given a pair $\instruction = \langle\someMatrix, \overrightarrow{\qVar}\rangle$ 
consisting of a linear operator $\someMatrix$ for $\hilbertSpace_{\overrightarrow{\qVar}}$ and a tuple $\overrightarrow{\qVar}$
of qubit variables, 
let $\someMatrix_{\overrightarrow{\qVar}} = \someMatrix \otimes \identityMatrix^{\otimes \mid\allQuantumVariables \setminus{\overrightarrow{\qVar}}\mid}$ 
be the operator that applies the matrix $\someMatrix$ to the qubits in $\overrightarrow{\qVar}$ while leaving the other qubits unchanged. 
Given a hybrid state $\hybridState=(\ket{\qubit}, \classicalState)$, 
then $\sem{\instruction}(\hybridState)=\emptyDistribution$ if $\bra{\qubit}_{\overrightarrow{\qVar}} \cdot \someMatrix^\dagger_{\overrightarrow{\qVar}} \cdot \someMatrix_{\overrightarrow{\qVar}} \cdot \ket{\qubit}_{\overrightarrow{\qVar}} = 0$; 
otherwise $\sem{\instruction}(\hybridState)$ returns an ensemble that assigns unit probability to the hybrid state  
$\hybridState' = \left(\frac{\Pi_{\overrightarrow{\qVar}}^\dagger \cdot\someMatrix_{\overrightarrow{\qVar}} \cdot \ket{\qubit}_{\overrightarrow{\qVar}}} {\sqrt{\bra{\qubit}_{\overrightarrow{\qVar}} \cdot \someMatrix_{\overrightarrow{\qVar}}^\dagger \cdot \someMatrix_{\overrightarrow{\qVar}}\cdot \ket{\qubit}_{\overrightarrow{\qVar}}}}, \classicalState\right)$.
We note that $\someMatrix_{\overrightarrow{\qVar}}^\dagger \cdot \someMatrix_{\overrightarrow{\qVar}}$ is positive semi-definite, guaranteeing that the inner product in the denominator is always a nonnegative real number. Also, when $\someMatrix$ is unitary, the denominator of the fraction is always~1.

{\em Noise-free measurement instructions.} 
The {\em projector} of a bit $\bit\in\{0,1\}$ is $\projector_\bit = (\ket{\bit}\bra{\bit})$.
Given a measurement instruction $\instruction=\langle \qVar, \cVar \rangle$
and a hybrid state $\hybridState = (\ket{\qubit}, \classicalState)$, 
let $\distribution_{\bit} = \bra{\qubit}_\qVar \cdot (\projector_\bit\otimes \identityMatrix^{\otimes (|\allQuantumVariables| - 1)}) \cdot \ket{\qubit}_\qVar\cdot \sem{\langle\projector_\bit, \qVar\rangle} (\hybridState)$ 
be the subensemble that results from applying the projector $\projector_\bit$ to a hybrid state~$\hybridState$.
Then the ensemble 
$\sem{\instruction}(\hybridState) = 
        \sum_{\bit \in \{0,1\}}
            \sum_{\hybridState' \in \allHybridStates} 
                    \distribution_{\bit}(\hybridState')\cdot \sem{\langle\cVar, \bit \rangle}(\hybridState')
$
results from applying both projectors and storing the outcome $\bit$ in the classical variable $\cVar$.
Since $\projector_0$ and $\projector_1$ are positive semi-definite and $\projector_0 + \projector_1 = \identityMatrix$, the result is a nonempty ensemble.

\subsubsection{Instructions executed on noisy quantum hardware}
For a hardware specification~$\quantumHardware$, the execution of an atomic instruction $\instruction\in\allInstructions$ on $\quantumHardware$ is defined by the probabilistic successor function 
$
    \sem{\instruction}_\quantumHardware : \allHybridStates\rightarrow \allProbDistributions{\allHybridStates}
$.
As classical instructions are unaffected by noise, 
$\sem{\instruction}_\quantumHardware = \sem{\instruction}$
for $\instruction \in \allInstructions^C$.

{\em Noisy unitary instructions.}
Consider a unitary instruction $\instruction= \langle U,\overrightarrow{\qVar} \rangle$,  
its quantum channel $\quantumHardware(\instruction)$,
and a hybrid state $\hybridState = (\ket{\qubit}, \classicalState)$. 
Then
$
\sem{\instruction}_\quantumHardware(\hybridState) = 
        \sum_{
            \langle \linearOp, \overrightarrow{\qVar}\rangle \in \quantumHardware(\instruction)
        } 
        \bra{\qubit}_{\overrightarrow{\qVar}}
        \cdot 
        U_{\overrightarrow{\qVar}}^{\dagger}
        \cdot
        \linearOp^{\dagger}_{\overrightarrow{\qVar}}
        \cdot
        \linearOp_{\overrightarrow{\qVar}}
        \cdot 
        U_{\overrightarrow{\qVar}}
        \cdot 
        \ket{\qubit}_{\overrightarrow{\qVar}} \cdot \sem{
            \langle\linearOp\cdot U, \overrightarrow{\qVar} \rangle
        } (\hybridState);
$
where $\bra{\qubit}_{\overrightarrow{\qVar}}
        \cdot 
        U_{\overrightarrow{\qVar}}^{\dagger}
        \cdot
        \linearOp^{\dagger}_{\overrightarrow{\qVar}}
        \cdot
        \linearOp_{\overrightarrow{\qVar}}
        \cdot 
        U_{\overrightarrow{\qVar}}
        \cdot 
        \ket{\qubit}_{\overrightarrow{\qVar}}$ is the probability that the noise operator $\linearOp$ is applied after  $U_{\overrightarrow{\qVar}}$ is applied to $\ket{\qubit}_{\overrightarrow{\qVar}}$.
By the properties of unitary matrices and CPTP channels, this is a nonempty ensemble. 

{\em Noisy measurement instructions.}
Consider a measurement instruction $\instruction=\langle \qVar, \cVar \rangle$ with quantum channel $\quantumHardware(\instruction)$, and a hybrid state~$\hybridState$.
We define for all $i,j \in \{0,1\}$ the subensembles $\distribution_{i, j} = \quantumHardware(\instruction)(i, j) \cdot \sum_{\hybridState' \in \allHybridStates} 
                    \distribution_{j}(\hybridState')\cdot \sem{\langle\cVar, i \rangle}(\hybridState')$,
where $\distribution_{j}$ is the (sub)ensemble that results from applying the projector $\projector_j$ (as defined in the noise-free case) to the hybrid state~$\hybridState$.
Then
$\sem{\instruction}_\quantumHardware(\hybridState) = \sum_{i} \sum_{j} \distribution_{i, j}.$

\section{Noise-aware Quantum Hoare Logic (\texorpdfstring{\lowercase{n}\textsc{QHL}}{nQHL})}
\label{sec:nqhl}
In this section, we present the logical foundations for noise-aware analysis of quantum programs. 
We first present a programming language for hybrid computations, followed by a logic for proving assertions that define sets of ensembles with respect to programs that are executed on noisy hardware, and conclude by proving its soundness.

\subsection{A hybrid quantum-classical programming language}
\subsubsection{Syntax} 
Our programming language is $\textsf{QIMDP}$~\cite{deng22} extended with a probabilistic statement:
\[
\begin{array}{rcl}
\text{Quantum variable }\quad \qVar \in \allQuantumVariables& & \\
\text{Classical variable} \quad \cVar \in \allClassicalVariables & & \\
\text{Classical assignment} \quad \textit{CA} & ::= &\cVar := 0 \mid \cVar := 1 \mid \cVar_0 := \cVar_1 \mid \cVar := \texttt{measure} (\qVar) \\
\text{Hybrid program }\quad \QProgram & ::= &  \skipInstruction \mid \textit{CA} \mid U(\overrightarrow{\qVar}) \mid \QProgram_0;\QProgram_1 \mid \\ 
                        && \texttt{if(} \cVar\texttt{) }\{\QProgram_0\}\texttt{ else }\{\QProgram_1\} \mid  \QProgram_0 \oplus_\probVal \QProgram_1 \mid 
 \texttt{while(} \cVar\texttt{) }\{\QProgram\},
\end{array}
\]
where the probabilistic statement $\QProgram_0 \oplus_\probVal \QProgram_1$ executes $\QProgram_0$ with probability $1-\probVal$, and $\QProgram_1$ with probability~$\probVal$, for $p\in[0,1]$.
We denote the set of all {\em loop-free} programs (without {\tt while}) by~$\allPrograms$.

\noindent {\em Notation.}
For a hybrid state $h=(\ket{\qubit},c)$ and a classical variable $\cVar \in\allClassicalVariables$,
let $h(x)=c(x)$.
For an ensemble 
$\distribution$ and 
a bit $\bit \in \{0,1\}$, we define 
$\distribution[\cVar=\bit] = \{h\in \supp(\distribution) \mid h(\cVar) = \bit\}$ as the set of hybrid states in which 
$\cVar$ takes the value $\bit$ and occurs with nonzero probability in $\distribution$.
For a set $\subsetHybridStates \subseteq \allHybridStates$ of hybrid states, let $\subsetProb(\distribution, \subsetHybridStates) = \sum_{\hybridState\in\subsetHybridStates}\distribution(\hybridState)$ be the total probability mass of~$\subsetHybridStates$ in~$\distribution$.
The normalization function $\normalize(\distribution, \subsetHybridStates)$ takes a subensemble $\distribution$ and a set $\subsetHybridStates$ of hybrid states, and returns the empty subensemble $\emptyDistribution$ if $\subsetProb(\distribution, \subsetHybridStates)=0$. Otherwise, $\normalize(\distribution, \subsetHybridStates)=\distribution'$ where $\distribution'(\hybridState) = \distribution(\hybridState)/\subsetProb(\distribution, \subsetHybridStates)$ if $\hybridState \in \subsetHybridStates$, and $\distribution'(\hybridState) = 0$ otherwise.
For conciseness, we write $\normalize(\distribution)$ if $\subsetHybridStates = \allHybridStates$.
Finally, $\distribution\downarrow^{\cVar}_{\bit} =  \normalize(\distribution,  \distribution[\classicalVar=\bit])$; whereas $\distribution\downarrow^{\overrightarrow{\qVar}}_{\someMatrix} = \texttt{normalize}\left(\sum_{\hybridState = (\ket{\qubit},\classicalState) \in \supp(\distribution)} \distribution(\hybridState)\cdot \trace(\someMatrix_{\overrightarrow{\qVar}}\cdot \ket{\qubit}_{\overrightarrow{\qVar}}\bra{\qubit}_{\overrightarrow{\qVar}}\cdot\someMatrix^\dagger_{\overrightarrow{\qVar}}) \cdot \sem{\langle\someMatrix, \overrightarrow{\qVar} \rangle}(\hybridState)\right)$ for a tuple of quantum variables $\overrightarrow{\qVar}$ and a linear operator $\someMatrix$ for $\hilbertSpace_{\overrightarrow{\qVar}}$.

\subsubsection{Semantics} 
The semantics of a hybrid program $P$ on a hardware specification $H$ is a transformer $\sem{\QProgram}_{\quantumHardware} : \allensZ{\allHybridStates} \rightarrow \allensZ{\allHybridStates}$ of subensembles over hybrid states, as defined in Figure~\ref{fig:semantics_dist_transformer}.
The transformer considers the zero subensemble (only) because the $\texttt{if ... else}$ and $\texttt{while}$ statements  normalizations may yield the zero subensemble.

\begin{figure}[ht!]
\tiny
    \centering
    \[
\begin{array}{l}
\sem{\skipInstruction}_{\quantumHardware}(\distribution) 
    = \distribution 
\quad\quad\quad
\sem{\cVar := \bit}_{\quantumHardware}(\distribution) 
    = \sum_{\hybridState \in \supp{\distribution}}~\distribution(\hybridState) \cdot \sem{\langle \cVar, \bit \rangle}(h), \forall \bit \in \{0,1\} 
\quad\quad\quad
\sem{\cVar_0 := \cVar_1}_{\quantumHardware}(\distribution) 
    = \sum_{h \in \supp(\distribution)}~\distribution(h) \cdot \sem{\langle \cVar_0, h(\cVar_1) \rangle}(h)
\\[8pt]

\sem{U(\overrightarrow{\qVar})}_{\quantumHardware}(\distribution) 
    = \sum_{\hybridState \in \supp{\distribution}} 
        \distribution(\hybridState)\cdot 
        \sem{\langle U, \overrightarrow{\qVar} \rangle}_{\quantumHardware}(\hybridState)
\quad\quad\quad\quad
\sem{\cVar := \texttt{measure}(\qVar)}_{\quantumHardware}(\distribution) 
    = \sum_{\hybridState \in \supp{\distribution}} 
        \distribution(\hybridState)\cdot 
        \sem{\langle \qVar, \cVar \rangle}_{\quantumHardware}(\hybridState)
\\[8pt]
\sem{\texttt{if(} \cVar \texttt{) }\{\QProgram_0\}\texttt{ else }\{\QProgram_1\}}_{\quantumHardware}(\distribution)
    = w_1\cdot\sem{\QProgram_0}_{\quantumHardware}(\distribution\downarrow^{\cVar}_{1})
      + w_0\cdot\sem{\QProgram_1}_{\quantumHardware}(\distribution\downarrow^{\cVar}_{0})
\quad\quad\quad\quad
\sem{\QProgram_0;\QProgram_1}_{\quantumHardware}(\distribution) 
    = \sem{\QProgram_1}_{\quantumHardware}
        \big(\sem{\QProgram_0}_{\quantumHardware}(\distribution)\big)
\\[8pt]
\sem{\QProgram_0\oplus_\probVal\QProgram_1}_{\quantumHardware}(\distribution)
    = (1-\probVal)\cdot\sem{\QProgram_0}_{\quantumHardware}(\distribution)
      + \probVal\cdot\sem{\QProgram_1}_{\quantumHardware}(\distribution)
\quad\quad     
\sem{\texttt{while(} \cVar \texttt{) }\{P\}}_{\quantumHardware}
    = {\rm lfp}~F, 
    \quad \text{where } 
    F(X)(\distribution) 
      = w_1\cdot X(\sem{\QProgram}_{\quantumHardware}(\distribution\downarrow^{\cVar}_{1})) 
        + w_0\cdot \distribution\downarrow^{\cVar}_{0}
\\
\end{array}
\]
    \caption{Noisy semantics of programs on hardware~$H$. For $\bit \in \{0,1\}$, let $w_\bit = \subsetProb(\distribution, \distribution[\classicalVar=\bit])$ be the combined probability of all hybrid states whose classical $\classicalVar$-value in $\distribution$ is~$\bit$.}
    \label{fig:semantics_dist_transformer}
    \Description{Semantics of our programming language with respect to a given hardware specification.}
\end{figure}

\subsection{Ensemble assertions}
\label{subsec:qhl}
\begin{figure}[t]
\scriptsize
    \centering
    \begin{tabular}{c c c c c}
        $\realConst \in \mathbb{Q}$ \;\; constant & &
         $\allRelOps \in \{=, >, <, \leq, \geq\}$ \quad\quad $\allOps \in \{+, -, *\}$ \;\; & &
        $\hilbertSpaceDim$ \;\; Hilbert space dimension \\
        \\
        $\classicalStateSize$ \;\; classical state dimension &&
        $\cVar \in \allClassicalVariables$ \;\; classical variable &&
        $\qVar \in \allQuantumVariables$ \;\; quantum variable \\
        \\
        $\ket{\qubit} \in \hilbertSpace_{\integerVal}$\;\; quantum state constant  && 
        $\matrixConst : \hilbertSpace^{\leq \hilbertSpaceDim} \rightarrow \hilbertSpace^{\leq \hilbertSpaceDim}$  \;\; \makecell{complex square matrix constant \\of dimension at most $\hilbertSpaceDim$} && 
        \makecell{$\boolConst \in \{0,1\}^{\leq \classicalStateSize}$ \;\; bitvector constant \\ of at most $\classicalStateSize$ bits} \\
    \end{tabular}
    \begin{tabular}{rl}
    \hline
    \\
    $\QTerm ::=$& $[\qVar_0\cdots\qVar_{\integerVal}]  \mid
    \ket{\qubit}
    \mid \matrixConst\cdot[\qVar_0\cdots\qVar_{\integerVal}] \mid \matrixConst*[\qVar_0\cdots\qVar_{\integerVal}]\mid \overline{\partialTrace}_{\overrightarrow{\qVar}}([\qVar_0\cdots\qVar_{\integerVal}]) \mid \overline{\partialTrace}_{\overrightarrow{\qVar}}(\matrixConst\cdot[\qVar_0\cdots\qVar_{\integerVal}])\mid$ \\
    & $\overline{\partialTrace}_{\overrightarrow{\qVar}}(\matrixConst*[\qVar_0\cdots\qVar_{\integerVal}])$\\
    $\boolTerm ::=$&$ [\cVar_0\cdots\cVar_{\integerVal}] \mid \boolConst$\\
    $\StatesAssertion ::=$&$ 
        \QTerm_0 = \QTerm_1 \mid 
        \boolTerm_0 = \boolTerm_1 \mid
        \neg \StatesAssertion \mid \StatesAssertion_0 \land \StatesAssertion_1 \mid 
        \StatesAssertion_0 \lor \StatesAssertion_1$\\
    $\RealDTerm ::=$ &$ 
        \probability(\StatesAssertion) \mid 
        \realConst \mid 
        \trace(\QTerm) \mid \RealDTerm_0~\allOps~\RealDTerm_1$
    \\
    $\DisAssertion ::= $&$ \RealDTerm_0~\allRelOps~\RealDTerm_1 \mid\neg \DisAssertion \mid\DisAssertion_0 \land \DisAssertion_1 \mid \DisAssertion_0\lor\DisAssertion_1 \mid \DisAssertion_0 \oplus_\RealDTerm \DisAssertion_1\mid
    $ \\
    &$\DisAssertion \downarrow^{\overrightarrow{\qVar}}_\matrixConst \mid 
    \DisAssertion \downarrow^{{\cVar}}_\bit \mid 
    \DisAssertion_0 \oplus \DisAssertion_1$
    \end{tabular}
    \Description{Syntax of terms and assertions.}
    \caption{Syntax of terms and assertions.}
    \label{fig:language_assertions}
\end{figure}

\begin{figure}[!t]
\renewcommand{\arraystretch}{1.5}
    \scriptsize
    \centering
    \begin{tabular}{c}
        $\overrightarrow{\qVar} = [\qVar_0,\ldots,\qVar_{\integerVal}]$ \quad\quad
        $\overrightarrow{\cVar} = [\cVar_0,\ldots,\cVar_{\integerVal}]$ \quad\quad
        $\hybridState = (\ket{\qubit}, \classicalState)$ in $\allHybridStates$\quad\quad
        $\distribution \in \allProbDistributions{\allHybridStates}$
    \end{tabular}
    \begin{tabular}{|c|ll|}
        \hline
    signature & \multicolumn{2}{c|}{semantics} \\
    \hline
    \multirow{4}{*}{$\sem{\QTerm}: \allHybridStates \rightarrow (\hilbertSpace^{\leq \hilbertSpaceDim} \rightarrow \hilbertSpace^{\leq \hilbertSpaceDim}) $}
     & $\sem{\overrightarrow{\qVar}}(\hybridState) = \partialTrace_{\overrightarrow{\qVar}}(\ket{\qubit}\bra{\qubit})$ & $\sem{\ket{\qubit}}(\hybridState) = \ket{\qubit}\bra{\qubit}$ \\
     & \multicolumn{2}{l|}{$\sem{\matrixConst \cdot \overrightarrow{\qVar}}(\hybridState) = 
            \begin{dcases*}
                (\matrixConst \cdot \sem{\overrightarrow{\qVar}}(\hybridState)\cdot \matrixConst^\dagger) / (\trace(\matrixConst\cdot \sem{\overrightarrow{\qVar}}(\hybridState)\cdot \matrixConst^\dagger))  & if $\trace(\matrixConst\cdot \sem{\overrightarrow{\qVar}}(\hybridState)\cdot\matrixConst^\dagger) > 0$\\
                0 & otherwise
            \end{dcases*}$
    }\\
    & $\sem{\matrixConst * \overrightarrow{\qVar}}(\hybridState) = 
            \matrixConst \cdot \sem{\overrightarrow{\qVar}}(\hybridState)\cdot \matrixConst^\dagger$ &
$\sem{\overline{\partialTrace}_{\overrightarrow{\qVar_0}}(\overrightarrow{\qVar_1})}(\hybridState) =      
            \partialTrace_{\overrightarrow{\qVar_1}\setminus\overrightarrow{\qVar_0}}(\sem{\overrightarrow{\qVar_1}}(\hybridState))$\\
            & $
        \sem{
            \overline{\partialTrace}_{\overrightarrow{\qVar_0}}(\matrixConst\cdot \overrightarrow{\qVar_1})}(\hybridState) = 
                \partialTrace_{\overrightarrow{\qVar_1}\setminus \overrightarrow{\qVar_0}}(\sem{M\cdot \overrightarrow{\qVar_1}}(\hybridState))$ &
            $\sem{\overline{\partialTrace}_{\overrightarrow{\qVar_0}}(\matrixConst* \overrightarrow{\qVar_1})}(\hybridState) = 
            \partialTrace_{\overrightarrow{\qVar_1} \setminus \overrightarrow{\qVar_0}}(\sem{M* \overrightarrow{\qVar_1}}(\hybridState))$\\
    \hline
    % bool term
    \multirow{1}{*}{$\sem{\boolTerm}: \allHybridStates \rightarrow \{0,1\}^{\leq\classicalStateSize}$}
     & $\sem{[\cVar_0,\cVar_1,\ldots,\cVar_d]}(\hybridState) = \classicalState({\overrightarrow{\cVar}})$ & $\sem{\boolConst}(\hybridState) = \boolConst$ \\
    \hline
    \multirow{5}{*}{$\sem{\StatesAssertion} \subseteq \allHybridStates$}
     & 
     \multicolumn{2}{l|}{$\sem{\QTerm_0 = \QTerm_1} = \{\hybridState\in\allHybridStates \mid \sem{\QTerm_0}(\hybridState) = \sem{\QTerm_1}(\hybridState)\}$}  \\
     & \multicolumn{2}{l|}{$\sem{\boolTerm_0 = \boolTerm_1} = \{\hybridState\in\allHybridStates \mid \sem{\boolTerm_0}(\hybridState) = \sem{\boolTerm_1}(\hybridState)\}$ }  \\
     &\multicolumn{2}{l|}{$\sem{\neg \StatesAssertion} = \allHybridStates \setminus \sem{\StatesAssertion}$}\\
     &\multicolumn{2}{l|}{$\sem{\StatesAssertion_0 \land \StatesAssertion_1} = \sem{\StatesAssertion_0} \cap \sem{\StatesAssertion_1}$} \\
     & \multicolumn{2}{l|}{$\sem{\StatesAssertion_0 \lor \StatesAssertion_1} = \sem{\StatesAssertion_0} \cup \sem{\StatesAssertion_1}$}\\
    \hline
    \multirow{3}{*}{$\sem{\RealDTerm} : \allProbDistributions{\allHybridStates} \rightarrow \mathbb{R}$} & 
    $\sem{\probability(\StatesAssertion)}(\distribution) = \sum_{\hybridState\in\sem{\StatesAssertion}}\distribution(\hybridState)$ & $\sem{\realConst}(\distribution) = \realConst$ \\
    & \multicolumn{2}{l|}{$\sem{\trace(\QTerm)}(\distribution) = \sum_{\hybridState \in \supp(\distribution)} \distribution(\hybridState) \cdot \trace(\sem{\QTerm}(\hybridState))$}\\
    & \multicolumn{2}{l|}{$\sem{\RealDTerm_0~\allOps~\RealDTerm_1}(\distribution) = \sem{\RealDTerm_0}(\distribution)~\allOps~\sem{\RealDTerm_1}(\distribution)$}\\
    \hline % ensemble assertions
    \multirow{6}{*}{ $\sem{\DisAssertion} \subseteq \allProbDistributions{\allHybridStates}$} & \multicolumn{2}{l|}{$\sem{\RealDTerm_0~\allRelOps~\RealDTerm_1}=\{\distribution\in\allProbDistributions{\allHybridStates} \mid \sem{\RealDTerm_0}(\distribution)~\allRelOps~\sem{\RealDTerm_1}(\distribution)\}$}\\
    & $\sem{\neg \DisAssertion} = \allProbDistributions{\allHybridStates} \setminus \sem{\DisAssertion}$ &
     $\sem{\DisAssertion_0 \land \DisAssertion_1} = \sem{\DisAssertion_0} \cap \sem{\DisAssertion_1}$  \\ &$\sem{\DisAssertion_0 \lor \DisAssertion_1} = \sem{\DisAssertion_0} \cup \sem{\DisAssertion_1}$
    &$\sem{\DisAssertion_0 \oplus \DisAssertion_1} = \bigcup_{\probVal \in [0,1]} \sem{\DisAssertion_0 \oplus_{\probVal} \DisAssertion_1}$\\
    & \multicolumn{2}{l|}{\makecell[l]{
        $
            \sem{\DisAssertion_0 \oplus_\RealDTerm \DisAssertion_1} = \{\distribution \in \allProbDistributions{\allHybridStates} \mid  \exists \distribution_0, \distribution_1 \in\allProbDistributions{\allHybridStates}.\;\sem{\RealDTerm}(\distribution) < 1 \Rightarrow \distribution_0 \in \sem{\DisAssertion_0},~
        $\\
        \qquad\qquad\qquad$\sem{\RealDTerm}(\distribution) > 0 \Rightarrow \distribution_1 \in \sem{\DisAssertion_1},\; \distribution =(1-\sem{\RealDTerm}(\distribution)) \cdot\distribution_0 + \sem{\RealDTerm}(\distribution)\cdot\distribution_1 \}$
    }}\\
    & $\sem{\DisAssertion\downarrow^{\overrightarrow{\qVar}}_{\matrixConst}} = 
            \{\distribution \in \allProbDistributions{\allHybridStates} \mid \distribution\downarrow^{\overrightarrow{\qVar}}_{\someMatrix}\in \sem{\DisAssertion}\}$
    & $\sem{\DisAssertion\downarrow^{\classicalVar}_{\bit}} =
            \{\distribution \in \allProbDistributions{\allHybridStates} \mid \distribution\downarrow^{\cVar}_{\bit}\in \sem{\DisAssertion}\}$ \\
    \hline
    \end{tabular}
    \Description{semantics of terms and assertions}
    \caption{Semantics of terms and assertions.}
    \label{fig:semantics_assertions}
\end{figure}

We define the syntax and semantics of assertions in Figures~\ref{fig:language_assertions} and \ref{fig:semantics_assertions}, respectively. 
Given a classical state $\classicalState \in \allClassicalStates_\allClassicalVariables$ and a tuple $\overrightarrow{\cVar}$ of $m$ classical variables, 
we write $\classicalState({\overrightarrow{\cVar}}) = [\classicalState(\cVar_0), \classicalState(\cVar_1), \ldots, \classicalState(\cVar_{\classicalStateSize-1})]$ for the bitvector of the $\overrightarrow{\cVar}$-values in~$c$.
A quantum term in $\QTerm$ expresses  a (partial) density operator.
Given a tuple $\overrightarrow{\qVar}$ of quantum variables, the quantum term $\partialTrace_{\overrightarrow{\qVar}}$ denotes the partial trace that returns a linear operator acting on $\hilbertSpace_{\overrightarrow{\qVar}}$.
In contrast, the subscript of quantum term $\overline{\partialTrace}_{\overrightarrow{\qVar}}$ specifies that $\overrightarrow{\qVar}$ should be removed.
The quantum terms $M\cdot\overrightarrow{\qVar}$ and $M*\overrightarrow{\qVar}$ return the (partial) density operator that results from applying $M$ to the reduced density operator over $\overrightarrow{\qVar}$. 
The difference is that $M\cdot\overrightarrow{\qVar}$ normalizes the result, whereas $M*\overrightarrow{\qVar}$ preserves the resulting trace.
The classical state of a hybrid state is instead described using a classical term in $\boolTerm$.
A {\em state assertion} in $\StatesAssertion$ specifies a set of hybrid states using quantum and classical terms.

An ensemble assertion defines sets of ensembles using probability terms in $\RealDTerm$.
The probability term $\trace(M*\overrightarrow{\qVar})$ is useful to reason about the expectation value of measurement outcomes (e.g., with $M=\ket{0}\bra{0}$), or to reason about the probability that a noise operator has an effect on the ensemble.
For a tuple $\overrightarrow{\qVar}$ of quantum variables, a linear operator $\someMatrix$ acting on  $\hilbertSpace_{\overrightarrow{\qVar}}$, and an ensemble assertion~$\precondition$, the term $\precondition\downarrow^{\overrightarrow{\qVar}}_{\someMatrix}$ is satisfied by the ensembles for which, after applying $\someMatrix$, the remaining normalized subensemble satisfies $\precondition$.
Intuitively, $\someMatrix$ selects a specific branch of state evolution, such as a measurement projector (with $\someMatrix = \ket{0}\bra{0}$ or $\someMatrix=\ket{1}\bra{1}$) or a noise operator, thus enabling reasoning about individual branches of a quantum process.
Similarly, $\precondition\downarrow^{\classicalVar}_{\bit}$  denotes conditioning on a classical branch, where $\classicalVar = \bit$.
These operations allow distinguishing different ensemble decompositions of the same density matrix, and therefore  
they are not expressible under density-matrix semantics.

\subsection{Hoare proof system}
We introduce a proof system for the partial correctness of the noisy execution of quantum programs.
Given an ensemble $\distribution \in \allProbDistributions{\allHybridStates}$ and an ensemble assertion $\precondition \in \DisAssertion$, we write $\distribution \vDash \precondition$ for $\distribution \in \sem{\precondition}$.
A Hoare triple $\{\precondition\}~\QProgram^\quantumHardware~\{\postcondition\}$, 
consisting of two ensemble assertions $\phi$ (the {\em precondition}) and $\psi$ (the {\em postcondition}), a hybrid program $P$, and a hardware specification $H$, 
is {\em valid} iff for all ensembles $\distribution,\distribution' \in \allProbDistributions{\allHybridStates}$, 
if $\distribution \vDash \precondition$ and $\distribution'=\sem{\QProgram}_\quantumHardware(\distribution)$,
then $\distribution' \vDash \postcondition$.

{\em Notation.} The substitution $\precondition[\overrightarrow{\qVar} \rightarrow U]$ 
of quantum variables works as follows:
we replace every occurrence of quantum variables from the tuple $\overrightarrow{\qVar}$ by $U\cdot\overrightarrow{\qVar}$, while applying the identity matrix to the remaining variables.
E.g., if $U = \notGate$ and $\overrightarrow{\qVar} = [\qVar_0]$ is a \texttt{not} gate and $\postcondition := \probability([\qVar_0,\qVar_1] = \ket{00}) = 1$, then $\postcondition[\qVar_0 \rightarrow \notGate]:=\probability((\notGate\otimes \identityMatrix) \cdot [\qVar_0,\qVar_1] = \ket{00}) = 1$. 
If a quantum term has the form $M\cdot \overrightarrow{\qVar}$ or $M* \overrightarrow{\qVar}$, we append the gate to the left, as in $(\notGate\cdot M)* \overrightarrow{\qVar}$.
If only some of the variables occur in the tuple, we add it to the tuple and then take the partial trace: e.g., for the instruction $\CXGate([\qVar_1, \qVar_2])$ we get $\overline{\partialTrace}_{[\qVar_2]}((\identityMatrix\otimes\CXGate)\cdot[\qVar_0,\qVar_1,\qVar_2])$. If the term is already such partial trace, we apply the previous rules to the inner term and update the list of qubits to remove.

\begin{figure}[t]
    \centering
    \scriptsize
    \[
    \begin{array}{l}
        \inferrule*[right=Skip]{ }{\{\precondition\}\;\texttt{skip}\;\{\precondition\}}
    \end{array}
    \begin{array}{l}
     \inferrule*[right=CAssign]{ }{\forall \bit \in \{0,1\}.\;\ \{\precondition[\cVar \rightarrow \bit]\}\; \cVar := \bit\; \{\precondition\}}
    \end{array} 
    \begin{array}{l}
     \inferrule*[right=CVarAssign]{ }{\{\precondition[\cVar_1 \rightarrow \cVar_2]\}\; \cVar_1 := \cVar_2\; \{\precondition\}}
    \end{array} 
    \]
    \[
    \begin{array}{l}
     \inferrule*[right=Meas]{\{\precondition_0\} \cVar := 0 \{\postcondition_0 \} \quad\quad \{\precondition_1\} \cVar := 1 \{\postcondition_1 \}}{
     \{
        \precondition_0 \downarrow_{M_0}^{\qVar} \land
        \precondition_1 \downarrow_{M_1}^\qVar
     \}\;
          x := \text{measure}(\qVar)\;
          \{\postcondition_0 \oplus_{Pr(x=1)} \postcondition_1 \}}
    \end{array}
    \begin{array}{l}
    \inferrule*[right=Unitary]{ }{\{\precondition[\overrightarrow{\qVar}\rightarrow U]\}\; U(\overrightarrow{q})\; \{\precondition\}}
    \end{array} 
    \]
    \[\]
    \[
    \begin{array}{l}
         \inferrule*[right=Cond]{\{\precondition_1\} ~ \QProgram_1 ~\{\postcondition_1\} \quad \{\precondition_2\} ~ \QProgram_2 \{\postcondition_2\}}
        {\{\precondition_1\downarrow_{1}^{\cVar} \land \precondition_2\downarrow_{0}^{\cVar}\}\;\texttt{if(} \cVar\texttt{) }\{\QProgram_1\} \texttt{ else } \{\QProgram_2\}\; \{\postcondition_1 \oplus \postcondition_2\}}
    \end{array}
    \begin{array}{l}
        \inferrule*[right=Seq]{\{\precondition_1\}~\QProgram_1~\{\precondition_2\} \quad \{\precondition_2\}~\QProgram_2~\{\precondition_3\}}{\{\precondition_1\}\; \QProgram_1;\QProgram_2 \;\{\precondition_3\}}
    \end{array}
    \]
    \[\]
    \[
    \begin{array}{l}
     \inferrule*[right=CProb]{ \{\precondition\}~\QProgram_1~\{\postcondition_1\} \quad \{\precondition\}~\QProgram_2~\{\postcondition_2\} }{\{\precondition\}~\QProgram_1~\oplus_p~\QProgram_2~\{\postcondition_1 \oplus_p \postcondition_2\}}
    \end{array}
    \begin{array}{l}
      \inferrule*[right=While]{\{\precondition\} ~ \QProgram ~\{\precondition\} \quad \forall \probVal \in [0,1]. \precondition \oplus_\probVal \precondition \implies \precondition}
    {\{\precondition\downarrow_1^{\cVar} \land \precondition\downarrow_0^{\cVar}\} ~ \texttt{while(} \cVar\texttt{) }\{P\} ~ \{\precondition\}}   
    \end{array}
    \]
    \caption{Hardware-independent proof rules. We write $\cVar\rightarrow t$ for the substitution of a classical variable~$x$, 
    where every occurrence of $x$ is replaced by~$t$, 
    which is a bit or another classical variable. 
    By $\overrightarrow{\qVar}\rightarrow U$
    we denote the substitution of a tuple $\overrightarrow{\qVar}$ of quantum variables,
as explained in the text.
    Let $M_0 = \ket{0}\bra{0}$ and $M_1 = \ket{1}\bra{1}$.}
    \label{fig:proof_rules_no_hardware}
    \Description{Noiseless proof system.}
\end{figure}

\begin{figure}[t]
    \centering
    \scriptsize
    \[
      \begin{array}{l}
        \inferrule*[right=HM00]{ % premises
            \{\precondition\}\; \cVar := 0 \{\postcondition\}
        }{ 
            \{\precondition\downarrow^\qVar_{\projector_0}\}\;
                    x := \text{measure}(\qVar)^\quantumHardware\;
            \{ 
                \postcondition        
                    \oplus_{ 1- \quantumHardware(\langle\qVar, \cVar\rangle)(0, 0)\cdot \trace(\projector_0 * [\qVar])} \top
            \}
        }
      \end{array} 
      \begin{array}{l}
        \inferrule*[right=HM10]{ % premises
            \{\precondition\}\; \cVar := 1 \{\postcondition\}
        }{ 
            \{\precondition\downarrow^\qVar_{\projector_0}\}\;
                    x := \text{measure}(\qVar)^\quantumHardware\;
            \{ 
                \postcondition        
                    \oplus_{ 1- \quantumHardware(\langle\qVar, \cVar\rangle)(1, 0)\cdot \trace(\projector_0 * [\qVar])} \top
            \}
        }
    \end{array}
    \]
    \[
      \begin{array}{l}
        \inferrule*[right=HM11]{ % premises
            \{\precondition\}\; \cVar := 1 \{\postcondition\}
        }{ 
            \{\precondition\downarrow^\qVar_{\projector_1}\}\;
                    x := \text{measure}(\qVar)^\quantumHardware\;
            \{ 
                \postcondition        
                    \oplus_{ 1- \quantumHardware(\langle\qVar, \cVar\rangle)(1, 1)\cdot \trace(\projector_1 * [\qVar])} \top
            \}
        }
      \end{array} 
      \begin{array}{l}
        \inferrule*[right=HM01]{ % premises 
            \{\precondition\}\; \cVar := 0 \{\postcondition\}
        }{ 
            \{\precondition\downarrow^\qVar_{\projector_1}\}\;
                    x := \text{measure}(\qVar)^\quantumHardware\;
            \{ 
                \postcondition        
                    \oplus_{ 1- \quantumHardware(\langle\qVar, \cVar\rangle)(0, 1)\cdot \trace(\projector_1 * [\qVar])} \top
            \}
        }
    \end{array}
    \]
    \[\]
    \[
    \begin{array}{l}
    \inferrule*[right=HUnitary]{
        \langle \linearOp, \overrightarrow{\qVar}\rangle \in \quantumHardware(\langle U,\overrightarrow{\qVar} \rangle) \quad\quad \probVal \in [0,1]
    }{
        \{
        \precondition\downarrow^{\overrightarrow{\qVar}}_{\linearOp \cdot U}
        \land \trace((\linearOp \cdot U) * [\overrightarrow{\qVar}]) = \probVal
        \}
            \; U(\overrightarrow{\qVar})^{\quantumHardware}\; 
        \{\precondition\oplus_{1-\probVal} \top \}
    }
    \end{array} 
    \]
    \caption{Proof rules parameterized by a hardware specification $\quantumHardware$. We write $\QProgram^{\quantumHardware}$ to indicate that program $\QProgram$ is executed on hardware~$\quantumHardware$.
     The ensemble assertion $\top$ is satisfied by all ensembles (e.g., 1=1).}
    \Description{Noisy proof system.}
    \label{fig:proof_rules_with_hardware}
\end{figure}

Figure~\ref{fig:proof_rules_no_hardware} shows the noise-free proof rules for Hoare triples 
of the form $\{\precondition\}~\QProgram~\{\postcondition\}$,
which do not depend on the underlying hardware
(or, equivalently, refer to the idealized, noise-free hardware).
We obtain the proof system for hardware-dependent Hoare triples  
$\{\precondition\}~\QProgram^\quantumHardware~\{\postcondition\}$ 
by replacing the two noise-free rules {\sc Meas} and {\sc Unitary} of Figure~\ref{fig:proof_rules_no_hardware} with the five rules shown in Figure~\ref{fig:proof_rules_with_hardware} for noisy measurement and noisy unitary instructions.
In addition, the superscript $H$ needs to be added to all programs in Figure~\ref{fig:proof_rules_no_hardware}.

{The noise-free rules $\textsc{Meas}$ and $\textsc{Cond}$ have similar structures.
In $\textsc{Meas}$, the precondition states that $\precondition_0$ holds if we measure 0 on $\qVar$, but right before 0 is written into classical memory; and similarly for $\precondition_1$ and measurement outcome 1. 
Thus, the postcondition specifies that $\postcondition_0$ holds on the hybrid states for which the measurement outcome is 0, and $\postcondition_1$ on the states with outcome 1.
Since there is no noise, the postcondition uses the probability of the state of the classical variable to determine the probability of each measurement branch.
The $\textsc{Cond}$ rule also reasons by branches, but in terms of the value of a classical variable $\cVar$. 
The other noise-free rules are similar to previous Hoare logics.}

{The noise-aware rules can be understood similarly. 
For measurements, however, the postcondition can no longer rely only on the classical state for each measurement branch. 
Hence we have two rules ($\textsc{HM00}$ and $\textsc{HM11}$) for the scenarios where the correct measurement outcome is stored at the classical variable, and 
two rules ($\textsc{HM10}$ and $\textsc{HM01}$) for the scenarios where the incorrect outcome is stored.
The rule $\textsc{HUnitary}$ for noisy unitary instructions states that for any noise operator $L$, if $\precondition$ holds after applying $U$ followed by $L$, and the probability that the noise operator applies is $\probVal$, then every successor ensemble can be expressed as a convex combination of ensembles, where one ensemble term has the coefficient $\probVal$, and the ensemble satisfies $\precondition$.}
We show this proof system to be sound;
relative completeness is not studied.

\begin{theorem}[Soundness]
For all hardware specifications $\quantumHardware$, hybrid programs $P$, and ensemble assertions $\phi$ and $\psi$, 
if the hardware-dependent 
Hoare triple $\{\precondition\}~\QProgram^{\quantumHardware}~\{\postcondition\}$ 
is derivable, 
then it is valid. 
\label{thm:proof-system}
\end{theorem}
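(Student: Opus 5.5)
We argue by induction on the height of the derivation of $\{\precondition\}~\QProgram^{\quantumHardware}~\{\postcondition\}$, checking for each rule that validity of the premises yields validity of the conclusion. Throughout, we fix an ensemble $\distribution \vDash$ precondition and set $\distribution' = \sem{\QProgram}_\quantumHardware(\distribution)$. Two auxiliary facts will be proved first. The first is a \emph{substitution lemma}: for every assertion $\precondition$ and every $\distribution$, we have $\distribution \vDash \precondition[\cVar\rightarrow \bit]$ iff $\sem{\cVar:=\bit}(\distribution)\vDash\precondition$, and analogously for $\cVar_1\rightarrow\cVar_2$ and $\overrightarrow{\qVar}\rightarrow U$. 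This is proved by structural induction on terms, then $\StatesAssertion$, $\RealDTerm$, and $\DisAssertion$. The key point is that for unitary $U$ we have $\sem{U\cdot\overrightarrow{\qVar}}(\hybridState) = \sem{\overrightarrow{\qVar}}(\sem{\langle U,\overrightarrow{\qVar}\rangle}(\hybridState))$, because $U$ is trace preserving, so normalization is harmless and partial traces commute with the identity-extension. The second fact is \emph{linearity}: each $\sem{\QProgram}_\quantumHardware$ for loop-free $\QProgram$ is linear on subensembles, and the conditioning operations $\distribution\downarrow^{\cVar}_\bit$ and $\distribution\downarrow^{\overrightarrow{\qVar}}_{M}$ commute appropriately with atomic instructions.

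The structural rules are handled as follows. Skip is immediate, and Seq follows by composing the two premises. The rules CAssign, CVarAssign, and (noise-free) Unitary follow directly from the substitution lemma. For CProb, both $\sem{\QProgram_1}(\distribution)\vDash\postcondition_1$ and $\sem{\QProgram_2}(\distribution)\vDash\postcondition_2$ hold, and the semantics gives $\distribution'$ as the $(1-\probVal,\probVal)$-mixture of these, which is exactly the witness required by $\oplus_\probVal$ (here $\sem{\probVal}$ is constant). For Cond, the precondition yields $\distribution\downarrow^{\cVar}_1\vDash\precondition_1$ and $\distribution\downarrow^{\cVar}_0\vDash\precondition_2$. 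The semantics gives $\distribution' = w_1\sem{\QProgram_1}(\cdot)+w_0\sem{\QProgram_2}(\cdot)$, which lies in $\sem{\postcondition_1\oplus_{w_0}\postcondition_2}$ up to the ordering convention of $\oplus$, and hence in $\sem{\postcondition_1\oplus\postcondition_2}$. The degenerate cases $w_b=0$ are covered by the implications in the definition of $\oplus_\RealDTerm$. For Meas, we decompose $\sem{\cVar:=\texttt{measure}(\qVar)}(\distribution)$ as the mixture of $\sem{\cVar:=0}(\distribution\downarrow^{\qVar}_{M_0})$ and $\sem{\cVar:=1}(\distribution\downarrow^{\qVar}_{M_1})$ with weights $\trace(M_b*[\qVar])$. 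Since without noise the weight of branch $1$ equals $\probability(x=1)$ evaluated on $\distribution'$, this matches the postcondition $\postcondition_0\oplus_{\probability(x=1)}\postcondition_1$.

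The While rule requires separate treatment. We show by induction on $n$ that the $n$-th Kleene iterate of $F$ maps $\distribution$ to a mixture of ensembles satisfying $\precondition$. The closure hypothesis $\precondition\oplus_\probVal\precondition\implies\precondition$ is used at each step, via the invariant premise applied to $\distribution\downarrow^{\cVar}_1$. For the limit, we restrict the claim to terminating executions (partial correctness), where the lfp is reached as a finite mixture or is the limit normalized by termination probability. We flag this as needing a closedness or finite-support argument; if it cannot be established in general, we would prove soundness for loop-free programs plus the loop case under the assumption that the lfp is attained at a finite stage, which is the only case used later in the paper.

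For the hardware rules, which we expect to be the main obstacle, HUnitary requires showing that $\sem{U(\overrightarrow{\qVar})}_\quantumHardware(\distribution)$ decomposes as $\probVal\cdot\gamma_L + (1-\probVal)\cdot\gamma_{\text{rest}}$. Here $\gamma_L = \distribution\downarrow^{\overrightarrow{\qVar}}_{L\cdot U}$ is the normalized branch in which Kraus operator $L$ fires, and $\probVal = \sum_\hybridState\distribution(\hybridState)\bra{\qubit}U^\dagger L^\dagger L U\ket{\qubit}$, which is exactly $\trace((L\cdot U)*[\overrightarrow{\qVar}])$. The remaining normalized branches form an ensemble $\gamma_{\text{rest}}$ satisfying $\top$. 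We must check that the ordering convention of $\oplus_{1-\probVal}$ places weight $\probVal$ on $\precondition$, and handle $\probVal\in\{0,1\}$ via the implications in the definition. Similarly, for HM$ij$ the measured ensemble is $\sum_{i,j}\distribution_{i,j}$. The summand $\distribution_{i,j}$ has mass $\quantumHardware(\langle\qVar,\cVar\rangle)(i,j)\cdot\trace(M_j*[\qVar])$, and after normalization it equals $\sem{\cVar:=i}(\distribution\downarrow^{\qVar}_{M_j})$, which satisfies $\postcondition$ by the premise. The delicate points are that $\trace(M_j*[\qVar])$ in the postcondition is evaluated on the post-state (it is unchanged, since the post-measurement state is projected and classical writes do not alter the quantum part), and the bookkeeping of the normalizations.
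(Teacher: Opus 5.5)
Your overall plan (induction on the derivation, checking each rule against validity) is the paper's. Your treatment of the hardware rules matches it: you isolate the branch of mass $H(\langle q,x\rangle)(i,j)\cdot\trace(M_j*[q])$, resp.\ $p=\trace((L\cdot U)*[\overrightarrow{q}])$, normalize it to $\sem{x:=i}(\beta\downarrow^{q}_{M_j})$, resp.\ $\beta\downarrow^{\overrightarrow{q}}_{L\cdot U}$, and observe that $\trace(M_j*[q])$ is unchanged in the post-state. The loop-free structural cases are fine too. The genuine gap is in \textsc{While}, and it is not the limit issue you flag.

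Your induction on Kleene iterates can invoke $\{\phi\}P\{\phi\}$ only on ensembles that satisfy $\phi$. In the first round, $\beta\downarrow^{x}_{1}\vDash\phi$ comes from the loop precondition. In the second round, however, the ensemble fed to $P$ is $\sem{P}(\beta\downarrow^{x}_{1})\downarrow^{x}_{1}$ and the part leaving the loop is $\sem{P}(\beta\downarrow^{x}_{1})\downarrow^{x}_{0}$. The premise only says that their mixture $\sem{P}(\beta\downarrow^{x}_{1})$ satisfies $\phi$. Invariants are not inherited by the guard-conditioned parts, so ``every iterate is a mixture of $\phi$-ensembles'' is false. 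The paper's appendix makes the same unjustified step when it asserts $\beta'\vDash\phi\oplus_{w_0}\phi$ for all $\beta'\in Y^n(\sem{\phi})$, so you will not find the fix there.

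Here is a concrete counterexample. Take classical $x,y,w$ (quantum part fixed), $\phi:=\mathbb{P}([y]=0)=0.5$, and $P:=\big((y:=0;\,x:=0)\oplus_{0.5}(y:=1;\,x:=w)\big);\,w:=0$.
\begin{itemize}
\item Every output of $P$ has $\mathbb{P}([y]=0)=0.5$, so $\{\phi\}P\{\phi\}$ is valid, and $\phi$ is closed under $\oplus_p$.
\item Let $\beta$ be uniform over $x,y\in\{0,1\}$ with $w=1$. Then $\beta\downarrow^{x}_{1}\vDash\phi$ and $\beta\downarrow^{x}_{0}\vDash\phi$.
\item One iteration on $\beta\downarrow^{x}_{1}$ yields $(x,y)=(0,0)$ and $(1,1)$ with probability $0.5$ each, and $w=0$.
\item The exiting part has $\mathbb{P}([y]=0)=1$ and the continuing part has $\mathbb{P}([y]=0)=0$. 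Since $w=0$, the second iteration exits with $y$ uniform.
\item The final ensemble has $\mathbb{P}([y]=0)=0.5\cdot 0.5+0.5\cdot 0.75=0.625$.
\end{itemize}
So the conclusion of \textsc{While} is invalid even though every run stops within two iterations. Your fallback (lfp reached at a finite stage) therefore does not rescue the case.

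The premise is even derivable if one allows the consequence steps that the paper's own derivation in Figure~\ref{deriv:local_bell_state_prep} uses: apply \textsc{CProb} with branch postconditions $\mathbb{P}([y]=0)=1$ and $\mathbb{P}([y]=1)=1$, then weaken to $\phi$. Hence no argument local to this rule can work. Either the rule must change (for example, require an invariant of the $x{=}1$ part after each iteration, together with a convex, limit-closed assertion for the $x{=}0$ part), or the claim must be restricted, e.g.\ to loop-free programs.
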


\begin{proof}
Soundness of the noise-free rules from Figure~\ref{fig:proof_rules_no_hardware} is shown in Appendix Section~\ref{appendix:noiseless-proof-system}.
        Noisy measurement instructions are defined in terms of sums of four subensembles, each with its own proof rule.
        We consider only $\textsc{HM00}$; the other three rules are proved similarly.
        Suppose $\distribution \vDash \precondition\downarrow^\qVar_{\projector_0}$ and let $\distribution' = \sem{\cVar := \texttt{measure}(\qVar)}_\quantumHardware(\distribution)$.
        The subensemble $\distribution_{00} = H(\langle\qVar, \cVar\rangle)(0,0) \cdot \sum_{\hybridState = (\ket{\qubit}, \classicalState) \in \allHybridStates} \distribution(\hybridState) \cdot (\bra{\qubit}\cdot \projector_0 \cdot \ket{\qubit}) \cdot \sem{\langle\projector_0, \qVar\rangle}(\hybridState)$ is derived by the measurement instruction before applying the write instruction that stores the outcome.
        Since $\sem{\langle\projector_0, \qVar\rangle}$ returns either an empty or a unit ensemble, it follows that $n_0 = H(\langle\qVar, \cVar\rangle)(0,0) \cdot \sum_{\hybridState = (\ket{\qubit}, \classicalState)} \distribution(\hybridState ) \cdot (\bra{\qubit}\cdot \projector_0 \cdot \ket{\qubit})$ is the normalization factor, and this subensemble can be rewritten as $\distribution_{00} = n_0 \cdot \normalize(\distribution_{00})$. 
        Additionally, by our precondition, we have $\normalize(\distribution_{00}) \vDash\precondition$. 
        As measurement errors cause only flips of classical outcomes, it follows that $\sum_{\hybridState = (\ket{\qubit}, \classicalState)} \distribution(\hybridState) \cdot (\bra{\qubit}\cdot \projector_0 \cdot \ket{\qubit}) = \sum_{\hybridState = (\ket{\qubit}, \classicalState)} \distribution'(\hybridState)\cdot \bra{\qubit}\cdot\projector_0 \cdot \ket{\qubit}$  because the probability of outcome $\ket{0}$ is the same in the successor ensemble.
        Since writes preserve probability masses and $\sem{\langle\cVar, 0\rangle}(\normalize(\distribution_{00})) \vDash \postcondition$, it follows that $\distribution'\vDash\postcondition        
                    \oplus_{ 1- \quantumHardware(\langle\qVar, \cVar\rangle)(0, 0)\cdot \trace(\projector_0 * [\qVar])} \top$.

       For unitary instructions,
       suppose we are given a quantum hardware channel $\quantumHardware(U(\overrightarrow{\qVar}))$ that contains a linear operator $\linearOp$.  
        Additionally, suppose $\distribution \vDash \precondition\downarrow^{\overrightarrow{\qVar}}_{\linearOp \cdot U} \land \trace((L\cdot U) *[\overrightarrow{\qVar}]) = \probVal$, for some $\probVal \in [0,1]$, 
        and let $\distribution' = \sem{U(\overrightarrow{\qVar})}_\quantumHardware(\distribution)$.
        Recall the semantics of noisy unitary instructions, where every linear operator $L$ derives a subensemble
        $\distribution_L = \sum_{\hybridState=(\ket{\qubit}, \classicalState)} \trace(\linearOp \cdot U \cdot \ket{\qubit} \bra{\qubit} \cdot U^\dagger\cdot L^\dagger ) \cdot \distribution(\hybridState) \cdot \sem{\langle\linearOp\cdot U, \overrightarrow{\qVar}\rangle}(\hybridState)$.
        Similar to the proof for measurements, $\sem{\langle\linearOp\cdot U, \overrightarrow{\qVar}\rangle}(\hybridState)$ returns either an empty ensemble or a (unit) ensemble, thus $\distribution_L = \sum_{\hybridState = (\ket{\qubit}, \classicalState)} \trace(\linearOp \cdot U \cdot \ket{\qubit} \bra{\qubit} U^\dagger \cdot L^\dagger ) \cdot \distribution(\hybridState) \cdot \normalize(\distribution_L)$ because $\sum_{\hybridState = (\ket{\qubit}, \classicalState)} \trace(\linearOp \cdot U \cdot \ket{\qubit} \bra{\qubit} \cdot U^\dagger \cdot L^\dagger )\cdot \distribution(\hybridState) = \probVal$ is its normalization factor.
        Since we have from our precondition that $\normalize(\distribution_L)\vDash \precondition$, then $\distribution' \vDash \precondition \oplus_{1-\probVal} \top$.
\end{proof}

\begin{figure}[t]
\tiny
    \[
    \inferrule*[right=MEAS]{
        \inferrule*[right=CAssign] {
        \precondition_4 = \postcondition[[\qVar_1,\qVar_2] \rightarrow \CXGate]
        }{
            \{\precondition_4 \downarrow^\cVar_0\}\;\cVar := 0\; \{\precondition_1 \downarrow^\cVar_0\}
        }
        \inferrule*[right=CAssign] {
        \precondition_3 =  \probability([\qVar_1,\qVar_2] = \ket{00} \lor [\qVar_1,\qVar_2] = \ket{10} ) = 1
        }{
            \{\precondition_3\downarrow^\cVar_1\}\;\cVar := 1\; \{\precondition_2\downarrow^\cVar_1\}
        }
    }
    {
        \{\precondition_4 \downarrow^\cVar_0\downarrow^{\qVar_0}_{\projector_0} \land \precondition_3\downarrow^\cVar_1\downarrow^{\qVar_0}_{\projector_1} \}\; \cVar := \texttt{measure}(\qVar_0)\{ \precondition_1 \downarrow^\cVar_0 \oplus_{\probability(x=1)}\precondition_2\downarrow^\cVar_1\}
    }
    \]
    \[
    \inferrule*[right=HUnitary]{
        \langle \sqrt{0.9}\cdot\identityMatrix, \hadamardGate\rangle \in\quantumHardware(\hadamardGate) \quad
        \quad \precondition_2 = \probability([\qVar_1,\qVar_2] = \ket{00} \lor [\qVar_1,\qVar_2] = \ket{10} ) = 1 \land \probability(x=1) = 1
    }{
        \{\precondition_2 \land \trace(\sqrt{0.9}\cdot[\qVar_1]) = 0.9 \}\; 
        \hadamardGate([\qVar_1])\;
        \{\probability([\qVar_1,\qVar_2] = \ket{{+}0} \lor [\qVar_1,\qVar_2] = \ket{{-}0} ) = 1 \oplus_{0.1} \top\}
    }
    \]
    \[
    \inferrule*[right=Cond]{
        \inferrule*[right=skip]{
            \precondition_1 = \postcondition[[\qVar_1,\qVar_2] \rightarrow \CXGate] \land \probability([\cVar] = 0)=1
        }{
        \{\precondition_1\}\;\skipInstruction\; \{\postcondition[[\qVar_1,\qVar_2] \rightarrow \CXGate]\}
        }
        \quad\quad (\probability([\qVar_1,\qVar_2] = \ket{{+}0} \lor [\qVar_1,\qVar_2] = \ket{{-}0} ) = 1 \oplus_{0.1} \top) \oplus \postcondition[[\qVar_1,\qVar_2] \rightarrow \CXGate] \implies  \postcondition[[\qVar_1,\qVar_2] \rightarrow \CXGate]
    }{
\{ 
\precondition_2\downarrow^{\cVar}_1 \land 
\precondition_1\downarrow^{\cVar}_0
\}\; \texttt{if(}\cVar \text{) }\{\hadamardGate(\qVar_1)\}\; \texttt{else \{skip\}}  
    \{
        \postcondition[[\qVar_1,\qVar_2] \rightarrow \CXGate]
    \}
    }
    \]
    \[
    \inferrule*[right=Seq {\rm{(twice)}}]{
        \inferrule*[right=Unitary]{
        }{
        \{
\postcondition[[\qVar_1,\qVar_2] \rightarrow \CXGate]
    \}\; \CXGate([\qVar_1,\qVar_2])\; \{
\postcondition\}
        }
    }{
        \{\precondition\}\;\cVar:=\texttt{measure}(\qVar_0);\;\texttt{if}(\cVar)\ \texttt{\{ } \hadamardGate(\qVar_1)\texttt{\} else \{skip\};}\ \CXGate([\qVar_1,\qVar_2])\; \{
        \postcondition
    \}
    }
    \]
    \caption{Example: Verification of a local Bell-state preparation program.}
    \Description{Example: Verification of a local Bell-state preparation program.}
    \label{deriv:local_bell_state_prep}
\end{figure}

\begin{example}[Local Bell-state preparation]
    \label{ex:nql_example}
    {Recall Example~\ref{ex:motivation} from the introduction.} 
    Consider three quantum variables $\qVar_0, \qVar_1,$ and $\qVar_2$, and a classical variable~$\cVar$.
    Suppose that $q_0=\ket{1}$ if $q_1$ and $q_2$ are in the mixed quantum state $\distribution_0 = \{\ket{00} : 0.5,~\ket{10} : 0.5\}$,
    and $q_0=\ket{0}$ if $q_1$ and $q_2$ are in mixed state $\distribution_1 = \{\ket{+0} : 0.5,~\ket{-0} : 0.5\}$.
    This initial scenario is specified by the linear precondition
    {\begin{equation}
        \begin{split}\precondition \ :=\ &(\probability([\qVar_0,\qVar_1,\qVar_2]=\ket{100} \land [\cVar] = 0 ) = 0.5 \land  \probability([\qVar_0,\qVar_1,\qVar_2]=\ket{110} \land [\cVar] = 0) = 0.5) \oplus \\ &(\probability([\qVar_0,\qVar_1,\qVar_2]=\ket{0{+}0} \land [\cVar] = 0) = 0.5 \land  \probability([\qVar_0,\qVar_1,\qVar_2]=\ket{0{-}0} \land [\cVar] = 0) = 0.5).
    \end{split}
    \label{eq:precondition-indist}
    \end{equation}}
    The postcondition $\postcondition := \{ \probability([\qVar_1\qVar_2]=\ket{\Psi^+} \lor [\qVar_1\qVar_2]=\ket{\Psi^-}) \geq 0.75\}$ specifies that, upon termination, the quantum state is with probability at least 0.75 in one of the two Bell states $\ket{\Psi^+} = \frac{1}{\sqrt{2}}\cdot(\ket{00} + \ket{11})$ or $\ket{\Psi^-} = \frac{1}{\sqrt{2}}\cdot(\ket{00} - \ket{11})$. 
    Suppose further that the hardware specification has a noisy Hadamard gate which succeeds with probability 0.9, while measurements are noise-free.
    
    A program that achieves the desired postcondition $\postcondition$ from the given precondition $\precondition$ on the noisy hardware is shown in Figure~\ref{deriv:local_bell_state_prep} together with the proof. 
    Unfolding the last instruction gives the precondition 
    $\psi[[q_1,q_2]\rightarrow\texttt{CX}]$, which expands to $\mathbb{P}(\texttt{CX}\cdot[q_1, q_2]=\ket{\Psi^+}\lor\texttt{CX}\cdot[q_1, q_2]=\ket{\Psi^-})\geq 0.75$, or equivalently, to  $\mathbb{P}([q_1, q_2]=\ket{{+}0}\lor[q_1, q_2]=\ket{{-}0})\geq0.75$.
    Next, we derive preconditions for both branches in the conditional.
    In the true branch, the rule \textsc{HUnitary} yields a convex combination of ideal and noisy Hadamard applications; according to our hardware specification, the ideal evolution happens with probability 0.9.
    We then note that any convex combination of the postconditions of both branches implies $\psi[[q_1,q_2]\rightarrow\texttt{CX}]$.
    Finally, applying rule \textsc{Meas} backwards yields the precondition $\{\precondition_4 \downarrow^\cVar_0\downarrow^{\qVar_0}_{\projector_0} \land \precondition_3\downarrow^\cVar_1\downarrow^{\qVar_0}_{\projector_1} \}$.
    The derived precondition states that conditioning on measurement outcome $0$ yields the ensemble $\distribution_1$ with probability 1, while conditioning on outcome $1$ yields $\distribution_0$ with probability at least 0.75.
    This is implied by the given precondition $\precondition$.
\end{example}

\begin{remark}[Example~2]
 {We note that the left-most qubit} $\qVar_0$ is not the entangled ancilla of  Example~\ref{ex:motivation}, but a flag: 
 it records the state of the subsystem $[\qVar_1,\qVar_2]$, so that a program can read it and branch. 
 The limitations of density-matrix semantics persist even in the presence of such flags,
 i.e., no previous Hoare logic can perform the above derivation even in the noise-free case.
 This is because the reduced density matrix of $[\qVar_1,\qVar_2]$ is the same as in Example~\ref{ex:motivation}, regardless of the value of $\qVar_0$.  
 The postcondition $\postcondition$ asserts a probability mass over $[\qVar_1,\qVar_2]$, i.e., a property of the possible ensemble realizations, 
 not a measurement statistic.
 (Indeed, the final, full density matrix of the program from Example~\ref{ex:nql_example} would yield a Bell-basis measurement outcome $\ket{\Psi^+}$ or $\ket{\Psi^-}$ with probability $1$,
 even though the program does not succeed in preparing a local Bell state with probability~1.)
\end{remark}
 
\begin{figure}[t]
    \centering
    \scriptsize
    \[
      \begin{array}{l}
        \inferrule*{ % premises
            \QProgram \in AP
        }{ 
            \langle \QProgram, \distribution \rangle \rightarrow_1^\quantumHardware \langle \halt, \sem{\QProgram}_\quantumHardware(\distribution) \rangle
        }
      \end{array}
      \begin{array}{l}
        \inferrule*{ % premises
            \probVal \in [0,1] \quad\quad
            \langle \QProgram_0, \distribution\rangle  \rightarrow_\probVal^\quantumHardware \langle\QProgram_0', \distribution'\rangle
        }{ 
            \langle \QProgram_0;\QProgram_1, \distribution \rangle \rightarrow_{\probVal}^\quantumHardware \langle \QProgram_0';\QProgram_1, \distribution'\rangle
        }
      \end{array}
      \begin{array}{l}
        \inferrule*{ % premises
            \QProgram_0 = \halt
        }{ 
            \langle \QProgram_0;\QProgram_1, \distribution \rangle \rightarrow_{1}^\quantumHardware \langle \QProgram_1, \distribution\rangle
        }
      \end{array}
      \]
      % sequential rules
      % if else rules
    \[
      \begin{array}{l}
        \inferrule*{ % premises
            \distribution\downarrow^\cVar_1 \in \allProbDistributions{\allHybridStates}
        }{ 
            \langle  \texttt{if(} \cVar \texttt{) } \{\QProgram_0\} \texttt{ else } \{\QProgram_1\}, \distribution \rangle \rightarrow_{\subsetProb(\distribution, \distribution[\cVar = 1])}^\quantumHardware \langle \QProgram_0, \distribution\downarrow^\cVar_1 \rangle
        }
      \end{array}
      \begin{array}{l}
        \inferrule*{ % premises
            \distribution\downarrow^\cVar_0 \in \allProbDistributions{\allHybridStates}
        }{ 
            \langle  \texttt{if(} \cVar \texttt{) } \{\QProgram_0\} \texttt{ else } \{\QProgram_1\}, \distribution \rangle \rightarrow_{\subsetProb(\distribution, \distribution[\cVar = 0])}^\quantumHardware \langle \QProgram_1, \distribution\downarrow^\cVar_0 \rangle
        }
      \end{array}
    \]
    % \oplus rules
    \[
      \begin{array}{l}
        \inferrule*{ % premises
            \probVal \in [0,1] \quad\quad \QProgram_0\neq\QProgram_1
        }{ 
            \langle \QProgram_0 \oplus_\probVal \QProgram_1, \distribution \rangle \rightarrow_{1-\probVal}^\quantumHardware \langle \QProgram_0, \distribution\rangle
        }
      \end{array}
      \begin{array}{l}
        \inferrule*{ % premises
            \probVal \in [0,1]\quad\quad \QProgram_0\neq\QProgram_1
        }{ 
            \langle  \QProgram_0 \oplus_\probVal \QProgram_1, \distribution \rangle \rightarrow_{\probVal}^\quantumHardware \langle \QProgram_1, \distribution \rangle
        }
      \end{array}
      \begin{array}{l}
        \inferrule*{ % premises
            \probVal \in [0,1]\quad\quad \QProgram_0=\QProgram_1
        }{ 
            \langle  \QProgram_0 \oplus_\probVal \QProgram_1, \distribution \rangle \rightarrow_{1}^\quantumHardware \langle \QProgram_0, \distribution \rangle
        }
      \end{array}
    \]
    \caption{Operational rules for executing a hybrid program on a hardware specification~$H$.   
    The set of {\em atomic programs} is $AP = \{\cVar := 0, \cVar:=1, \cVar := \texttt{measure}(\qVar), \skipInstruction,  \cVar_0 := \cVar_1, U(\overrightarrow{\qVar})\}$.}
    \Description{Operational rules}
    \label{fig:operational_rules}
\end{figure}

\section{Model Checking Loop-free Quantum Programs on Noisy Hardware}
\label{sec:verification}
{Hoare logic provides a sound and compositional characterization of valid triples;
however, it does not induce an effective decision procedure.
Furthermore, the presence of noise makes correctness a property that must be analyzed for different hardwares, ideally automatically.
To address this limitation of Hoare rules, in this section we develop algorithms for the noise-aware verification of quantum programs, enabling full automation in the loop-free case.}

\begin{problem}[Noise-aware verification]
  Given a hardware specification $\quantumHardware$, hybrid program $P$, and ensemble assertions $\phi$ and $\psi$, decide if the hardware-dependent Hoare triple $\{\phi\}P^H\{\psi\}$ is valid.
\end{problem}

While the verification problem is undecidable in general,
we give a decision procedure for loop-free programs and linear preconditions
which is based on exploring a finite part of an infinite graph whose nodes are configurations.
A {\em configuration} is a pair $\configuration  = \langle \QProgram, \distribution \rangle$ consisting of a loop-free program $\QProgram \in (\allPrograms \cup \{\halt\})$ and an ensemble $\distribution \in \allProbDistributions{\allHybridStates}$, and
denote by $\allConfigurations$ the set of all configurations.

\begin{definition}[Ensemble Graph]  
The {\em ensemble graph} $EG(H)=\langle \allConfigurations, \QMCTransitionF \rangle$ of a hardware specification $H$ is
an infinite Markov chain whose states are the configurations $\QMCStates$, and 
whose stochastic transition function $\QMCTransitionF: \QMCStates \times \QMCStates \rightarrow [0,1]$ is defined for all configurations $\configuration, \configuration' \in \allConfigurations$ by $\QMCTransitionF(\configuration, \configuration')= \probVal$ iff $\configuration \rightarrow_\probVal^\quantumHardware \configuration'$ can be derived using the operational rules from Figure~\ref{fig:operational_rules}.  
Otherwise, $\QMCTransitionF(\configuration, \configuration') = 0$.
\end{definition}

\noindent
%Given an EG $\langle\allConfigurations, \QMCTransitionF\rangle$, the set of reachable configurations is defined as
%$\ReachableConfigs(\configuration) = \{\configuration\} \cup \bigcup_{\configuration' \in \supp(\QMCTransitionF(\configuration))}\ReachableConfigs(\configuration')$. 
A loop-free hybrid program $P$ has {\em depth} $k$ if $P$ runs for at most $k$ steps.
For $P$ and an initial ensemble $\beta$, 
we explore the finite part of $EG(H)$ that is reachable from the configuration $\langle P,\beta\rangle$ in at most $k$ transitions.
For a configuration $\configuration\in\allConfigurations$, 
the set of configurations that are {\em reachable from $\configuration$ in at most $\horizon$ transitions} is defined by
$\ReachableConfigs^0(\configuration)=\{\configuration\}$,
and $\ReachableConfigs^{\horizon+1}(\configuration) = \{\configuration\}\cup\bigcup_{(\configuration,\configuration') \in \supp(\QMCTransitionF)}\ReachableConfigs^{\horizon}(\configuration')$ for $k\ge 0$.

\subsection{Verification of Hoare triples with singular preconditions}
In this and the following section, let $\overrightarrow{\qVar} = [\qVar_0, \ldots, \qVar_{|\allQuantumVariables|-1}]$ and $\overrightarrow{\cVar} = [\cVar_0, \ldots, \cVar_{|\allClassicalVariables|-1}]$ be the tuples of all quantum variables in $\allQuantumVariables$ and classical variables in $\allClassicalVariables$, respectively.
Additionally, let $\mathbb{Q}$ be the set of all rational constants.
%\begin{definition}[Singular precondition] 
An ensemble $\distribution\in \allProbDistributions{\allHybridStates}$ is 
{\em representable} if 
$\distribution(\hybridState)\in\mathbb{Q}$ for all 
$\hybridState\in\allHybridStates$.
For a representable ensemble $\beta$, let
{$$\precondition_\distribution := \bigwedge_{(\ket{\qubit}, \classicalState) \in \supp(\distribution)} \probability(\overrightarrow{\qVar} = \ket{\qubit} \land \overrightarrow{\cVar} = \classicalState(\overrightarrow{\cVar})) = \distribution((\ket{\qubit}, \classicalState))$$} be the ensemble assertion that is satisfied by $\distribution$, and $\distribution$ only. 
An ensemble assertion $\precondition$ is {\em singular} if there exists a representable ensemble $\distribution$ such that $\precondition := \precondition_\distribution$.
We denote this unique ensemble by $\distribution_\precondition$.
%\end{definition}

Given an ensemble graph $\markovChain = \langle\QMCStates, \QMCTransitionF\rangle$ and a configuration $\configuration= \langle\QProgram, \distribution\rangle$, we compute the final ensemble that is reached by the loop-free program $\QProgram$ from the initial ensemble $\distribution$ as follows:
{
\begin{equation*}
    \getFinalEnsemble(\markovChain, \configuration) = \begin{dcases*}
        \distribution & \text{if } $\QProgram$ = \halt, \\
        \sum_{(\configuration, \configuration') \in \supp(\QMCTransitionF)} \QMCTransitionF(\configuration, \configuration') \cdot \getFinalEnsemble(\markovChain, \configuration') & \text{otherwise.}
    \end{dcases*}
    \label{eq:get-final-ensemble}
\end{equation*}
}

\begin{lemma} For all loop-free hybrid programs $\QProgram$, 
hardware specifications $\quantumHardware$, 
and singular ensemble assertions  $\precondition$,
 $\getFinalEnsemble(EG(H), \langle\QProgram, \distribution_{\precondition}\rangle) = \sem{\QProgram}_\quantumHardware(\distribution_\precondition)$.
 \label{lemma:final-ensemble}
\end{lemma}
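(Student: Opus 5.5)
We prove the more general claim that for every loop-free program $\QProgram$ and every ensemble $\distribution$ (not only $\distribution_\precondition$), $\getFinalEnsemble(EG(H),\langle\QProgram,\distribution\rangle)=\sem{\QProgram}_\quantumHardware(\distribution)$. The statement of Lemma~\ref{lemma:final-ensemble} is then the instance $\distribution=\distribution_\precondition$. The argument is by structural induction on $\QProgram$. The recursion defining $\getFinalEnsemble$ is well-founded because every operational rule of Figure~\ref{fig:operational_rules} either yields $\halt$ or strictly decreases the size of the remaining program. For sequential composition this holds after the standard measure that counts $\halt;\QProgram_1$ as larger than $\QProgram_1$. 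We also extend the convention $\sem{\halt}_\quantumHardware=\mathrm{id}$.

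\emph{Base cases.} For an atomic program $\QProgram\in AP$ the only transition is $\langle\QProgram,\distribution\rangle\rightarrow_1\langle\halt,\sem{\QProgram}_\quantumHardware(\distribution)\rangle$, so the equation follows immediately. For $\halt$ it holds by definition.

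\emph{Inductive cases.}
\begin{itemize}
\item \emph{Probabilistic choice.} If $\QProgram_0\neq\QProgram_1$, there are two successors, with weights $1-\probVal$ and $\probVal$. The induction hypothesis gives $(1-\probVal)\sem{\QProgram_0}_\quantumHardware(\distribution)+\probVal\sem{\QProgram_1}_\quantumHardware(\distribution)$, which matches Figure~\ref{fig:semantics_dist_transformer}. If $\QProgram_0=\QProgram_1$, the single weight-1 transition gives $\sem{\QProgram_0}_\quantumHardware(\distribution)$, which equals the same convex combination.
\item \emph{Conditional.} The successors are $\langle\QProgram_0,\distribution\downarrow^\cVar_1\rangle$ with weight $w_1$ and $\langle\QProgram_1,\distribution\downarrow^\cVar_0\rangle$ with weight $w_0$. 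A successor is present exactly when its normalized subensemble is nonempty, i.e.\ when its weight is positive. When it is absent, its term in the semantic formula is $0\cdot(\ldots)=\emptyDistribution$ anyway. Applying the induction hypothesis to each present branch gives the semantic clause.
\end{itemize}

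\emph{Main obstacle: sequential composition.} The transitions of $\QProgram_0;\QProgram_1$ mirror those of $\QProgram_0$ with $;\QProgram_1$ appended, until $\halt;\QProgram_1$ steps to $\langle\QProgram_1,\cdot\rangle$. I would prove an auxiliary lemma by induction on the size of $\QProgram_0$ (a nested induction, including for sequences inside $\QProgram_0$):
\[
\getFinalEnsemble(\langle\QProgram_0;\QProgram_1,\distribution\rangle)=\sem{\QProgram_1}_\quantumHardware\bigl(\getFinalEnsemble(\langle\QProgram_0,\distribution\rangle)\bigr).
\]
Its proof unfolds one step: each transition $\langle\QProgram_0,\distribution\rangle\rightarrow_\probVal\langle\QProgram_0',\distribution'\rangle$ induces the transition $\langle\QProgram_0;\QProgram_1,\distribution\rangle\rightarrow_\probVal\langle\QProgram_0';\QProgram_1,\distribution'\rangle$ with the same weight. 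The subtle point is linearity: we need $\sum_i\probVal_i\sem{\QProgram_1}_\quantumHardware(\distribution_i)=\sem{\QProgram_1}_\quantumHardware(\sum_i\probVal_i\distribution_i)$. This is not immediate because the conditional normalizes. So I would separately prove, by structural induction, that $\sem{\QProgram}_\quantumHardware$ is linear on subensembles for loop-free $\QProgram$. For atomic programs this holds since they are defined pointwise on support states. For the conditional, $w_\bit\cdot\sem{\cdot}(\distribution\downarrow^\cVar_\bit)$ equals $\sem{\cdot}$ applied to the unnormalized restriction $\sum_{\hybridState\in\distribution[\cVar=\bit]}\distribution(\hybridState)\{\hybridState:1\}$ by the induction hypothesis, and restriction is linear. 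Composition and probabilistic choice preserve linearity. A second subtlety is that transitions into the same successor configuration must be merged into one $\QMCTransitionF$ value; the sum over $\supp(\QMCTransitionF)$ handles this, and the side condition $\QProgram_0\neq\QProgram_1$ exists for exactly this reason. Combining the auxiliary lemma with the induction hypothesis for $\QProgram_0$ gives $\sem{\QProgram_1}_\quantumHardware(\sem{\QProgram_0}_\quantumHardware(\distribution))=\sem{\QProgram_0;\QProgram_1}_\quantumHardware(\distribution)$.
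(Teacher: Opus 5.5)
Your proposal is correct and follows essentially the paper's route: structural induction on the program with the claim generalized to all ensembles, where the sequential case is discharged via linearity of $\sem{\cdot}_H$ (the paper's Lemma~\ref{lemma:helper-final-ensemble}). Your auxiliary identity $\textsc{finalEnsemble}(\langle P_0;P_1,\beta\rangle)=\sem{P_1}_H(\textsc{finalEnsemble}(\langle P_0,\beta\rangle))$, proved by a nested induction over residual programs, makes rigorous the step the paper handles informally by decomposing the run of the first component into its halting configurations. Your explicit treatment of well-foundedness, merged transitions, and the $P_0=P_1$ case of $\oplus_p$ fills in details the paper omits.
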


\noindent
The proof of Lemma~\ref{lemma:final-ensemble} follows by induction on the program $\QProgram$ (see proof in the Appendix, Subsection~\ref{appendix:proof-final-ensemble}).
Note that if the initial ensemble $\beta$ is representable, so is $\getFinalEnsemble(\markovChain, \configuration)$.
{Furthermore, each iteration in the recursive computation of} $\getFinalEnsemble$ increases linearly the precision required for the denominators of all rational numbers needed for representing ensembles.
The number of iterations is bounded by the depth $k$ of the program $P$. Hence, if all probabilities for representing $\beta$ and the transition probabilities $\Delta$ are integer multiples of $1/\ell$, the final ensemble can be represented by probabilities that are multiples of $1/\ell^{\horizon}$~\cite{Muroya25}.
This observation bounds the precision $\precision$ needed by arbitrary-precision arithmetic for computing the function $\getFinalEnsemble$ precisely, 
where $\precision$ may denote any unit of precision, such as the number of bits.

To solve the verification problem for a loop-free hybrid program $P$,
hardware $H$, singular precondition $\phi$, and postcondition $\psi$,
it remains to show that the final ensemble 
$\sem{\QProgram}_\quantumHardware(\distribution_\precondition)$ satisfies $\psi$.
For this purpose, 
we translate a given representable ensemble $\beta$ and ensemble assertion $\psi$ into a first-order formula $\textsc{realForm}(\beta,\psi)$ that is 
satisfiable in the theory of real-closed fields iff 
$\beta\in\sem{\psi}$.
The function $\textsc{realForm}$ takes two arguments,
a symbolic ensemble and an ensemble assertion.
A {\em symbolic ensemble} $\gamma$ is a finite set of pairs $(h,w)$ consisting of 
a hybrid state $h$ and a symbolic expression~$w$,
such that each hybrid state $h$ occurs only once in~$\gamma$.
Let $\gamma(h)$ be $w$ if $(h,w)\in\gamma$, and $\gamma(h)=0$ otherwise.
A {\em symbolic expression} is either a rational constant $r\in\mathbb{Q}$,
or a variable $u$ ranging over the reals,
or a sum or product of symbolic expressions.
Note that every representable ensemble is a symbolic ensemble all of whose symbolic expressions are rational constants.
Before defining the function {\sc realForm}, 
we define a function {\sc realTerm}
that maps a symbolic ensemble $\gamma$ and a probabilistic term $T \in \RealDTerm$
to a symbolic expression.
Let $S(\gamma) = \{\hybridState \mid \gamma(\hybridState) \neq 0\}$ be the (finite) set of hybrid states that have a nonzero or symbolic probability in $\gamma$. 
Then $\textsc{realTerm}(\gamma,T)$ and $\textsc{realForm}(\gamma,\psi)$ are  inductively defined as follows:

\begin{itemize}
    \item If $T := r$, then $\textsc{realTerm}(\gamma,T)=r$.
    \item If $T := \probability(\stateAssertion)$, let $S' = \{ \hybridState \in S(\gamma) \mid \hybridState \vDash \stateAssertion\}$. Then 
    $\textsc{realTerm}(\gamma,T)=\sum_{\hybridState \in S'} \gamma(\hybridState)$.
    \item If $T := \trace(\upsilon)$, then $\textsc{realTerm}(\gamma,T)=\sum_{\hybridState \in S(\gamma)} \gamma(\hybridState)\cdot\trace(\sem{\upsilon}(\hybridState))$.
    \item If $T := T_0~\allOps~T_1$, 
    then $\textsc{realTerm}(\gamma,T)=\textsc{realTerm}(\gamma,T_0)~\allOps~\textsc{realTerm}(\gamma,T_1)$.
    \item If $\postcondition := \postcondition'\downarrow^\cVar_\bit$, 
        let $S' = \{\hybridState \in S(\gamma) \mid \hybridState(\cVar) = \bit\}$ 
        and $\eta = \sum_{\hybridState \in S'}\gamma(\hybridState)$.
        Then $\textsc{realForm}(\gamma,\psi) = \eta > 0 \land \textsc{realForm}(\gamma', \postcondition')$, where $\gamma' = \{(\hybridState, \frac{\gamma(\hybridState)}{\eta})  \mid \hybridState \in S'\}$.
    \item If $\postcondition := \postcondition'\downarrow^{\overrightarrow{\qVar}}_\someMatrix$, for each quantum state $\ket{\qubit}$ let $\probVal_{\qubit} =\trace(\someMatrix_{\overrightarrow{\qVar}}\cdot (\ket{\qubit} \bra{\qubit})_{\overrightarrow{\qVar}}\cdot \someMatrix^\dagger_{\overrightarrow{\qVar}})$.
    Also, let  $S' = \{(\ket{\qubit}, \classicalState)\in S(\gamma) \mid  \probVal_{\qubit}> 0\}$ 
        and $\eta = \sum_{(\ket{\qubit}, \classicalState) \in S'} \probVal_{\qubit}\cdot \gamma(\hybridState)$. Then $\textsc{realForm}(\gamma,\psi)=
        \eta > 0 \land 
            \textsc{realForm}(\gamma', \postcondition')$, 
            where $\gamma'(\hybridState)= \sum \{
           \probVal_{\qubit}\cdot\gamma((\ket{\qubit}, \cVar))/\eta \mid (\ket{\qubit}, \classicalState)\in S' \land \hybridState = (\Pi_{\overrightarrow{\qVar}}^\dagger M_{\overrightarrow{\qVar}}\ket{\qubit}_{\overrightarrow{\qVar}}/\sqrt{\probVal_{\qubit}}, \classicalState)\}$.
    \item If $\postcondition := \postcondition_0 \oplus_T \postcondition_1$, then  choose fresh variables $u_j$ and $u_j'$ for all $0 \leq j < N$ with $N=|S(\gamma)|$.  Then, let  $\gamma'_0 = \{(\hybridState_0,u_0),\ldots,(\hybridState_{N-1},u_{N-1})\}$ and $\gamma'_1 = \{(\hybridState_0,u'_0),\ldots,(\hybridState_{N-1},u'_{N-1})\}$ be symbolic ensembles, and let $\probVal = \textsc{realTerm}(\gamma,T)$. 
    We have $\textsc{realForm}(\gamma,\psi)$ returns
     {\small\begin{equation*}
        \begin{aligned}
            &\exists u_0,\ldots, u_{N-1}, u'_0, \ldots, u'_{N-1}.\;\sum_{0\leq j < N} u_j = 1 \land \sum_{0\leq j < N} u'_j = 1 \land \bigwedge_{0\le j< N}(u_j \geq 0 \land u'_j\geq 0)\ \land \\
            & (\probVal < 1 \Rightarrow 
            \textsc{realForm}(\gamma'_0, \postcondition_0))
             \land \ (\probVal > 0 \Rightarrow
            \textsc{realForm}(\gamma'_1, \postcondition_1) )\ \land 
            \bigwedge_{0 \leq j < N}(1-\probVal)\cdot u_j + \probVal\cdot u'_j = \gamma(\hybridState_j).
        \end{aligned}
        \label{eq:real-form1}
    \end{equation*}}
    \item If $\postcondition := \postcondition_0 \oplus \postcondition_1$ is similar to the previous case, but we introduce a fresh real variable $\probVal$ with constraints $0 \leq \probVal \leq 1$, and quantify it existentially.
    \item If $\postcondition := T_0~\allRelOps~T_1$, then $\textsc{realForm}(\gamma,\psi)=\textsc{realTerm}(\gamma,T_0)~\allRelOps~\textsc{realTerm}(\gamma,T_1)$.
\end{itemize}
The boolean connectives are handled in a standard way.
The following lemma follows by induction on~$\psi$ (see Appendix subsection~\ref{sec:proof-real-formula}).

\begin{lemma} For all representable ensembles $\distribution$ and all ensemble assertions $\postcondition$, the first-order formula $\textsc{realForm}(\distribution, \postcondition)$
is satisfiable in the theory of real-closed fields iff $\distribution\vDash \postcondition$.
\label{lemma:real-formula}
\end{lemma}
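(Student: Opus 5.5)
The proof goes by structural induction on the assertion $\postcondition$. The induction hypothesis has to be strengthened, because the recursive calls of $\textsc{realForm}$ are made on \emph{symbolic} ensembles that contain free real variables, not on representable ensembles. So I would prove the following statement. Let $\gamma$ be a symbolic ensemble whose free variables are $\bar u$, and let $\nu$ be an assignment of reals to $\bar u$ such that $\gamma[\nu]$, the instance of $\gamma$ under $\nu$, is an ensemble. Then $\textsc{realForm}(\gamma,\postcondition)[\nu]$ holds in the reals iff $\gamma[\nu]\vDash\postcondition$. Lemma~\ref{lemma:real-formula} is the special case of no free variables: a representable $\distribution$ is a symbolic ensemble with rational constants, so the formula is closed and "satisfiable" means "true". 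Alongside, I would prove an auxiliary lemma for probability terms: $\textsc{realTerm}(\gamma,T)[\nu]=\sem{T}(\gamma[\nu])$. This follows by induction on $T$. The cases $\probability(\stateAssertion)$ and $\trace(\upsilon)$ hold because the sums in Figure~\ref{fig:semantics_assertions} range only over the support. That support is contained in the finite set $S(\gamma)$, and summands outside it vanish. Membership $\hybridState\vDash\stateAssertion$ and the values $\trace(\sem{\upsilon}(\hybridState))$ are fixed, computable quantities for each of the finitely many concrete $\hybridState$.

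The atomic case $T_0~\allRelOps~T_1$ then follows directly from the term lemma. The Boolean connectives commute with evaluation under $\nu$. For negation this uses that $\gamma[\nu]$ is a genuine ensemble, so the complement taken in $\allProbDistributions{\allHybridStates}$ matches logical negation. For the conditioning case $\postcondition'\downarrow^\cVar_\bit$, the value of $\eta$ under $\nu$ is exactly $w_\bit$. If $\eta>0$, then $\gamma'[\nu]=\gamma[\nu]\downarrow^\cVar_\bit$, and the induction hypothesis applies to $\gamma'$, which is again a symbolic ensemble. If $\eta=0$, then $\normalize$ yields $\emptyDistribution\notin\allProbDistributions{\allHybridStates}$, so $\postcondition$ fails, matching the conjunct $\eta>0$. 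The operator case $\downarrow^{\overrightarrow{\qVar}}_\someMatrix$ is handled the same way. Here I would check that $\gamma'$ reproduces $\distribution\downarrow^{\overrightarrow{\qVar}}_\someMatrix$, including the merging of distinct states that map to the same post-operator state; this merging is why $\gamma'$ is defined as a sum.

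The main obstacle is the mixture case $\postcondition_0\oplus_T\postcondition_1$, and $\oplus$ with an existential $\probVal$. The semantics quantifies over arbitrary ensembles $\distribution_0,\distribution_1$, whereas the formula only introduces weights on $S(\gamma)=\{\hybridState_0,\dots,\hybridState_{N-1}\}$. For soundness ($\Leftarrow$), a witness for the $u_j,u'_j$ directly defines $\distribution_0,\distribution_1$ supported on $S(\gamma)$. The constraints $\sum_j u_j=1$ and $u_j\ge 0$ make these genuine ensembles, the linear equations give $\distribution=(1-p)\distribution_0+p\distribution_1$, and the induction hypothesis on $\gamma'_0,\gamma'_1$ gives the implications. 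For completeness ($\Rightarrow$), take witnesses $\distribution_0,\distribution_1$. When $0<p<1$, nonnegativity and $(1-p)\distribution_0+p\distribution_1=\distribution$ force $\supp(\distribution_i)\subseteq\supp(\distribution)\subseteq S(\gamma)$, so setting $u_j=\distribution_0(\hybridState_j)$ and $u'_j=\distribution_1(\hybridState_j)$ works. When $p=0$ (respectively $p=1$), the unused component is unconstrained by the semantics and may have support outside $S(\gamma)$. But the formula's implication for that side is vacuous, so I simply choose its weights as an arbitrary distribution on $S(\gamma)$, for instance a copy of $\gamma[\nu]$; the used component equals $\distribution$ itself. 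A subtle point I would flag is that $p=\textsc{realTerm}(\gamma,T)$ is evaluated on $\gamma$, not on the components, which matches $\sem{T}(\distribution)$ in the semantics. Finally, the resulting sentence lies in the first-order theory of the reals, so its truth is decidable by Tarski's theorem.
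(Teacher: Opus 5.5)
Your structural induction follows the same skeleton as the paper's proof. Several of your additions are exactly what that proof lacks:
\begin{itemize}
\item the strengthened hypothesis (truth under a valuation $\nu$ for symbolic $\gamma$ with $\gamma[\nu]$ an ensemble);
\item the term lemma;
\item the $\eta=0$ analysis;
\item the support argument and the $p\in\{0,1\}$ handling in the mixture case.
\end{itemize}
The paper states its hypothesis only for concrete ensembles, although the recursive calls are on symbolic ones, and it closes the $\oplus_T$ case with ``the symbolic ensembles can be resolved''.

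There is nevertheless a genuine hole in the completeness direction of the $\oplus_T$ case: your split into $0<p<1$, $p=0$, $p=1$ is not exhaustive. $T$ is an arbitrary probability term (for instance the constant $2$), and the semantics of $\psi_0\oplus_T\psi_1$ does not require $\sem{T}(\beta)\in[0,1]$. When $p\notin[0,1]$, one of the coefficients $1-p$, $p$ is negative. Nonnegativity then no longer forces $\supp(\beta_i)\subseteq\supp(\beta)$, so the semantic witnesses may put mass outside $S(\gamma)$, which the formula cannot express.

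This is not just a missing argument: the statement itself is false in that case. Take one qubit, $\beta=\{(\ket{0},c):1\}$, $T:=2$, $\psi_0:=(1=1)$ and $\psi_1:=\probability([q]=\ket{1})=0.5$.
\begin{itemize}
\item Semantically, $\beta_1=\{(\ket{0},c):0.5,(\ket{1},c):0.5\}$ and $\beta_0=\{(\ket{1},c):1\}$ are ensembles with $\beta=-\beta_0+2\beta_1$ and $\beta_1\vDash\psi_1$. Hence $\beta\vDash\psi_0\oplus_2\psi_1$.
\item In the encoding, $|S(\beta)|=1$ forces $u_0=u'_0=1$. Then $\textsc{realForm}(\gamma'_1,\psi_1)$ evaluates to $0=0.5$, and since $2>0$ the whole formula is false.
\end{itemize}

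To close the gap, restrict the lemma to mixture terms with $\sem{T}(\beta)\in[0,1]$ on all ensembles. Alternatively, build that requirement into the semantics of $\oplus_T$ and add a conjunct $0\le p\le 1$ to the encoding. Either restriction covers every use in the paper: $\probability(x=1)$ in \textsc{Meas}, $1-H(\cdot)(i,j)\cdot\trace(\projector_j*[q])$ in the \textsc{HM} rules, and $1-p$ in \textsc{HUnitary}. Under it, your argument goes through as written.
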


\begin{example} {Consider} the postcondition $\postcondition := \{ \probability([\qVar_1\qVar_2]=\ket{\Psi^+} \lor [\qVar_1\qVar_2]=\ket{\Psi^-}) \geq 0.75\}$ of Example~\ref{ex:nql_example}, and the symbolic ensemble $\gamma = \{ (\ket{0}\ket{\Psi^+},\classicalState): \probVal_0,\; (\ket{0}\ket{\Psi^-},\classicalState): \probVal_1 \}$, where $\classicalState \in \allClassicalStates$ is any classical state. 
Then $\textsc{realTerm}(\gamma, \postcondition) = \probVal_0 + \probVal_1$ and 
$
\textsc{realForm}(\gamma,\postcondition) :=   \probVal_0 + \probVal_1  \geq 0.75
$.
\end{example}

\subsection{Verification of Hoare triples with linear preconditions}
\label{subsec:verify_lin}
We generalize the verification procedure for singular preconditions to linear preconditions.
A \emph{polytope} over ensembles is the set of convex combinations of a finite set 
$
\quantumPolytope = \{\distribution_0, \dots, \distribution_{N-1}\}
$
of ensembles:
$\operatorname{conv}(\quantumPolytope) \;=\; 
\{ \sum_{i=0}^{\lenEnsembleSet-1} \alpha_i \cdot \distribution_i 
\;|\; \forall 0\leq i< N.\ \alpha_i \in \mathbb{R} \;\wedge\; \alpha_i\geq0 \; \land\;\sum_{i=0}^{\lenEnsembleSet-1} \alpha_i = 1 \}$.
Each ensemble $\distribution_i$ is a \emph{corner} of the polytope $\operatorname{conv}(\quantumPolytope)$.
Given a finite set $\QSetPolytopes$ of polytopes, 
let 
$\precondition_{\QSetPolytopes} := \bigvee_{\quantumPolytope \in \QSetPolytopes} \left(\bigoplus_{\distribution \in \quantumPolytope} \precondition_\distribution\right)$. 
An ensemble assertion $\precondition$ is {\em linear} if there exists a finite set of polytopes $\QSetPolytopes$ such that $\precondition := \precondition_{\QSetPolytopes}$.
We denote the polytopes of a linear ensemble assertion $\precondition$ by $\preconditionPolytopes(\precondition)$.

\begin{algorithm}[t]
\caption{\textsc{checkLin}$(\quantumHardware,\QProgram,\precondition,\postcondition)$}
\label{alg:checkLin}
\scriptsize
\begin{algorithmic}[1]
\Require Quantum hardware $\quantumHardware$, loop-free program $\QProgram$, linear precondition $\precondition$, postcondition $\postcondition$

\State $\markovChain \gets EG(\quantumHardware)$ \Comment{ensemble graph}
\ForAll{polytopes $\quantumPolytope \in \preconditionPolytopes(\precondition)$} \Comment{Polytope $\quantumPolytope$ has $N$ corners $\{\distribution_0, \distribution_1, \ldots, \distribution_{N-1}\}$.}
	\For{ $i = 0$ to $N-1$}
		\State $\distribution_i' \gets \getFinalEnsemble(\markovChain, \langle\QProgram,\distribution_i\rangle)$
	\EndFor
    \If{$\forall (u_0,\ldots,u_{N-1}). ((\bigwedge_{i=0}^{N-1} u_i \ge 0 \land\sum_{i=0}^{N-1} u_i = 1)\Rightarrow(\sum_{i=0}^{N-1} u_i \cdot \distribution_i' \vDash \postcondition))$ is unsat}
        \State \Return \texttt{false}
    \EndIf
\EndFor
\State \Return \texttt{true}
\end{algorithmic}
\end{algorithm}

{Algorithm}~\ref{alg:checkLin} defines the procedure
$\textsc{checkLin}(\quantumHardware,\QProgram,\precondition,\postcondition)$,
which decides the validity of the Hoare triple $\{\precondition\}\;\QProgram^\quantumHardware\;\{\postcondition\}$ for loop-free programs $P$ and linear preconditions $\phi$.
First, we construct the ensemble graph $\markovChain = EG(\quantumHardware)$. 
Then, for each polytope $\quantumPolytope\in\preconditionPolytopes(\precondition)$ of the precondition, we compute the set $\{\getFinalEnsemble(\markovChain, \langle \QProgram,\distribution_i\rangle) \mid  \distribution_i \in \quantumPolytope\} = \{\distribution_0', \ldots, \distribution_{N-1}'\}$ of final ensembles reached from each corner of the polytope.
Finally, we verify for each polytope that all convex combinations of the final ensembles satisfy the postcondition (line 6 of Algorithm~\ref{alg:checkLin}).
We encode this constraint as a first-order formula in the theory of real-closed fields using fresh variables $u_0,\ldots,u_{N-1}$ and the encoding procedure described in the previous subsection.
If the formula of line 6 is true for every polytope in $\preconditionPolytopes(\precondition)$, the procedure returns $\textsc{True}$; otherwise, $\textsc{False}$.
%We can decide the formula $\forall \alpha_i. (\sum_{0\leq i < n} \alpha_i = 1 \land \alpha_i \geq 0) \implies (\sum_{0\leq i < n} \alpha_i\cdot\distribution_i \vDash \postcondition)$ 
%Given a set of finite set  $E = \{\distribution_0, \ldots, \distribution_{\lenEnsembleSet-1}\}$ of $\lenEnsembleSet$ ensembles, and a set $\{\alpha_0, \alpha_1, \ldots, \alpha_{\lenEnsembleSet-1}\}$ of variables over reals, let $\upvarphi=\textsc{realForm}(\sum_{0 \leq i <\lenEnsembleSet}\alpha_i\cdot\distribution_i , \postcondition)$. 
%Then we have that $\forall \alpha_i. (\sum_{0\leq i < n} \alpha_i = 1 \land \alpha_i \geq 0) \implies (\sum_{0\leq i < \lenEnsembleSet} \alpha_i\cdot\distribution_i \vDash \postcondition)$  iff\;\;$\forall \alpha_i. (\sum_{0\leq i < n} \alpha_i = 1 \land \alpha_i \geq 0) \implies \upvarphi)$ is satistisfiable.
The proof of the Theorem below can be found in the Appendix in subsection~\ref{proof:linear-prec-verification}.
Its correctness follows from the convexity-preserving properties of affine operators (such as the stochastic transition function of ensemble graphs)  acting on a convex set.

\begin{theorem}[Correctness] For all hardware specifications $\quantumHardware$, loop-free hybrid programs $\QProgram$, and ensemble assertions $\precondition$ and $\postcondition$,
if $\precondition$ is linear,
then $\{\precondition\}~\QProgram^\quantumHardware~\{\postcondition\}$ is valid iff $\textsc{checkLin}(\quantumHardware, \QProgram, \precondition, \postcondition)=\textsc{True}$.
\label{thm:linear-prec-verification}
\end{theorem}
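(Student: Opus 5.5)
The plan is to reduce validity over the whole linear precondition to a finite check on the corners, using affinity of the program semantics. First I will characterize $\sem{\precondition}$ for a linear $\precondition := \precondition_{\QSetPolytopes}$. By the semantics of $\lor$, of $\oplus$ (as the union over $\probVal$ of $\oplus_\probVal$), and the fact that each $\precondition_\distribution$ is satisfied exactly by $\distribution$, I will show by induction on the number of corners that $\sem{\bigoplus_{\distribution\in\quantumPolytope}\precondition_\distribution} = \operatorname{conv}(\quantumPolytope)$. This needs a little care because the $\oplus_\probVal$ clause only requires $\distribution_0\vDash\precondition_0$ when $\probVal<1$ (and similarly for $\distribution_1$), but this degenerate case just gives the endpoint. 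Hence $\sem{\precondition} = \bigcup_{\quantumPolytope\in\preconditionPolytopes(\precondition)}\operatorname{conv}(\quantumPolytope)$.

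The key step is an affinity lemma: for every loop-free $\QProgram$ and every convex combination $\sum_i\alpha_i\distribution_i$ of ensembles, $\sem{\QProgram}_\quantumHardware(\sum_i\alpha_i\distribution_i)=\sum_i\alpha_i\sem{\QProgram}_\quantumHardware(\distribution_i)$. I will prove it by structural induction on $\QProgram$, using Figure~\ref{fig:semantics_dist_transformer}.

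\begin{itemize}
\item For atomic instructions, the semantics is a sum over $\supp(\distribution)$ weighted by $\distribution(\hybridState)$, which is linear in $\distribution$.
\item Sequencing and $\oplus_\probVal$ follow immediately from the induction hypothesis.
\item The conditional is the delicate case. The term $w_b\cdot\sem{\QProgram_j}_\quantumHardware(\distribution\downarrow^{\cVar}_b)$ is not obviously affine, because normalization divides by $w_b$.
\end{itemize}

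For the conditional, I will write $w_b(\distribution)\cdot\distribution\downarrow^{\cVar}_b$ as the unnormalized restriction $\distribution|_{\cVar=b}$, which is linear in $\distribution$. I will then prove the stronger statement that $\sem{\QProgram}_\quantumHardware$ is positively homogeneous and additive on subensembles, with the convention that $\sem{\QProgram}_\quantumHardware(\emptyDistribution)=\emptyDistribution$. With this, $w_b\,\sem{\QProgram_j}_\quantumHardware(\distribution\downarrow^{\cVar}_b)=\sem{\QProgram_j}_\quantumHardware(\distribution|_{\cVar=b})$, and linearity follows. I expect this step to be the main obstacle: it requires strengthening the induction hypothesis to linearity on all subensembles, and handling the case where some $w_b$ is zero via $\emptyDistribution$.

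With affinity established, validity holds iff for every polytope $\quantumPolytope$ and every $\alpha$ in the simplex, $\sum_i\alpha_i\sem{\QProgram}_\quantumHardware(\distribution_i)\vDash\postcondition$. By Lemma~\ref{lemma:final-ensemble} (applied to each representable corner, for which $\precondition_{\distribution_i}$ is singular), $\sem{\QProgram}_\quantumHardware(\distribution_i)=\distribution'_i$ as computed in line 4 of Algorithm~\ref{alg:checkLin}. By Lemma~\ref{lemma:real-formula}, applied with the symbolic ensemble $\sum_i u_i\distribution'_i$, whose proof extends verbatim to symbolic ensembles, the condition is exactly the universally quantified formula of line 6, which is decidable in real-closed fields.

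Both directions then follow. If some polytope fails the check, there is a convex combination of corners satisfying $\precondition$ whose image violates $\postcondition$, so the triple is invalid and the algorithm correctly returns \texttt{false}. Conversely, if every polytope passes, every $\distribution\vDash\precondition$ lies in some $\operatorname{conv}(\quantumPolytope)$, and its image satisfies $\postcondition$.
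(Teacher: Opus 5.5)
Your proposal is correct and takes the same route as the paper. Linearity of the loop-free semantics maps each precondition polytope onto the convex hull of the corners' final ensembles (the paper's Lemma~\ref{lemma:convex-reach-ver}), so validity coincides with the universally quantified check of line 6, and both directions follow; your write-up is in fact more careful than the paper's in three places. You explicitly show $\sem{\precondition}=\bigcup_{\quantumPolytope\in\preconditionPolytopes(\precondition)}\operatorname{conv}(\quantumPolytope)$, which the paper takes for granted. You handle the conditional, including zero-weight branches, by proving linearity of $\sem{\QProgram}_\quantumHardware$ on subensembles, whereas Lemma~\ref{lemma:linearity-verification} treats the $\distribution$-dependent branch probabilities as fixed scalars. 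And you tie lines 4 and 6 of the algorithm to the semantics through Lemmas~\ref{lemma:final-ensemble} and~\ref{lemma:real-formula}, a step the paper leaves implicit.
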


\begin{example} {Let} $\classicalState_0$ be a classical state that maps every classical variable to 0. Consider the precondition of Example~\ref{ex:nql_example}. This precondition is linear and consists of a single polytope whose corners are $\distribution_0 = \{(\ket{100}, \classicalState_0): 0.5, (\ket{110}, \classicalState_0): 0.5 \}$ and $\distribution_1 = \{(\ket{0{+}0}, \classicalState_0): 0.5, (\ket{0{-}0}, \classicalState_0): 0.5 \}$.
Then, given a quantum program $\QProgram$, Algorithm~\ref{alg:checkLin} first computes the final ensembles reached from the corners of the precondition (line 4),
namely,
$\distribution_0' = \sem{\QProgram}_{\quantumHardware}(\distribution_0)$ and $\distribution_1' = \sem{\QProgram}_{\quantumHardware}(\distribution_1)$.
The first-order formula of line 6,
$
\forall u_0, u_1. ((u_0\geq 0 \land u_1 \geq 0 \land u_0 + u_1 = 1 ) \Rightarrow ((u_0\cdot \distribution_0' + u_1\cdot \distribution_1') \vDash \postcondition)),
$
 checks that all convex combinations of final ensembles satisfy the postcondition. 
Since the operator computing a final ensemble is linear, and every initial ensemble is a convex combination of $\distribution_0$ and~$\distribution_1$, every final ensemble must be a convex combination of $\distribution_0'$ and $\distribution_1'$.
\label{ex:verify_lin}
\end{example}

\noindent
As was shown previously, we can compute $\getFinalEnsemble(\markovChain, \langle\QProgram, \distribution\rangle)$  with arbitrary precision for every corner $\distribution$.
The precision required for verifying linear preconditions is bounded similarly, and depends on the length of the program and the precision needed for defining the corners of each polytope in the precondition.

\section{Bounded Synthesis of Quantum Programs for Noisy Hardware}
\label{sec:synthesis}

{While in the previous section we considered the bounded-horizon verification of quantum programs, in this section we consider the automatic synthesis of loop-free quantum programs.
In particular, given a precondition and a postcondition, we synthesize a loop-free quantum program  with respect to a hardware specification and an instruction set, such that the number of executed instructions is bounded by a given horizon.}
A hybrid program is \textit{deterministic} if it contains no probabilistic statements $(\oplus_p)$; 
otherwise, it is probabilistic.
An {\em instruction set} $\finiteSetPrograms$ is a set of loop-free deterministic hybrid programs which may contain atomic instructions as well as more complex constructs. 
In our experiments, we use the following instruction sets (different instructions are separated by commas):
\begingroup \scriptsize
\begin{equation}
\begin{aligned}
&\CXHInstructionSet = \{\;
  \hadamardGate([\qVar_1]),\;\;  
  \hadamardGate([\qVar_2]),\;\;  
  \cVar_2 := \texttt{measure}(\qVar_2),\;\;
  \CXGate([\qVar_2, \qVar_1]),\;\;
  \CXGate([\qVar_0, \qVar_1])
\;\},\\
&\IPMAInstructionSet = 
\{\;
  \cVar_2 := \texttt{measure}(\qVar_2),\;\;
  \CXGate([\qVar_0, \qVar_2]),\;\;
  \CXGate([\qVar_1, \qVar_2]),\;\;
  \notGate([\qVar_0])
\;\}, \\
&\IPMAInstructionSetPrime = 
\{\;
  \cVar_2 := \texttt{measure}(\qVar_2),\;\;
  \CXGate([\qVar_0, \qVar_2]);\;
  \CXGate([\qVar_1, \qVar_2]),\;\;
  \notGate([\qVar_0])
\;\},
\quad\quad
\ResetInstructionSet = 
\{\;
 \cVar_0 := \texttt{measure}(\qVar_0),\;\;
 \notGate([\qVar_0])\;
\},
\\
&\MZMXInstructionSet = 
\{\;
  \hadamardGate([\qVar_0]),\;\;
  \cVar_0 := \texttt{measure}(\qVar_0),\;\;
  \cVar_0 := 0,\;\;
  \cVar_0 := 1
\;\}, 
 \quad\quad  
\LocalBellInstructionSet =
\{\;
  \CXGate([\qVar_1, \qVar_2]),\;\;
  \hadamardGate([\qVar_1]),\;\;
  \cVar_0 := \texttt{measure}(\qVar_0)
\;\}.
\end{aligned}
\end{equation}
\endgroup
The programs we synthesize may contain branching and probabilistic $\oplus_p$ statements.
We refer to these programs as \emph{candidate programs}; their grammar 
for a given instruction set $\finiteSetPrograms$ is
{\small\[
    \candidateProgram  ::=   \finiteSetPrograms \mid \candidateProgram_0;\candidateProgram_1 \mid  \texttt{if(} \cVar\texttt{) }\{\candidateProgram_0\}\texttt{ else }\{\candidateProgram_1\} \mid  \candidateProgram_0 \oplus_\probVal \candidateProgram_1.
\]}
We write $\candidatePrograms_\finiteSetPrograms$ for the set of candidate programs over the instruction set $\finiteSetPrograms$.
The {\em length} of a candidate program is defined inductively by
{
\begin{equation*}
    \lengthProgram(\QProgram) = \begin{dcases*}
        1 &if $\QProgram\in\finiteSetPrograms$, \\
        \lengthProgram(\QProgram_0) + \lengthProgram(\QProgram_1) &if $\QProgram:=\QProgram_0;\QProgram_1$, \\
        max(\lengthProgram(\QProgram_0), \lengthProgram(\QProgram_1)) & if $\QProgram := \texttt{if (} \cVar \texttt{) } \{\QProgram_0\} \texttt{ else } \{\QProgram_1\},$
        or $\QProgram := \QProgram_0 \oplus_p \QProgram_1$. \\
    \end{dcases*}
\end{equation*}
}
Given an 
 ensemble $\distribution \in \allProbDistributions{\allHybridStates}$, the set of possible {\em next} instructions to be executed is defined inductively for a candidate program by
\begingroup
\footnotesize
\begin{equation}
    \nextIns(\QProgram, \distribution) = \begin{dcases*}
        \{\QProgram\} &if $\QProgram \in \finiteSetPrograms$, \\
        \nextIns(\QProgram_0, \distribution) &if $\QProgram  := \QProgram_0;\QProgram_1$, \\
        \bigcup\{\nextIns(\QProgram_\bit, \distribution\downarrow^\cVar_\bit) \mid \bit \in \{0,1\} \text{ and } \distribution\downarrow^\cVar_\bit\in \allProbDistributions{\allHybridStates}\} &if $\QProgram := \texttt{if (} \cVar \texttt{) } \{\QProgram_1\} \texttt{ else } \{\QProgram_0\}$, \\
        \nextIns(\QProgram_0, \distribution) \cup \nextIns(\QProgram_1, \distribution) &if $\QProgram := \QProgram_0 \oplus_{\probVal} \QProgram_1$. \\
    \end{dcases*}
\end{equation}
\endgroup

{\em Admissible programs.} 
An {\em instruction guard} for an instruction set $\finiteSetPrograms$ is a function $\instructionGuard: \finiteSetPrograms \rightarrow \DisAssertion$ which maps each instruction $\synthesisInstruction\in \finiteSetPrograms$ to an ensemble assertion that specifies when $\synthesisInstruction$ can be applied.
For example, in order to preserve the entanglement of certain qubits, 
an instruction guard can be used to prevent synthesized programs from performing destructive measurements.

Given an instruction set $\finiteSetPrograms$ with an instruction guard $\instructionGuard$ for $\finiteSetPrograms$,
a candidate program $\QProgram\in\candidatePrograms_\finiteSetPrograms$ is {\em admissible} in an ensemble $\distribution\in\allProbDistributions{\allHybridStates}$
iff for all $\horizon \geq 0$, and reachable configurations $\langle\QProgram', \distribution'\rangle \in \ReachableConfigs^\horizon(\langle\QProgram, \distribution\rangle)$, 
and for all instructions $\synthesisInstruction \in \nextIns(\QProgram', \distribution')$, we have  $\distribution' \vDash \instructionGuard(\synthesisInstruction)$.
The program $\QProgram$ is {\em admissible} with respect to a precondition $\precondition$ iff $\QProgram$ is admissible in $\distribution$ for all initial ensembles $\distribution \in\sem{\precondition}$.
For linear preconditions $\precondition$,
we can check admissibility by verifying that $\QProgram$ is admissible in all ensembles $\distribution \in \operatorname{conv}(\quantumPolytope)$,
for all polytopes $\quantumPolytope \in \preconditionPolytopes(\precondition)$.
We do this iteratively up to the length of $\QProgram$ using the verification method from
Subsection~\ref{subsec:verify_lin} for linear preconditions. 

\begin{problem}[Noise-aware bounded synthesis] 
Given a hardware specification $\quantumHardware$, 
an instruction set $\finiteSetPrograms$
with an instruction guard $\instructionGuard$,
a horizon $\horizon\ge 0$,
and two ensemble assertions $\precondition$ and~$\postcondition$,
find a candidate program $\QProgram\in\candidatePrograms_\finiteSetPrograms$ 
such that (i)~$\QProgram$ has length at most $\horizon$, (ii)~$\QProgram$ is admissible with respect to the precondition $\precondition$, and 
(iii)~the hardware-dependent Hoare triple $\{\precondition\}\QProgram^\quantumHardware\{\postcondition\}$ is valid.
\label{problem:linear_program_synthesis}
\end{problem}

\noindent
In the following, 
we show how increasingly specialized cases of the noise-aware bounded-synthesis problem 
can be solved increasingly efficiently.
The final, most specialized case of a singular precondition and target postcondition is based on the method of~\cite{Muroya25}.

\subsection{Bounded synthesis for linear preconditions and arbitrary postconditions}
In the following, we fix a hardware specification $\quantumHardware$, instruction set $\finiteSetPrograms$,
instruction guard $\instructionGuard$ for $\finiteSetPrograms$, and horizon $\horizon\ge 0$.
To synthesize a hybrid program that satisfies a Hoare triple with linear precondition $\precondition$ and arbitrary postcondition $\postcondition$,
we first enumerate all deterministic candidate programs of length at most~$\horizon$ which are ``dominant,''
and compute the final ensemble each program reaches when starting from any of 
the finitely many corners of the polytopes in $\preconditionPolytopes(\precondition)$.
Second, we determine the probability with which each deterministic candidate program is executed
so that the final ensemble satisfies the postcondition~$\postcondition$.

For the first step, define a finite set $\numAlgorithms$ of deterministic candidate programs. 
Since the horizon does not directly constrain the number of conditional instructions in a program, we choose one representative from each class of finitely many equivalence classes of branching programs with length~$\horizon$.
For an ensemble $\distribution$, let $\ensembleCstates(\distribution) = \{\classicalState \in \allClassicalStates \mid (\ket{\qubit}, \classicalState) \in \supp(\distribution)\}$
be the corresponding set of classical states. 
For an ensemble $\distribution$ and $0\le l\le \horizon$, define the set $\DCandidates(\distribution, l)$
of programs inductively, with $\DCandidates(\distribution, 0) = \{\halt\}$ and 
{
\begin{equation}
\begin{aligned}\DCandidates(\distribution, l+1) =& \DCandidates(\distribution, l)\  \cup \left\{ \synthesisInstruction; \QProgram ~\big|~~~ \synthesisInstruction\in\finiteSetPrograms\land|\ensembleCstates(\sem{\synthesisInstruction}_\quantumHardware(\distribution))| = 1 \land \QProgram \in \DCandidates(\sem{\synthesisInstruction}_\quantumHardware(\distribution), l)\right\}\ \cup \\
    &\left\{ \synthesisInstruction; \QProgram ~\big|~~~ \synthesisInstruction\in\finiteSetPrograms\land |\ensembleCstates(\sem{\synthesisInstruction}_\quantumHardware(\distribution))| > 1\land \QProgram \in \IfElseBlocks(\sem{\synthesisInstruction}_\quantumHardware(\distribution), l)  \right\}.
    \end{aligned}
\end{equation}
}
For an ensemble $\distribution$ and $0\le l\le \horizon$, the auxiliary function $\IfElseBlocks(\distribution, l)$ returns all possible conditional $\texttt{if-else}$ blocks that branch using only classical states in the support of $\distribution$: if $|\ensembleCstates(\distribution)| = 1$, then $\IfElseBlocks(\distribution, l) = \DCandidates(\distribution, l)$; otherwise 
{
\begin{equation}
\begin{aligned}
   \IfElseBlocks&(\distribution,l) = 
    \bigcup \{
        \texttt{if} (\overrightarrow{\cVar} = \classicalState(\overrightarrow{\cVar})) \{
            \QProgram_0
        \}\texttt{ else } \{
            \QProgram_1
        \} \mid \\
        &\classicalState \in \ensembleCstates(\distribution) \land \QProgram_0 \in  \DCandidates(\distribution\downarrow^{\overrightarrow{\cVar}}_\classicalState, l) \land  
        \QProgram_1 \in \IfElseBlocks(\distribution\downarrow^{\overrightarrow{\cVar}}_{\neg\classicalState}, l)
    \}
\end{aligned}
\end{equation}
}
where the ensemble $\distribution\downarrow^{\overrightarrow{\cVar}}_\classicalState$ denotes the normalized subensemble of hybrid states with classical state $\classicalState$, and $\distribution\downarrow^{\overrightarrow{\cVar}}_{\neg\classicalState}$ the one with classical state different from $\classicalState$.
Then $\numAlgorithms = \{\QProgram \in \bigcup_{\quantumPolytope \in \preconditionPolytopes(\precondition)}\bigcup_{\distribution \in \quantumPolytope} \DCandidates(\distribution, \horizon)\mid \QProgram \text{ is admissible w.r.t } \precondition\}$ 
is the finite set of deterministic programs that we consider for synthesis.

\begin{lemma}Given a hardware specification $\quantumHardware$, 
an instruction set $\finiteSetPrograms$ with instruction guard $\instructionGuard$,
a horizon $\horizon$,
a linear ensemble assertion $\precondition$,
and an ensemble assertion $\postcondition$,
if there exists a candidate program $\QProgram \in \candidatePrograms_\finiteSetPrograms$ such that (i)~$\QProgram$ has length at most $\horizon$, (ii) $\QProgram$ is admissible w.r.t.\ $\precondition$, and (iii)~$\{\precondition\}~\QProgram^\quantumHardware~\{\postcondition\}$ is valid, 
then there exists a convex combination of deterministic programs $\QProgram_i \in \numAlgorithms$ with real coefficients $\probVal_i\in [0,1]$ such that $\{\precondition\}~(\bigoplus_{\probVal_i}\QProgram_i)^\quantumHardware~\{\postcondition\}$ is valid.
\label{lemma:cvx-det-programs}
\end{lemma}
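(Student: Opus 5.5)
The plan is to show two things about a witness $\QProgram$ given by hypotheses (i)–(iii). First, $\QProgram$ is semantically a fixed convex combination of deterministic programs of length at most $\horizon$. Second, each of these deterministic programs can be replaced by a member of $\numAlgorithms$ that has the same semantics on every ensemble satisfying $\precondition$. Validity of $\{\precondition\}~(\bigoplus_{\probVal_i}\QProgram_i)^\quantumHardware~\{\postcondition\}$ then follows, because the final ensemble is unchanged for every $\distribution\vDash\precondition$.

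\emph{Step 1 (linearity).} I first prove, by structural induction on loop-free programs, that $\sem{\QProgram}_\quantumHardware$ extends to a linear map on subensembles. Atomic instructions act pointwise on hybrid states. Sequencing composes linear maps. The probabilistic choice is a convex combination of linear maps. For the conditional, $w_\bit\cdot(\distribution\downarrow^{\cVar}_\bit)$ is simply the restriction of $\distribution$ to $\distribution[\cVar=\bit]$, which is linear in $\distribution$. This is the affine property already used for Theorem~\ref{thm:linear-prec-verification}, and I reuse it freely.

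\emph{Step 2 (pushing $\oplus_\probVal$ outward).} By induction on $\QProgram\in\candidatePrograms_\finiteSetPrograms$, I show there are finitely many deterministic candidate programs $D_j$ and weights $\probVal_j$ with $\sum_j\probVal_j=1$ such that $\sem{\QProgram}_\quantumHardware=\sum_j\probVal_j\sem{D_j}_\quantumHardware$, and each $D_j$ satisfies $\lengthProgram(D_j)\le\lengthProgram(\QProgram)$. The inductive cases are as follows.
\begin{itemize}
\item For $\QProgram_0;\QProgram_1$, take all products $D^0_a;D^1_b$ with weights $\probVal_a\probVal_b$. This uses linearity of $\sem{D^1_b}$ from Step 1, and the length bound is additive.
\item For a conditional, take all pairs $\texttt{if}(\cVar)\{D^0_a\}\texttt{ else }\{D^1_b\}$, which again uses that the semantics is a $w$-weighted sum.
\item For $\oplus_\probVal$, concatenate the two decompositions.
\end{itemize}
In each case the length is a max or a sum of the component lengths, so it does not increase. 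Admissibility passes to each $D_j$, because $\nextIns$ of an $\oplus$ node is the union of $\nextIns$ of its branches, and the reachable configurations of $D_j$ are among those of $\QProgram$.

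\emph{Step 3 (normalizing deterministic programs into $\DCandidates$).} Given a deterministic admissible $D$ of length at most $\horizon$, I construct a program in $\DCandidates(\distribution,\horizon)$ with the same semantics on the polytope. I peel off the first instruction $\synthesisInstruction$. If the successor ensemble has a single classical state, every conditional is decided, so I prune the untaken branches and recurse. Otherwise, I re-express the remaining code as a chain of $\texttt{if}(\overrightarrow{\cVar}=\classicalState(\overrightarrow{\cVar}))$ tests over $\ensembleCstates$, putting in each branch the residual program obtained by fixing the classical state, and recurse; this is exactly the shape generated by $\IfElseBlocks$.

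\emph{Main obstacle.} I expect the hard part to be that $\DCandidates(\distribution,\cdot)$ depends on a single corner $\distribution$, whereas $D$ must behave correctly on every convex combination of corners. A program normalized with respect to $\distribution_0$ may prune branches that are live for another corner $\distribution_1$. To handle this I would first use linearity (Step 1) to show that the classical states reachable after each prefix from any ensemble in $\operatorname{conv}(\quantumPolytope)$ are exactly the union of those reachable from the corners. I would then argue that the if-chain normal form, built over this union, is itself generated by $\DCandidates$ from the corner whose support realizes the branching. If this fails, I would fall back on the observation that each $D$ agrees with its corner-wise normal form on that corner, and combine this with linearity.

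Finally, I keep only admissible programs, which matches the definition of $\numAlgorithms$. Linearity of the semantics then carries validity from $\QProgram$ to $\bigoplus_{\probVal_i}\QProgram_i$.
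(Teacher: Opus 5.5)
Your Steps 1 and 2 are sound, and in substance they are all of the paper's proof. The paper only appeals to Kuhn's theorem: $\oplus_\probVal$ inside a program is read as a behavioral strategy, and an outer mixture of deterministic programs as a mixed strategy. That is exactly your induction pushing $\oplus_\probVal$ outward. One small fix there: discard resolutions of weight $0$, since the unchosen branch of $\oplus_0$ or $\oplus_1$ is unreachable and so is not covered by the admissibility of $\QProgram$. The paper does not justify that the resulting deterministic programs can be taken from $\numAlgorithms$; it only asserts that $\DCandidates$ picks one representative per equivalence class. That is your Step 3, and the gap there cannot be closed: with $\numAlgorithms$ built from $\DCandidates$ at the corners, the statement is false.

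Here is a counterexample. Take noise-free hardware, qubits $\qVar_0,\qVar_1$, and one classical variable $\cVar$. Let $\finiteSetPrograms=\{\cVar:=\texttt{measure}(\qVar_0),\ \notGate([\qVar_1])\}$ with trivial guards and $\horizon=3$. Let $\precondition:=\precondition_{\distribution_0}\oplus\precondition_{\distribution_1}$ with $\distribution_0=\{(\ket{00},\classicalState_0):1\}$, $\distribution_1=\{(\ket{11},\classicalState_0):1\}$ and $\classicalState_0(\cVar)=0$, and let $\postcondition:=\probability([\qVar_1]=\ket{0})=1$.

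The program $\cVar:=\texttt{measure}(\qVar_0);\ \texttt{if}(\cVar)\{\notGate([\qVar_1])\}\texttt{ else }\{\notGate([\qVar_1]);\notGate([\qVar_1])\}$ has length $3$, and the triple is valid. On $\alpha\distribution_0+(1-\alpha)\distribution_1$, the $\distribution_0$-part takes the else-branch and the $\distribution_1$-part the then-branch, and both end with $\qVar_1$ in $\ket{0}$.

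However, from either corner every reachable ensemble has a single classical state, so $\DCandidates(\distribution_j,3)$ contains only straight-line programs. A program that applies $\notGate$ $m$ times succeeds on $\distribution_0$ iff $m$ is even, and on $\distribution_1$ iff $m$ is odd. Hence for any mixture the two corner success probabilities sum to $1$, and $\postcondition$ fails at some corner.

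The same example shows why both of your remedies fail. First, no corner realizes the union of the post-measurement classical supports: $\distribution_0$ gives only $\cVar=0$ and $\distribution_1$ only $\cVar=1$. Second, the fallback confuses choosing a program per corner with a single mixture fixed before the initial ensemble is chosen. On $\distribution_j$, a mixture $\bigoplus_{\probVal_i}N_i$ of corner-wise normal forms yields $\sum_i\probVal_i\sem{N_i}_\quantumHardware(\distribution_j)$, not $\sem{D}_\quantumHardware(\distribution_j)$. Linearity decomposes the initial ensemble, not the program. Separately, peeling off a first instruction fails when an initial ensemble has several classical states, because every program in $\DCandidates$ is $\halt$ or starts with an instruction.

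What your Steps 1--2 do establish is the lemma with $\numAlgorithms$ replaced by the admissible deterministic candidate programs of length at most $\horizon$. Your first idea becomes correct if the normal form is computed not at a corner but at a relative-interior point of each polytope, e.g.\ the barycenter of its corners. By linearity, each conditioned successor ensemble from the barycenter is a positive combination of those from the corners reaching that branch, so its classical support is exactly their union. Those programs lie in $\DCandidates$ of the barycenter, not of a corner. So $\numAlgorithms$ would have to be redefined accordingly, and would also need to allow an initial if-block.
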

\noindent
{Lemma}~\ref{lemma:cvx-det-programs} follows {from Kuhn's theorem}~\cite{Kuhn53} in a game-theoretical setting.
The theorem states a formal equivalence between mixed strategies (i.e., strategies that assign probabilities to deterministic strategies) and behavioral strategies (i.e., strategies that assign probabilities to actions at each decision point).
Kuhn's theorem shows that, over finite horizons, every mixed strategy has an equivalent behavioral strategy that yields the same outcome (and vice versa) whenever the adversary does not have access to the probabilities of mixed strategies.

We find the convex coefficients $\probVal_i$ of deterministic programs by encoding the problem in the first-order theory of real-closed fields. 
{However, we do not need to enumerate all programs up to length $k$, as we are interested only in the final ensembles of deterministic programs.}
{Given two candidate programs} $\QProgram_0, \QProgram_1 \in \numAlgorithms$, we say that $\QProgram_1$ dominates $\QProgram_0$ with respect to a precondition~$\precondition$, if each final ensemble for program $\QProgram_1$ is also a final ensemble for $\QProgram_0$.
Formally, let 
$\QProgram_0 \leq_{\precondition} \QProgram_1$ if $\{\sem{\QProgram_1}_\quantumHardware(\distribution) \mid  \distribution \in \sem{\precondition}\} \subseteq  \{\sem{\QProgram_0}_\quantumHardware(\distribution) \mid  \distribution \in \sem{\precondition}\}$,
and $\QProgram_0 \simeq \QProgram_1$ if both $\QProgram_0 \leq_{\precondition} \QProgram_1$ and $\QProgram_1 \leq_{\precondition} \QProgram_0$.
A program $\QProgram\in\numAlgorithms$ is $\phi$-{\em dominant} if for all programs $\QProgram' \in \numAlgorithms$, either $\QProgram' \leq_\precondition \QProgram$ or $\QProgram \not\leq_\precondition \QProgram'$.
A set $\numAlgorithms^\precondition \subseteq \numAlgorithms$ of $\phi$-dominant programs is {\em complete} if for every program $\QProgram \in \numAlgorithms$, there exists a program $\QProgram' \in \numAlgorithms^\precondition$ such that $\QProgram \simeq \QProgram'$.
Dominance can be determined using the corners of polytopes, 
and a complete set of dominant programs can be found incrementally considering smaller horizons first.

For each polytope $\quantumPolytope \in \preconditionPolytopes(\precondition)$, the set of ensembles reached by executing a program $\QProgram$ is the convex closure $\quantumPolytope'=\operatorname{conv}(\{ \sem{\QProgram}_\quantumHardware(\distribution) \mid  \distribution \in \quantumPolytope\})$ whose extremes are the ensembles that are reached from the corners of $\quantumPolytope$; see Section~\ref{appendix:verification} of the Appendix for more details.
Therefore, we need to check that all ensembles $\distribution' \in \quantumPolytope'$ satisfy the postcondition~$\postcondition$.
 Let $\numAlgorithms^\precondition = \{\QProgram_0, \ldots, \QProgram_{w-1}\}$ be any complete set of $\precondition$-dominant deterministic quantum programs. 
 The desired probabilistic program $\bigoplus_{x_i}P_i$
 mixes the $w$ deterministic programs $P_i$ using probability coefficients $x_i$, which we compute using the first-order formula
{\begin{equation}
    \exists x_0,\ldots,x_{w - 1}. \left(\sum_{i=0}^{w-1} x_i = 1 \right) \land \left(\forall 0\leq i < w.\; x_i \geq 0 \right) \land \bigwedge_{\quantumPolytope \in \preconditionPolytopes(\precondition)} F(\quantumPolytope),
    \label{eq:linear-synthesis-arbitrary}
\end{equation}}
where $F(\quantumPolytope)$ is a first-order formula built as follows. 
Given a polytope $\quantumPolytope = \operatorname{conv}(\{\distribution_0, \ldots, \distribution_{N-1}\})$ with $N$ corners,
let $\distribution_{ij} = \getFinalEnsemble(EG(\quantumHardware), \langle\QProgram_i, \distribution_j \rangle)$ be the ensemble reached by executing some dominant program $\QProgram_i\in \numAlgorithms^\precondition$ when starting from some corner $\distribution_j \in \quantumPolytope$. 
The following formula introduces $N$ fresh variables $\alpha_j$ to express convex combinations of ensembles:
{
\begin{equation}
\begin{aligned}
    F(\quantumPolytope) := \forall \alpha_{0},
    \ldots ,\alpha_{N-1}.  \left(\sum_{j=0}^{N-1} \alpha_j = 1\right) \land \left(
    \bigwedge_{0 \leq j < N}\alpha_{j} \geq 0 \right) \Rightarrow  \left(\left(\sum_{0 \leq i< w} x_i \cdot \sum_{0 \leq j < N}\alpha_{j}\cdot \distribution_{ij}\right)  \vDash \postcondition\right)
\end{aligned}
\end{equation}
}%
We can decide Equation~\ref{eq:linear-synthesis-arbitrary}---and by doing so determine the probability coefficient $x_i$ of each deterministic program~$P_i$---using the function $\textsc{realForm}$ to verify that for each polytope 
$\quantumPolytope \in \preconditionPolytopes(\precondition)$,
the corresponding symbolic ensemble $\sum_{0 \leq i< w} x_i \cdot \sum_{0 \leq j < N}\alpha_{j}\cdot \distribution_{ij}$
satisfies the postcondition~$\psi$.

\begin{example} Consider again the linear precondition $\phi$ from Example~\ref{ex:nql_example}, which consists of a single polytope $\quantumPolytope$ whose corners are the two ensembles $\distribution_0$ and $\distribution_1$ (cf.\ Example~\ref{ex:verify_lin}).
Suppose that 
$\numAlgorithms^\phi = \{\QProgram_0, \QProgram_1, \QProgram_2 \}$ 
is a complete set containing (say) three $\phi$-dominant deterministic candidate programs.
We first compute the final ensemble 
$\distribution_{ij} = \sem{\QProgram_i}_{\quantumHardware}(\distribution_j)$
that is reached from each corner $\distribution_j$ for each candidate program $\QProgram_i$.
Then Equation~\ref{eq:linear-synthesis-arbitrary} translates to
$
    \exists x_0,x_1,x_2. \left(x_0 + x_1 + x_2= 1 \right) 
    \land \left(x_0\geq 0 \right) \land
    \left(x_1\geq 0 \right) 
    \land \left(x_2\geq 0 \right)
    \land F(\quantumPolytope)
$,
where 
$
    \small
    F(\quantumPolytope) = \forall \alpha_0, \alpha_1. \left((\alpha_0 + \alpha_1 = 1) \land \left(\alpha_0 \geq 0 \right)
     \land \left(\alpha_1 \geq 0 \right)\right)\Rightarrow  \left(\left(\sum_{0 \leq i< 3} x_i \cdot \sum_{0 \leq j < 2}\alpha_{j}\cdot \distribution_{ij}\right)  \vDash \postcondition\right)
$
for the postcondition~$\psi$.
The symbolic ensemble $\sum_{0 \leq i< 3} x_i \cdot \sum_{0 \leq j < 2}\alpha_{j}\cdot \distribution_{ij}$ corresponds to the final ensemble that the probabilistic program $\QProgram_0 \oplus_{x_1 + x_2} (\QProgram_1 \oplus_{x_2/(x_1+x_2)} \QProgram_2)$ reaches from the initial ensemble $\alpha_0\cdot\distribution_0 + \alpha_1\cdot\distribution_1$.
\label{ex:synth_arbitrary}
\end{example}

\subsection{Bounded synthesis for linear preconditions and target postconditions}
\label{subsec:synth_lin_target}
Given a state assertion $\stateAssertion \in \StatesAssertion$ and a probability value $\probVal$, let
$\postcondition(\stateAssertion, \probVal) := \probability(\bigvee_{(\ket{\qubit}, \classicalState) \in \sem{\varphi}} ([\overrightarrow{\qVar}] = \ket{\qubit} \land \overrightarrow{\cVar} = \classicalState(\overrightarrow{\cVar}))) \geq \probVal$ 
be the ensemble assertion that specifies the ensembles in which the hybrid states in $\sem{\stateAssertion}$ have cumulative probability at least~$\probVal$.
An ensemble assertion $\postcondition$ is a {\em target assertion} if there exists a state assertion $\stateAssertion$ and a probability $\probVal$ such that $\postcondition$ has the form $ \postcondition(\stateAssertion, \probVal)$.
The set $\sem{\varphi}$ is called the corresponding {\em target set} of hybrid states.
While arbitrary postconditions require reasoning over all symbolic convex combinations of ensembles, target postconditions simplify the task because they define convex sets of ensembles.
If two ensembles each assign at least probability 
$\probVal$ to the target set, then any convex combination also assigns at least 
$\probVal$ to the target set. 

In this and the subsequent subsection, we fix a state assertion~$\varphi$.
According to the synthesis problem, we wish to find a program that reaches a target state in $\sem{\varphi}$ 
with a given probability~$p$.
However, for bounded reachability of the target set $\sem{\varphi}$ we can do better:
we can optimize the probability value $p$ and synthesize the program that reaches 
$\sem{\varphi}$ with maximal worst-case probability,
taken over all initial ensembles that satisfy the precondition.
Given a precondition $\precondition$, 
a program $\QProgram$ is {\em optimal} for the target $\varphi$ iff 
(i) $\QProgram$ has length at most $\horizon$,
(ii) $\QProgram$ is admissible w.r.t.\ $\precondition$, 
(iii) there exists a probability $p\geq 0$ such that $\{\precondition\}\QProgram^\quantumHardware\{\postcondition(\varphi,p)\}$, 
and there is no program $P'$ satisfying (i) and (ii) where the triple
$\{\precondition\}\QProgram'^\quantumHardware\{\postcondition(\varphi,p')\}$ is valid for some $p'>p$. 

We must refine the notion of dominance for target postconditions.
Consider a linear precondition $\phi$ and a target postcondition $\psi(\varphi,p)$.
For two candidate programs $\QProgram_1,\QProgram_2\in\numAlgorithms$, 
let $\QProgram_1 \leq_{\phi,\stateAssertion} \QProgram_2$ if for all probabilities $\probVal$ such that $\{\precondition\}\QProgram_1^\quantumHardware\{\postcondition(\varphi,p)\}$ is valid, also $\{\precondition\}\QProgram_2^\quantumHardware\{\postcondition(\varphi,p)\}$ is valid. 
A program $\QProgram\in\numAlgorithms$ is $(\phi,\varphi)$-{\em dominant} if for all programs $\QProgram' \in \numAlgorithms$, either $\QProgram' \leq_{\precondition,\varphi} \QProgram$ or $\QProgram \not\leq_{\precondition,\varphi} \QProgram'$.
Let $\numAlgorithms^{\phi,\stateAssertion}$ be a complete set 
containing $w$ different $(\phi,\stateAssertion)$-dominant programs.
Since minima are attained at the corners, we can compute a
suitable set $\numAlgorithms^{\phi,\stateAssertion}$ by
incrementing the horizon and evaluating only the corners of each polytope of the precondition.

The optimal program chooses each deterministic candidate program $\QProgram_i$ in $\numAlgorithms^{\phi,\stateAssertion}$  
with probability~$\probVal_i$. 
For a linear precondition $\precondition$ and target assertion $\varphi$, we can compute the optimal values $p_i$ by linear programming. 
The initial polytope $\quantumPolytope$ and ensemble in $\quantumPolytope$ are chosen adversarially. 
We find the convex coefficients 
$\overrightarrow{x} = [x_0, \ldots, x_{w - 1}]$
for the $w$ candidate programs, and a probability vector
$\overrightarrow{y} = [y_0, \dots, y_{\lenEnsembleSet - 1}]$
over initial ensembles,
where $\lenEnsembleSet = |\bigcup \preconditionPolytopes(\precondition)|$ is the total number of corners, 
by solving a maximin problem of the form
$\max_{\overrightarrow{x}} \min_{\overrightarrow{y}} \; \overrightarrow{x} \cdot M \cdot \overrightarrow{y}^T$.
The matrix \( M \in [0,1]^{w \times \lenEnsembleSet} \) stores the probabilities that we reach a target state in $\sem{\stateAssertion}$ by executing the program $\QProgram_i$ from the initial ensemble $\distribution_j \in \bigcup_{\quantumPolytope \in\preconditionPolytopes(\precondition)} \quantumPolytope$,
namely, $M_{ij} = \sum_{\hybridState \vDash \stateAssertion}\getFinalEnsemble(EG(\quantumHardware), \langle\QProgram_i, \distribution_j\rangle)(h)$.
Since the postcondition $\psi(\varphi,p)$ is convex and minima are attained at corners, we obtain the linear program 
{
\begin{equation}
\begin{array}{rl}
LP(\precondition, \stateAssertion) := \text{ maximize} \quad \probVal  \quad
\text{subject to} & \sum\limits_i x_i = 1,
                  \quad x_i \geq 0 \quad \text{for all } 0 \leq i < w, \\
                  & \sum\limits_i x_i \cdot M_{ij} \geq \probVal \quad \text{for all } 0 \leq j < \lenEnsembleSet.
\end{array}
\label{eq:linear_program}
\end{equation}
}

\begin{example}[State discrimination] 
We illustrate the need for probabilistic programs using the discrimination of the nonorthogonal states $\ket{0}$ and $\ket{+}$, a foundational task in quantum benchmarking~\cite{Pusey12} and quantum cryptography~\cite{Bennett92}. 
The goal is to determine with maximal probability if an unknown initial state stored in the quantum variable $\qVar_0$ is $\ket{0}$ or $\ket{+}$, and to record the guess in a classical variable as $\cVar_0=0$ or $\cVar_0=1$, respectively. 
A reference copy of $q_0$ is kept, invisible to the program, in $\qVar_1$ to define the target assertion $\stateAssertion := ([\qVar_1]=\ket{0} \land [\cVar_0] = 0) \lor ([\qVar_1]=\ket{+} \land [\cVar_0] = 1)$.
The precondition $\precondition_{d} := (\probability([\qVar_0\qVar_1]=\ket{00}\land [\cVar_0] = 0) = 1) \oplus (\probability([\qVar_0\qVar_1] = \ket{{+}{+}} \land [\cVar_0] = 0) = 1)$ forms a single polytope with two corners: the initial ensemble $\distribution_0$ corresponds to the initial state $\ket{0}$, and $\distribution_1$ corresponds~$\ket{+}$. 

Consider a noise-free hardware and two deterministic candidate programs $\numAlgorithms^{\phi_d,\stateAssertion} = \{\QProgram_0, \QProgram_1\}$ that measure $q_0$ in the computational and Hadamard bases, respectively.
This is a complete set of $(\phi_d,\varphi)$-dominant programs.
The program $\QProgram_0$ succeeds with probability $1$ on the initial ensemble~$\distribution_0$, but only with probability $0.5$ on~$\distribution_1$, while $\QProgram_1$ succeeds with probability $1$ on~$\distribution_1$, but only with probability $0.5$ on~$\distribution_0$. 
Thus, the worst-case success probability of either deterministic candidate program is~$0.5$. 
However, using the linear program from  Equation~\ref{eq:linear_program}, computing the optimal worst-case probability $\probVal$ under the linear constraints $ x_0 + x_1 = 1$,\; $x_0\geq 0$, \; $x_1 \geq 0$,\; $x_0 + 0.5\cdot x_1 \geq \probVal$, and $0.5\cdot x_0 + x_1 \geq \probVal$, yields a worst-case probability of $\probVal = 0.75$ for $x_0=x_1=0.5$. 
It follows that the optimal program for state discrimination is $\QProgram_0 \oplus_{0.5} \QProgram_1$, which mixes the two deterministic programs with equal probabilities to mitigate their extreme values.
\label{ex:state_discr}
\end{example}

\begin{theorem}[LP Correctness] Given a hardware specification $\quantumHardware$, 
an instruction set $\finiteSetPrograms$ with
an instruction guard $\instructionGuard$,
a horizon $\horizon$,
a linear ensemble assertion $\precondition$,
and a state assertion $\stateAssertion$,
let $\probVal$ and $\overrightarrow{x}$ be the probability value and probability vector found by the linear program $LP(\precondition, \stateAssertion)$.
Then $\{\precondition\}~ (\bigoplus_{x_i}~\QProgram_i)^\quantumHardware~\{\postcondition(\stateAssertion, \probVal)\}$ is valid.
Moreover, 
$\bigoplus_{x_i}~\QProgram_i$ is optimal for the precondition $\precondition$ and target~$\varphi$.
\label{thm:lp_correctness}
\end{theorem}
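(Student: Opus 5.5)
The plan is to prove the two claims separately: validity first, then optimality. Both rest on one fact: the final-ensemble map is affine. This holds in the initial ensemble for each fixed program, and in the mixing coefficients for the combination $\bigoplus_{x_i}\QProgram_i$.

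\emph{Validity.} Fix an initial ensemble $\distribution\vDash\precondition$. By linearity of $\precondition$, $\distribution$ lies in $\operatorname{conv}(\quantumPolytope)$ for some $\quantumPolytope\in\preconditionPolytopes(\precondition)$, so $\distribution=\sum_j\alpha_j\distribution_j$ with $\alpha_j\ge 0$ and $\sum_j\alpha_j=1$. The semantics of $\oplus_\probVal$ in Figure~\ref{fig:semantics_dist_transformer} gives
\[
\sem{\textstyle\bigoplus_{x_i}\QProgram_i}_\quantumHardware(\distribution)=\sum_i x_i\,\sem{\QProgram_i}_\quantumHardware(\distribution),
\]
obtained by unfolding the nested binary $\oplus$ as in Example~\ref{ex:synth_arbitrary}. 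Each $\sem{\QProgram_i}_\quantumHardware$ is affine on ensembles when restricted to one polytope. This is the same convexity argument that underlies Theorem~\ref{thm:linear-prec-verification}, applied inductively to the semantic clauses. The delicate clause is \texttt{if}, whose normalizations multiply back by $w_b$ and therefore stay linear; I would invoke the Appendix result used for that theorem rather than redo it. Together with Lemma~\ref{lemma:final-ensemble}, this gives
\[
\sem{\textstyle\bigoplus_{x_i}\QProgram_i}_\quantumHardware(\distribution)=\sum_i\sum_j x_i\alpha_j\,\distribution_{ij}.
\]
The target probability $\subsetProb(\cdot,\sem{\stateAssertion})$ is linear in the ensemble, so the success probability equals $\sum_j\alpha_j\sum_i x_iM_{ij}$. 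Each inner sum is at least $\probVal$ by the LP constraints, so the whole expression is at least $\probVal$. The triple with postcondition $\postcondition(\stateAssertion,\probVal)$ is therefore valid.

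\emph{Optimality.} Length and admissibility come first. Each $\QProgram_i\in\numAlgorithms$ has length at most $\horizon$ and is admissible, and the $\oplus$ length rule takes a maximum, so the mixture also has length at most $\horizon$. Its $\nextIns$ is the union of the components' next instructions, and the guards hold for each component, so the mixture is admissible. Now suppose $\QProgram'$ satisfies (i) and (ii), and $\{\precondition\}\QProgram'^\quantumHardware\{\postcondition(\stateAssertion,\probVal')\}$ is valid. By Lemma~\ref{lemma:cvx-det-programs}, I would replace $\QProgram'$ by a convex combination $\bigoplus_{\probVal_i'}\QProgram'_i$ of programs in $\numAlgorithms$ that satisfies the same triple. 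This step applies Kuhn's theorem: a behavioral randomization over a finite horizon is realized by mixing deterministic programs. Next, each $\QProgram'_i$ is replaced by a $(\precondition,\stateAssertion)$-dominant representative in the complete set $\numAlgorithms^{\precondition,\stateAssertion}$ that is at least as good. The vector of coefficients then yields a feasible point $\overrightarrow{x}'$ of $LP(\precondition,\stateAssertion)$: evaluating at each corner $\distribution_j$, which itself satisfies $\precondition$, gives $\sum_i x_i'M_{ij}\ge\probVal'$. Since $\probVal$ is the LP maximum, $\probVal'\le\probVal$, so no $\probVal'>\probVal$ is possible.

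\emph{Main obstacle.} The hard step is the replacement by dominant representatives. Dominance $\leq_{\precondition,\stateAssertion}$ compares worst-case values of individual programs. A mixture, however, depends on the full corner vector $(M_{ij})_j$, not only on the worst case, so replacing one component by a program with a better worst case need not preserve the mixture's value at every corner. I would first try to interpret completeness of $\numAlgorithms^{\precondition,\stateAssertion}$ as preserving the corner-vector set, that is, dominance computed corner-wise (the text's remark that it is found by ``evaluating only the corners'' suggests this). With that interpretation, $x'$ transfers directly. Failing that, I would argue optimality relative to $\numAlgorithms$ itself, where the LP argument above is immediate.
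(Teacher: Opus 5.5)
Your proposal follows the paper's proof: validity comes from linearity of the program semantics and of the target probability, so the worst case over each polytope is attained at a corner, where the LP constraints apply; optimality comes from Lemma~\ref{lemma:cvx-det-programs}, which turns any program certified for some $\probVal'>\probVal$ into a feasible point of the LP. On the obstacle you flag, the paper's proof never addresses the pruning and effectively lets the LP range over all admissible deterministic programs of length at most $\horizon$ (your fallback), and your worry is justified: under the literal worst-case reading of $\leq_{\precondition,\stateAssertion}$, pruning loses optimality (the noise-free horizon-1 state-discrimination optimum of 66.667\% mixes in $\cVar_0 := 1$, whose worst-case value is $0$), whereas under the corner-wise reading, replacing each $\QProgram'_i$ by a representative with corner values at least as large can only increase every $\sum_i x'_i M_{ij}$, so your transfer goes through.
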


\noindent
Theorem~\ref{thm:lp_correctness} follows from Lemma~\ref{lemma:cvx-det-programs} and the convex nature of target assertions (further details can be found in  Section~\ref{appendix:synthesis} of the Appendix).
Given a linear precondition $\precondition$ with $N$ corners, 
a postcondition $\postcondition$ specified with precision $\precision$,
and an ensemble graph whose probabilities are described with precision~$\precision$,
the linear program $LP(\precondition, \stateAssertion)$ can be solved exactly by storing numbers with precision $(\lenEnsembleSet+1)\cdot\precision\cdot (\horizon+1) + \frac{\lenEnsembleSet+1}{2} \cdot \log(\lenEnsembleSet+1)$.
This follows from the fact that 
all final ensembles can be stored using precision $\precision\cdot (\horizon+1)$, and 
solvers can guarantee an optimal solution using rational numbers in a finite number of steps~\cite{Bland77}.

\subsection{Bounded synthesis for singular preconditions and target postconditions}
When there is no uncertainty about the choice of the initial ensemble, 
we synthesize optimal programs without enumerating dominant programs.
We use the notion of beliefs from the theory of POMDPs, and Bellman's value-iteration algorithm~\cite{bellman54} to find the optimal program for a given singular precondition and target postcondition.
A {\em quantum belief} $\pomdpDistribution$ is an ensemble over hybrid states such that all hybrid states in its support share the same classical component. Formally, if \(\hybridState = (\ket{\qubit}, \classicalState)\) and \(\hybridState' = (\ket{\qubit'}, \classicalState')\) are in \(\supp(\pomdpDistribution)\), then $\classicalState = \classicalState'$. 
The classical state $c$ of the quantum belief $\pomdpDistribution$ is denoted by $\observableSequence{\pomdpDistribution}$, and the set of quantum beliefs by $\allBeliefs \subseteq \allProbDistributions{\allHybridStates}$.
For unique initial ensembles, the ensemble graph is replaced by the quantum-belief game graph, which is a decision graph and allows a richer analysis.

\begin{definition}[Quantum-Belief Game Graph (QBGG)] For a hardware specification $\quantumHardware$, an instruction set $\finiteSetPrograms$ with instruction guard $\instructionGuard$, the {\em quantum-belief game graph}  $\beliefGraph(\quantumHardware, \instructionGuard, \finiteSetPrograms)=\langle\allBeliefs,\pomdpTransitionFunc \rangle$ is a pair consisting of the set $\allBeliefs$ of quantum beliefs and a stochastic transition function $\pomdpTransitionFunc: \allBeliefs \times \finiteSetPrograms \times \allBeliefs \rightarrow [0,1]$, such that for all beliefs $\pomdpDistribution, \pomdpDistribution' \in \allBeliefs$ and instructions $\synthesisInstruction \in \finiteSetPrograms$, 
{
\begin{equation}    \pomdpTransitionFunc(\pomdpDistribution, \synthesisInstruction, \pomdpDistribution') = \begin{dcases*}          \subsetProb(\sem{\synthesisInstruction}_\quantumHardware(\pomdpDistribution), \supp(\pomdpDistribution')) & \makecell[l]{\textrm{if\ } $\sem{\synthesisInstruction}_\quantumHardware(\pomdpDistribution)\downarrow^{\overrightarrow{\cVar}}_{\observableSequence{\pomdpDistribution'}} = \pomdpDistribution'$, and $\pomdpDistribution \vDash \instructionGuard(\synthesisInstruction)$,}\\
        0 & otherwise.
    \end{dcases*}
\end{equation}
}
\end{definition}
\noindent
Intuitively, given a quantum belief $\pomdpDistribution$, the instruction $\synthesisInstruction$  has a positive probability of producing a successor belief iff $\synthesisInstruction$ is admissible in $\pomdpDistribution$, and 
the successors are the normalized beliefs produced by $\synthesisInstruction$. 
Quantum-belief game graphs allow us to synthesize quantum programs that go beyond straight-line programs, as they use the classical state of a belief to generate \texttt{if-else} blocks.
Since target postconditions are convex, 
they can be verified over beliefs without computing the overall ensemble produced by a program.
 
The maximum probability with which a target state in $\sem{\varphi}$ in $\horizon$ steps is found by the value-iteration algorithm performing $\horizon$ Bellman updates:  
{
\begin{equation}
    \mathcal{V}_{i+1}(\pomdpDistribution) = \max
        \left(\mathcal{V}_{0}(\pomdpDistribution),
        \max_{\synthesisInstruction \in \finiteSetPrograms}\left( \sum_{\pomdpDistribution' \in \allBeliefs}\pomdpTransitionFunc(\pomdpDistribution, \synthesisInstruction, \pomdpDistribution') \cdot \mathcal{V}_{i}(\pomdpDistribution')\right)\right)
    \label{eq:bellman_main}
\end{equation}
}
where $0 \leq i < \horizon$ and $\mathcal{V}_{0}(\pomdpDistribution) = \sum_{\hybridState \vDash \stateAssertion} \pomdpDistribution(\hybridState)$. 
In the general case, if we are given an arbitrary ensemble  $\distribution$ (not a belief), we can start value iteration by adding one extra iteration that partitions $\distribution$ into beliefs: $\mathcal{V}_{\horizon}(\distribution) = \sum_{\classicalState \in \allClassicalStates}\subsetProb(\distribution, \distribution[\overrightarrow{\cVar} = \classicalState(\overrightarrow{\cVar})])\cdot \mathcal{V}_{\horizon}(\distribution\downarrow^{\overrightarrow{\cVar}}_{\classicalState})$.
Finally,
using Equation~\ref{eq:bellman_main} we can synthesize the optimal program $\QProgram_{\pomdpDistribution}^{(i)}$ for the initial belief $\pomdpDistribution$ and target $\varphi$ as follows:
for all $0\le i \le \horizon$,
\begingroup 
\footnotesize
\[
\begin{aligned}
\QProgram_{\pomdpDistribution}^{(i)} =
\begin{dcases}
\texttt{halt} & 
    \text{if } i = 0 \text{ or } 
\mathcal{V}_0(\pomdpDistribution) \ge 
    \sum_{\pomdpDistribution' \in \allBeliefs}
        \pomdpTransitionFunc(\pomdpDistribution, \synthesisInstruction^*, \pomdpDistribution')
        \cdot 
        \mathcal{V}_{i-1}(\pomdpDistribution')\\[0.6em]
\synthesisInstruction^*;
        \texttt{if (}\overrightarrow{\cVar} = \observableSequence{\pomdpDistribution'_0}(\overrightarrow{\cVar})\texttt{)}\ \{\QProgram_{\pomdpDistribution'_0}^{(i-1)} \} \texttt{ else \{}
         & \text{otherwise}\\
        \quad \ldots\;
        \quad\texttt{if (}\overrightarrow{\cVar} = \observableSequence{\pomdpDistribution'_{n-1}}(\overrightarrow{\cVar})\texttt{)} \{\QProgram_{\pomdpDistribution'_{n-1}}^{(i-1)} \} \texttt{ else } \{\skipInstruction\}
    \}\ & 
\end{dcases}
\end{aligned}
\]
\endgroup
where
$\synthesisInstruction^* =
\text{argmax}_{\synthesisInstruction \in \finiteSetPrograms}
\sum_{\pomdpDistribution' \in \allBeliefs}
\pomdpTransitionFunc(\pomdpDistribution, \synthesisInstruction, \pomdpDistribution')
\cdot 
\mathcal{V}_{i-1}(\pomdpDistribution')$ is the instruction leading to an optimal probability, and $\pomdpDistribution_j' \in \supp(\pomdpTransitionFunc(\pomdpDistribution, \synthesisInstruction^*))$ are the corresponding successor beliefs. 
The optimal program $\QProgram^{(\horizon)}_\distribution$ for an arbitrary initial ensemble $\distribution$ starts with a sequence of \texttt{if-else} blocks that partition $\distribution$ into beliefs, and then each \texttt{if-else} block contains the optimal program for the corresponding belief.

\begin{theorem}[Bellman Correctness] 
\label{thm:bellman_correct}
Given a hardware specification $\quantumHardware$, 
an instruction set $\finiteSetPrograms$ with instruction guard $\instructionGuard$, 
a horizon $\horizon$,
a singular ensemble assertion $\precondition$ and 
a state assertion $\stateAssertion$,
the hardware-dependent Hoare triple
$\{\precondition\}~(\QProgram^{(\horizon)}_{\distribution_\precondition})^H~\{\postcondition(\stateAssertion, \mathcal{V}_{\horizon}(\pomdpDistribution_\precondition))\}$ is valid and the hybrid program $\QProgram^{(\horizon)}_{\distribution_\precondition}$ is optimal for the precondition $\precondition$ and target~$\varphi$.
\end{theorem}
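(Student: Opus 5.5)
The proof is a backward induction on the horizon, following the standard argument for finite-horizon value iteration on MDPs. It runs on the quantum-belief game graph $\beliefGraph(\quantumHardware,\instructionGuard,\finiteSetPrograms)$. We prove the following statement for every $0\le i\le\horizon$ and every belief $\pomdpDistribution$:
\begin{itemize}
\item[(a)] the probability mass that $\sem{\QProgram^{(i)}_{\pomdpDistribution}}_\quantumHardware(\pomdpDistribution)$ assigns to $\sem{\stateAssertion}$ equals $\mathcal{V}_i(\pomdpDistribution)$, and $\QProgram^{(i)}_{\pomdpDistribution}$ is admissible in $\pomdpDistribution$ with length at most $i$;
\item[(b)] every candidate program of length at most $i$ that is admissible in $\pomdpDistribution$ reaches $\sem{\stateAssertion}$ with probability at most $\mathcal{V}_i(\pomdpDistribution)$.
\end{itemize}
Afterwards we lift the statement from beliefs to the arbitrary initial ensemble $\distribution_\precondition$ using the extra partitioning step. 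Validity of the Hoare triple is then immediate, because $\sem{\precondition}=\{\distribution_\precondition\}$ is a singleton. Optimality for the singular precondition is exactly (b) applied to $\distribution_\precondition$.

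For (a), the base case $i=0$ is $\texttt{halt}$ (or $\skipInstruction$), which yields $\mathcal{V}_0(\pomdpDistribution)$. In the inductive step, if the halting branch is taken the claim follows from the max in Equation~\ref{eq:bellman_main}. Otherwise we execute $\synthesisInstruction^*$. Its guard holds because $\pomdpTransitionFunc$ is nonzero only when $\pomdpDistribution\vDash\instructionGuard(\synthesisInstruction^*)$. We then decompose $\sem{\synthesisInstruction^*}_\quantumHardware(\pomdpDistribution)$ by classical state as $\sum_j \pomdpTransitionFunc(\pomdpDistribution,\synthesisInstruction^*,\pomdpDistribution'_j)\cdot\pomdpDistribution'_j$. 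Using the semantics of sequencing and \texttt{if-else} from Figure~\ref{fig:semantics_dist_transformer}, the nested conditionals route each $\pomdpDistribution'_j$ to $\QProgram^{(i-1)}_{\pomdpDistribution'_j}$ with weight equal to its probability. The final ensemble is therefore the corresponding mixture, and by linearity of target probability mass and the induction hypothesis its target mass is $\sum_j \pomdpTransitionFunc\cdot\mathcal{V}_{i-1}(\pomdpDistribution'_j)$. Lemma~\ref{lemma:final-ensemble} lets us carry out this computation along the ensemble graph if convenient.

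For (b), take any admissible candidate program $\QProgram$ of length at most $i$. We normalize its structure first: we push probabilistic choices and leading conditionals down to a first instruction, which may involve case splitting. If $\QProgram$ is $\QProgram_0\oplus_p\QProgram_1$, its target mass is a convex combination of the masses of the two branches, and each branch is bounded by the inductive hypothesis at the same level. An \texttt{if-else} on a belief has all support in a single classical state, so only one branch is live, and the recursion proceeds on that branch. Ultimately $\QProgram=\synthesisInstruction;\QProgram'$ with $\synthesisInstruction$ admissible. The successor ensemble splits into beliefs $\pomdpDistribution'_j$ weighted by $\pomdpTransitionFunc$, and $\QProgram'$ acts on each through the normalization in the \texttt{if-else} semantics. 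The part of $\QProgram'$ that runs on $\pomdpDistribution'_j$ has length at most $i-1$ and is admissible there. The induction hypothesis bounds its mass by $\mathcal{V}_{i-1}(\pomdpDistribution'_j)$, and maximizing over $\synthesisInstruction$ gives $\mathcal{V}_i(\pomdpDistribution)$. The program $\texttt{halt}$ is covered by the $\mathcal{V}_0$ term.

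I expect the main obstacle to be the step "$\QProgram'$ acts on each belief separately". The semantics of $\QProgram'$ is applied to the whole mixed successor ensemble, not belief by belief. So we must show that the target mass of $\sem{\QProgram'}_\quantumHardware(\sum_j w_j\pomdpDistribution'_j)$ is at most $\sum_j w_j$ times the best mass achievable from $\pomdpDistribution'_j$. The approach is to show by structural induction that $\sem{\cdot}_\quantumHardware$ is affine on ensembles with disjoint classical components. Atomic instructions act pointwise on hybrid states. For conditionals, the normalizations and the weights $w_b$ cancel, which reduces the semantics to a sum over the supporting hybrid states. Given this affinity, the mass decomposes, and on each $\pomdpDistribution'_j$ the program $\QProgram'$ induces a (possibly different) length-bounded admissible program obtained by resolving the conditionals. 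This yields (b). Finally, the optimal value for $\distribution_\precondition$ is the weighted sum over its classical partition, by the same affinity argument. This matches the definition of $\mathcal{V}_\horizon(\distribution)$ and the leading \texttt{if-else} blocks of $\QProgram^{(\horizon)}_{\distribution_\precondition}$.
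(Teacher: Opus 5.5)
Your route is the paper's. The paper proves the theorem by induction on the horizon, deferring to prior work, and adds that from a fixed initial ensemble randomization cannot help, because the value is affine in the mixing probabilities. Your (a)/(b) induction, with the $\oplus_p$ case bounded by a convex combination, is an expanded version of exactly this. The step you single out as the main obstacle is not one: it is Lemma~\ref{lemma:helper-final-ensemble} (linearity of $\sem{\cdot}_H$). That lemma holds for arbitrary convex combinations, not only for ensembles with disjoint classical components.

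\textbf{The gap.} What fails is the transfer of admissibility between the ensemble graph and the quantum-belief game graph. In (b) you claim that the part of $P'$ running on $\beta'_j$ ``is admissible there''. In (a) you claim, symmetrically, that $P^{(i)}_\beta$ is admissible because every $\delta$-step it uses has a satisfied guard.
\begin{itemize}
\item Admissibility, and hence optimality, is defined on the ensemble graph. After $J$ the configuration there is $\langle P', \sem{J}_H(\beta)\rangle$, carrying the unsplit mixture $\sum_j w_j\beta'_j$. This mixture is split only when $P'$ itself branches. Even at an \texttt{if}-configuration, the guards of all instructions in $\textsc{NextI}$ are evaluated on the unsplit ensemble.
\item The game graph instead checks $\beta'_j \vDash G(J')$ belief by belief.
\end{itemize}
For arbitrary guards the two notions differ in both directions, and the optimality claim then actually fails. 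Take noise-free hardware, $G(x_0:=\texttt{measure}(q_0)) := \top$, and $G(\texttt{CX}([q_0,q_1])) := \mathbb{P}([x_0]=0)\ge\tfrac12 \land \mathbb{P}([x_0]=1)\ge\tfrac12$. Let the initial belief be $\{(\ket{{+}0}, x_0{=}0):1\}$, the target $[q_1]=\ket{1}$, and $k=2$. No belief ever satisfies the \texttt{CX} guard, so $\mathcal{V}_2=0$. Yet $x_0:=\texttt{measure}(q_0);\ \texttt{CX}([q_0,q_1])$ is admissible in the ensemble-graph sense, because the post-measurement mixture satisfies the guard. It reaches the target with probability $\tfrac12$.

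So your induction needs a hypothesis that the paper's one-line proof also leaves implicit. Either admissibility is read belief-wise, with each guard evaluated on the belief in which its instruction runs, as the game graph does. Or guards are restricted so that the two notions coincide. With that hypothesis stated, your argument is a complete and more detailed version of the paper's.
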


\noindent
Theorem~\ref{thm:bellman_correct} is shown by induction on the horizon length~\cite{Muroya25}.
Note that for singular preconditions, it suffices to consider deterministic programs. 
The set of probabilistic programs forms a convex polytope, where each probabilistic program can be expressed as a convex combination of deterministic programs. 
Since each Bellman update is affine in the instruction probabilities, and affine functions attain their extreme at the corners of a convex set, the maximum value occurs at one of the deterministic programs, assuming a fixed initial ensemble.
Thus, given a quantum-belief game graph whose transition probabilities are defined with precision at most $\precision$, the optimal program can be computed by storing numbers with precision
$\precision\cdot(\horizon+1)$.
 % PNAS
% % % %     % subsection experiments
% \input{sections/_7_convex_sets} %
% % % %     % subsection experiments
\section{Experiments}
\label{sec:experiments}
All experiments are performed using 16 GB of RAM and an Intel(R) Xeon(R) CPU E5-2680 v3 @2.50GHz processor.
We built a simulation, verification, and bounded-synthesis toolchain in C++17 and applied it to  the five problems described below on all 55 hardware specifications from Qiskit v2.2.1.
While we use arbitrary-precision arithmetic for verification and synthesis,
all presented numerical results are rounded to 5 decimal points for readability. 
We focus on the synthesis results (experiments on verification can be found in Appendix Section~\ref{sec:verification-experiments} as well as other figures in Appendix Section~\ref{appendix:programs}).

\subsection{Synthesis with singular preconditions and target postconditions}
\begin{table}[t]
\tiny
    \centering
    \begin{tabular}{|l|l|l|l|l|l|l|l|l|l|l|}
    \hline
        problem & \makecell[l]{max.\\horizon} & \#hardw. & \#emb. & \makecell[l]{\#diff.\\progs.} & \#improv. & \makecell[c]{max.\\improv. (\%)} & \makecell[c]{avg.\\build time} & \makecell[c]{max.\\build time} & \makecell[c]{avg.\\method time} & \makecell[c]{max.\\method time} \\ \hline
        {\sc Parity-bit.} [$\IPMAInstructionSet$] & 7 & 45 & 141 & 22 & 14 & 4.217 & 0.161 & 0.9167 & 242.072 & 989 \\ \hline
        {\sc Parity-bit.} [$\IPMAInstructionSetPrime$] & 8 & 45 & 141 & 34 & 44 & 4.85 & 0.539 & 3.205 & 399.97 & 2137 \\ \hline
        {\sc Parity-bit.} [$\CXHInstructionSet$] & 7 & 45 & 203 & 7 & 30 & 3.194 & 6.695 & 32.509 & 2385.564 & 2545 \\ \hline
        {\sc GHZ State-prep.} & 3 & 45 & 1193 & 12 & - & - & 8.809 & 33.184 & 1 & 3 \\ \hline
        {\sc State-Reset} & 9 & 55 & 109 & 35 & 460 & 27.654 & 0.003 & 0.995 & 15.062 & 100 \\ \hline
        {\sc Bell-state prep.} & 6 & 45 & 136 & 27 & 308 & 11.292 & 0.22 & 1.095 & 8.588 & 42 \\ \hline
    \end{tabular}
    \caption{\textit{Summary for singular preconditions and target postconditions experiments.} 
	For the {\sc Parity-bitflip} problem we performed experiments with three different instruction sets ($\IPMAInstructionSet$, $\IPMAInstructionSetPrime$, and $\CXHInstructionSet$). 
    For each problem/experiment, we show the maximum length ({\em horizon}) of the synthesized programs, the number of different hardware specifications, the number of different embeddings of logical into physical qubits, the number of programs synthesized across all considered hardware and embeddings, the number of improvements with respect to the baseline textbook program for that problem, the maximum improvement achieved, the time (in seconds) to build the QBGG, and the time to synthesize the program on the QBGG.}
    \label{tab:bellman_summary}
    \vspace{-10pt}
\end{table}
The experiments for singular preconditions are summarized in Table~\ref{tab:bellman_summary}. 
They include all experiments that were performed in \cite{Muroya25}
for previous versions of hardware specifications from Qiskit.
We are generally able to explore larger horizons over~\cite{Muroya25},
e.g., for the $\textsc{State-Reset}$ problem from horizon 7 to~9.

%%%%% SINGLE: PARITY-BITFLIP PROBLEM %%%%%%%%
{\small
\begin{equation}
    \begin{split}\precondition_{pb} :=~&\probability([\overrightarrow{q}]=\ket{\Phi^+}\ket{0} \land [\cVar_2]= 0) = 0.25 \land 
    \probability([\overrightarrow{q}]=\ket{\Phi^-}\ket{0} \land [\cVar_2]= 0) = 0.25 \land \\ &\probability([\overrightarrow{q}]=\ket{\Psi^+}\ket{0} \land [\cVar_2]= 0) = 0.25 \land 
    \probability([\overrightarrow{q}]=\ket{\Psi^-}\ket{0} \land [\cVar_2]= 0) = 0.25.
    \label{eq:precondition-parity-bitflip}
\end{split}
\end{equation}
}
\subsubsection{\sc Parity-bitflip}
\label{sec:parity-bitflip}
Consider three qubits $\overrightarrow{q} = [\qVar_0, \qVar_1, \qVar_2]$  and one classical variable~$x_2$.
The precondition, shown in Equation~\ref{eq:precondition-parity-bitflip}, defines an initial ensemble that assigns uniform probability to all four Bell states 
$\ket{\Psi^+} = \frac{1}{\sqrt{2}}(\ket{00} + \ket{11})$, $\ket{\Psi^-} = \frac{1}{\sqrt{2}} (\ket{00} - \ket{11})$, $\ket{\Phi^+} = \frac{1}{\sqrt{2}}(\ket{01} + \ket{10})$, and $\ket{\Phi^-} = \frac{1}{\sqrt{2}}(\ket{01} - \ket{10})$.
The state assertion for the target postcondition, $[\qVar_0, \qVar_1] = \ket{\Psi^+} \lor  [\qVar_0, \qVar_1] = \ket{\Psi^-}$, specifies the two Bell states with even parity.
The resulting pre- and postcondition pair specifies the {\sc Parity-bitflip} problem, a common task in quantum error correction that usually needs to be performed non-destructively. 
Hence, we put a guard on measurements: the instruction $\cVar := \texttt{measure}(\qVar_2)$  
can be applied only if, with probability 1, the post-measurement quantum state has its first two qubits in one of the Bell states.

The {\sc Parity-bitflip} baseline programs for the $\IPMAInstructionSet$ and $\IPMAInstructionSetPrime$ instruction sets are textbook programs that apply two controlled-not gates and then perform repeated measurements~\cite{qc_bible}. 
If more than half of the measurements have outcome 1, a corrective not-gate is applied; see Appendix Section~\ref{appendix:programs}, Figure~\ref{fig:alg_perfect_ipma2}.
Our synthesis achieves a maximal improvement over the baseline on the Washington hardware specification,
namely 4.85\%, from 47.575\% to 52.425\% probability to reach a target state; 
the synthesized program is shown in the Appendix, Fig.~\ref{fig:alg_optimal_ipma2}.
We also find smaller improvements in more robust settings.
For example, on the Cambridge hardware, there is an improvement of 0.899\% from 93.101\% to 94\%;
see Appendix, Fig.~\ref{fig:alg_optimal_ipma2_cambridge}.
For the $\CXHInstructionSet$ instruction set (see Appendix Fig.~\ref{fig:cxh_programs}), we consider as baseline the program that is optimal for the noise-free hardware. 
In total, we synthesize 7 different hardware-dependent optimal programs, including the four programs found in~\cite{Muroya25} for old hardware versions.
The maximal improvement is $3.194\%$ (from 47.006\% to 50.2\%); the maximal improvement in a robust setting, $1.57\%$ (from 89.868\% to 91.438\%).

\subsubsection{\sc GHZ state-preparation}
We again have three qubits, 
but no classical variables.
The precondition $\precondition_{ghz} := \probability([\overrightarrow{q}]=\ket{000}) = 1$ specifies that we start in a pure ensemble with all three qubits in state $\ket{000}$, and the state assertion is $[\overrightarrow{\qVar}] = \frac{1}{\sqrt{2}}(\ket{000} + \ket{111})$.
In total, we synthesize 12 different hardware-optimal programs.
All of them are permutations of a single Hadamard gate and two controlled-not gates, however, the order depends on the hardware and embedding.
On Yorktown, the only quantum hardware for which it is possible to test all 12 programs, different permutations achieve accuracies between $93.596\%$ and $96.376\%$ (see Appendix Figure~\ref{fig:ghz_experiment}).

%%%%% SINGLE: RESET PROBLEM %%%%%%%%
\subsubsection{\sc State-reset}
\label{sec:reset-problem}
Consider a single qubit $\qVar_0$ and classical variable $\cVar_0$ and the $\ResetInstructionSet$ instruction set. 
The precondition is $\precondition_{r} :=\probability([\qVar_0]=\ket{0} \land [\cVar_0]= 0) = 0.5 \land \probability([\qVar_0] = \ket{1} \land [\cVar_0]= 0) = 0.5$; the postcondition is given by the state assertion $[\qVar_0] = \ket{0}$. 
Similar to the {\sc Parity-bitflip} problem, we use a majority-vote strategy as baseline; see Appendix, Fig.~\ref{fig:reset_algs}.
At horizon 9, we can give an accuracy guarantee of at least 99.999\% in 52 embeddings.
The maximal improvement occurs on the Torino hardware,
namely $27.654\%$.
There are also several improvements for more robust settings, such as $3.657\%$ from $87.69\%$ to $91.347\%$ on Brisbane, and $0.096\%$ from 99.903\% to 99.999\% on Johannesburg. 
 
%%%%%% SINGLE : INDISTIGUINSABLE BELL STATES %%%%%%%%%%%
\subsubsection{\sc Bell-state preparation}
\label{sec:single-bellstate}
\begin{figure}[t]
    \centering
    % --- First program ---
    \begin{subfigure}[b]{0.45\textwidth}
        \centering
        \begin{lstlisting}[basicstyle=\ttfamily\tiny,keepspaces=true,showstringspaces=false,tabsize=4, mathescape=true]
x0 := measure(q0); 
if x0 = 0:
	x0 := measure(q0); 
	if x0 = 0:
		x0 := measure(q0);
		if x0 = 0:
			H([q1]);
		else:
			x0 := measure(q0); 
			if x0 = 0:
				H([q1]);
		CNOT([q1,q2]);
	else:
		x0 := measure(q0); 
		if x0 = 0:
			x0 := measure(q0); 
			if x0 = 0:
				H([q1]);
		CNOT([q1,q2]);
else:
	x0 := measure(q0); 
	if x0 = 0:
		x0 := measure(q0); 
		if x0 = 0:
			x0 := measure(q0);
			if x0 = 0:
				H([q1]);
	CNOT([q1,q2]);
        \end{lstlisting}
        \label{fig:alg_robust_bell_reach_bellman}
    \end{subfigure}
    \hfill
    % --- Second program ---
    \begin{subfigure}[b]{0.45\textwidth}
        \centering
        \begin{lstlisting}[basicstyle=\ttfamily\tiny,keepspaces=true,showstringspaces=false,tabsize=4,mathescape=true]
x0 := measure(q0); 
if x0 = 0:
	x0 := measure(q0); 
	if x0 = 0:
		x0 := measure(q0); 
		if x0 = 0:
			x0 := measure(q0); 
			if x0 = 0:
				H([q1]); CNOT([q1,q2]); 
			else:
				CNOT([q1,q2]);
		else:
			H([q1]); H([q1]); 
			CNOT([q1,q2]);
	else:
		H([q1]); H([q1]); 
		CNOT([q1,q2]);
else:
	H([q1]); H([q1]); H([q1]); H([q1]); 
	CNOT([q1,q2]);
        \end{lstlisting}
        \label{fig:alg_optimal_bell_reach_bellman}
    \end{subfigure}
    \Description{(a)~Example of a hardware-optimal program synthesized with horizon 6 for a robust embedding of the {\sc Bell-state preparation} problem. \;(b)~Example of a hardware-optimal program synthesized with horizon 6 for a noisy embedding of the {\sc Bell-state preparation} problem.}
    \caption{(a)~Example of a hardware-optimal program synthesized with horizon 6 for a robust embedding of the {\sc Bell-state preparation} problem. \;(b)~Example of a hardware-optimal program synthesized with horizon 6 for a noisy embedding of the {\sc Bell-state preparation} problem.}
    \label{fig:all_bell_reach_bellman}
\end{figure}
Consider three qubits $\overrightarrow{q} = [\qVar_0, \qVar_1, \qVar_2]$, a single classical variable $\cVar_0$, and the $\LocalBellInstructionSet$ instruction set. The precondition
$\precondition_{ind} := \probability([\overrightarrow{q}]=\ket{000}\land [\cVar_0]= 0) = 0.2 \land  \probability([\overrightarrow{q}]=\ket{010}\land [\cVar_0]= 0) = 0.2 \land \probability([\overrightarrow{q}]=\ket{1{+}0}\land [\cVar_0]= 0) = 0.3 \land  \probability([\overrightarrow{q}]=\ket{1{-}0} \land [\cVar_0]= 0) = 0.3$ specifies that  with probability 0.4 and 0.6 we are given ensembles $\distribution_0$ and $\distribution_1$ from Example~\ref{ex:motivation}.
The postcondition is the same as for {\sc Parity-bitflip}.
The baseline for horizon 1 and 2 applies the instruction $\CXGate([\qVar_1, \qVar_2])$ because there is a higher probability of being at ensemble $\distribution_1$. For horizons 3 up to 6 we use as baseline a majority-vote strategy similar to the {\sc Parity-bitflip}; see Appendix Fig.~\ref{fig:alg_perfect_bell_reach_bellman}.
At horizon 3, we see improvements with two programs across 7 embeddings that are noisy in terms of measurements and simply apply sequences of unitary instructions, achieving a maximum improvement of 6.345\% in a Lagos embedding.
At horizon 5 we achieve improvements in 17 embeddings that have either substantial noise in measurements or in their controlled-not instructions. 
At horizon 4 and 6 we see improvements in all 136 hardware embeddings and the maximum improvements are of $11.292\%$ (from $48.378\%$ to $59.67\%$) and $8.772\%$ (from $51.425\%$ to $60.197\%$) respectively. Several synthesized programs do not maximize the number of measurements, since noise decreases their accuracy.
Improvements of around $3\%$ are also noticeable in more robust hardware specifications.
Two of the synthesized programs are shown in Figures~\ref{fig:all_bell_reach_bellman}a
and~\ref{fig:all_bell_reach_bellman}b.

%%%%%%% CONVEX %%%%%%%%%%%
\subsection{Synthesis with linear preconditions and target postconditions}
\begin{table}[t]
\tiny
    \centering
    \begin{tabular}{|l|l|l|l|l|l|l|l|l|l|l|}
    \hline
        problem & \makecell[l]{max.\\horizon} & \#hardw. & \#emb. &\makecell[c]{\#diff.\\progs.} & \#improv. & \makecell[c]{max.\\ improv. (\%)} & \makecell[c]{avg.\\build time} & \makecell[c]{max.\\build time} & \makecell[c]{avg.\\method time} & \makecell[c]{max.\\method time} \\ \hline
        {\sc State discr.} & 6 & 55 & 109 & 483 & 648 & 24.214 & 0.031 & 0.663 & 217.725 & 8434 \\ \hline
        {\sc Unknown-state res.} & 7 & 55 & 109 & 478 & 714 & 39.286 & 0.002 & 0.5 & 6.724 & 430 \\ \hline
        \makecell[l]{{\sc Local Bell-state}\\ {\sc prep.}} & 6 & 45 & 136 & 48 & 283 & 24.732 & 0.219 & 1.095 & 17.307 & 83 \\ \hline
    \end{tabular}
    \caption{\textit{Summary for linear preconditions and target postconditions experiments.}
    For each problem/experiment, we show the maximum length ({\em horizon}) of the synthesized programs, the number of different hardware specifications, the number of different embeddings of logical into physical qubits, the number of programs synthesized across all considered hardware and embeddings, the number of improvements with respect to the baseline textbook program for that problem, the maximum improvement achieved, the time (in seconds) to build the ensemble graph, and the time to synthesize the optimal program.}
    \label{tab:convex_summary}
    \vspace{-20pt}
\end{table}

{The synthesis problem for unknown initial ensembles (e.g., linear preconditions) is generally more expensive than for known initial ensembles (i.e., singular preconditions).
However, in practice we manage to explore similar horizons compared to singular preconditions by considering only dominant programs.}
A summary of our experiments with linear preconditions is shown in Table~\ref{tab:convex_summary}. 
We notice both a larger variety of synthesized programs and more improvements as compared to singular preconditions.
In other words, hardware-dependent synthesis is even more important for unknown initial ensembles, 
and ours is the first method supporting it.
Most of the programs we synthesize use probabilistic branching,
which shows that probabilistic branching is necessary for optimal quantum programs in practice (not only in theory).

%%%%%%%%%% CONVEX: STATE DISCRIMINATION %%%%%%%%%%
\subsubsection{\sc State-discrimination}
The setup of this experiment was described in the Example~\ref{ex:state_discr} of Subsection~\ref{subsec:synth_lin_target}. For our experiments, we consider the $\MZMXInstructionSet$ instruction set.
The {\em accuracy} (i.e., the worst-case probability of reaching a target state) that noise-free hardware can achieve is 66.667\% for horizons 1 and 2, and of 75\% for horizons higher than 2.
For horizons 1 and 2 we use as baseline the program synthesized for the noise-free hardware: apply with probability 0.333 instruction $\cVar_0:= 1$, and with probability 0.667 instruction $\cVar_0 := \texttt{measure}(\qVar_0)$. For horizons higher than 2, each branch is chosen with 0.5 probability. One branch measures assuming that a zero outcome corresponds to $\ket{0}$, while the other applies a Hadamard gate to perform several measurements in which a zero outcome yields that the initial state is $\ket{{+}}$; as shown in the Appendix, Fig.~\ref{fig:other_algs}a.
Each branch uses majority-vote strategy and performs as many measurements as allowed by the horizon.
We achieve improvements for all embeddings at all horizons.
There are 89, 102, 88, 89, 55, and 60 different programs at each horizon respectively.
All synthesized programs used probabilistic branching.
The highest improvement is at horizon 2, of 24.214\%;
see the Appendix, Fig.~\ref{fig:other_algs}b. 
An example of a program synthesized with horizon 4, for a relatively robust embedding, is shown in Figure~\ref{fig:algs-convex}a, and achieves an improvement of $4.184\%$ (from $69.133\%$ to $73.317\%$).

%%%%% CONVEX: RESET PROBLEM %%%%%%%%
\subsubsection{\sc Unknown-state reset}
The setup for this experiment is the same as in the singular-precondition case, except that the precondition is relaxed to $\precondition_{rc} :=(\probability([\qVar_0]=\ket{0} \land [\cVar_0]= 0) = 1) \oplus (\probability([\qVar_0] = \ket{1} \land [\cVar_0]= 0) = 1)$.
The baseline program for horizon 1 applies a $\notGate([\qVar_0])$ with 50\% probability; for larger horizons the baseline remains the same as in the singular-precondition version.
At horizons 2 and 4 we achieve improvements in 81 and 93 hardware scenarios respectively; and in the other horizons on all 109 hardware embeddings.
All improvements are achieved by probabilistic programs.
The programs found at horizon 1 are similar to the baseline program, but each deterministic branch is chosen with probabilities that depend on the hardware and the embedding, and we obtain improvements on all embeddings. 
The maximum improvement is achieved on a Torino embedding at horizon 7 
of 39.286\% (from 22.673\% to 61.959\%);
see Appendix, Fig.~\ref{fig:torino-convex-reset}.
An example of an optimal program for a more robust embedding is shown in Figure~\ref{fig:algs-convex}b,
which achieves an accuracy improvement of 5.337\% from 88.522\% to 93.859\% on a Cairo embedding.

%%%%% CONVEX: BELL %%%%%%%%
\begin{figure}[t]
    \centering
    % --- First program ---
    \begin{subfigure}[b]{0.3\textwidth}
        \centering
        \begin{lstlisting}[basicstyle=\ttfamily\tiny,keepspaces=true,showstringspaces=false,tabsize=4, mathescape=true]
{
	H([q0]); 
	x0 := measure(q0); 
	if x0 = 0:
		x0 := 1;
	else:
		x0 := measure(q0); 
		if x0 = 0:
			x0 := 1; 
		else:
			x0 := 0;
} $\oplus_{0.540448}$ {
	x0 := measure(q0); 
	if x0 = 0:
		x0 := measure(q0); 
		if x0 = 0:
			x0 := measure(q0); 
			if x0 = 1:
				x0 := measure(q0);  
		else:
			x0 := measure(q0); 
			if x0 = 0:
				x0 := measure(q0);  
	else:
		x0 := measure(q0); 
		if x0 = 0:
			x0 := measure(q0); 
			if x0 = 0:
				x0 := measure(q0); 
}\end{lstlisting}
        \label{fig:alg-discr-yorktown}
    \end{subfigure}
    \hfill
    % --- second program ---
    \begin{subfigure}[b]{0.3\textwidth}
        \centering
        \begin{lstlisting}[basicstyle=\ttfamily\tiny,keepspaces=true,showstringspaces=false,tabsize=4, mathescape=true]
{
	x0 := measure(q0); 
	if x0 = 0:
		x0 := measure(q0); 
		if x0 = 0:
			x0 := measure(q0); 
			if x0 = 1:
				x0 := measure(q0); 
				if x0 = 1:
					X([q0]);
		else:
			x0 := measure(q0); 
			if x0 = 0:
				x0 := measure(q0); 
				if x0 = 1:
					X([q0]); 
			else:
				X([q0]);
	else:
		X([q0]); 
		x0 := measure(q0); 
		if x0 = 1:
			x0 := measure(q0);
			if x0 = 1:
				X([q0]); 
} $\oplus_{0.647583}$ {
	x0 := measure(q0); 
	if x0 = 0:
		x0 := measure(q0); 
		if x0 = 1:
			x0 := measure(q0); 
			if x0 = 1:
				x0 := measure(q0); 
				if x0 = 1:
					X([q0]); 
	else:
		X([q0]); 
		x0 := measure(q0); 
		if x0 = 1:
			x0 := measure(q0); 
			if x0 = 1:
				X([q0]);
}\end{lstlisting}
        \label{fig:santiago-convex-reset}
    \end{subfigure}
    \hfill
    % --- third program ---
    \begin{subfigure}[b]{0.3\textwidth}
        \centering
        \begin{lstlisting}[basicstyle=\ttfamily\tiny,keepspaces=true,showstringspaces=false,tabsize=4,mathescape=true]
{
	x0 := measure(q0); 
	if x0 = 0:
		H([q1]); 
		CNOT([q1,q2]); 
	else:
		x0 := measure(q0); 
		if x0 = 0:
			H([q1]); 
			CNOT([q1,q2]); 
		else:
			CNOT([q1,q2]); 
} $\oplus_{0.5}$ {
	x0 := measure(q0); 
	if x0 = 0:
		x0 := measure(q0); 
		if x0 = 0:
			H([q1]); 
			CNOT([q1,q2]); 
		else:
			CNOT([q1,q2]); 
	else:
		H([q1]); 
		H([q1]); 
		CNOT([q1,q2]); 
}\end{lstlisting}
        \label{fig:opt-convex-bell-reach}
    \end{subfigure}
    \Description{(a)~Optimal program synthesized with horizon 4 for {\sc State-discrimination} on a Guadalupe embedding. (b)~Optimal program synthesized with horizon 5 for {\sc Unknown-state reset} on a Cairo embedding. (c)~Optimal program synthesized with horizon 4 for {\sc Local Bell-state preparation} on several embeddings.
    All three programs use probabilistic branching that is fine-tuned to the specified hardware.}
    \caption{(a)~Optimal program synthesized with horizon 4 for {\sc State-discrimination} on a Guadalupe embedding. (b)~Optimal program synthesized with horizon 5 for {\sc Unknown-state reset} on a Cairo embedding. (c)~Optimal program synthesized with horizon 4 for {\sc Local Bell-state preparation} on several embeddings.
    All three programs use probabilistic branching that is fine-tuned to the specified hardware.}
    \label{fig:algs-convex}
\end{figure}
\subsubsection{\sc Local Bell-state preparation}
The setup is the same as in the singular-precondition version, but with the relaxed precondition (Equation~\ref{eq:precondition-indist}) from Example~\ref{ex:nql_example}.
The baseline program changes only at horizon 2, which applies the sequences $\hadamardGate([\qVar_1]);\CXGate([\qVar_1,\qVar_2])$ and $\CXGate([\qVar_1, \qVar_2])$ with 50\% probability each.
Compared to the singular precondition, we only achieve improvements from horizon 4 onwards since the textbook programs are optimal in such adversarial scenarios.
There are improvements in all 136 embeddings at horizons 4 and 6, but only on 17 embeddings at horizon 5.
The maximum improvement achieved is at horizon 4 on a Lagos embedding, 
of 24.732\% from 28.593\% to 53.325\% accuracy. 
There are 133, 14, and 136 probabilistic programs at horizons 4, 5, and 6 respectively.
This can be explained by the fact that programs benefit more from odd numbers of measurements, and programs use different measurement sequences across probabilistic branches, trading off information gain against noise.
An example program with horizon 4 is shown in Figure~\ref{fig:algs-convex}c, which improves by $17.4\%$ over the textbook program (from 59.363\% to 76.763\%) and it is the optimal program for 119 embeddings.
 % new experiments new methos (distribute experiments)

\section{Conclusion}
We presented a Hoare logic for reasoning about quantum programs 
that are executed on noisy quantum hardware. 
We demonstrated how the logic can be used for automatically 
synthesizing loop-free quantum programs for common tasks, 
such as quantum state preparation, 
which are custom-tailored to a specific (noisy) hardware. 
The game-theoretic synthesis of hardware-optimal quantum
programs with more general preconditions
is left as an open challenge.

%%
%% The acknowledgments section is defined using the "acks" environment
%% (and NOT an unnumbered section). This ensures the proper
%% identification of the section in the article metadata, and the
%% consistent spelling of the heading.
\begin{acks}

\end{acks}

\newpage
%%
%% The next two lines define the bibliography style to be used, and
%% the bibliography file.
\bibliographystyle{ACM-Reference-Format}
\bibliography{references}

\newpage
\newtheorem*{lemma*}{Lemma}
\newtheorem*{theorem*}{Theorem}
\section{APPENDIX: Proof system}
\label{appendix:proof-system}
\subsection{Soundness of the noiseless proof system}
\label{appendix:noiseless-proof-system}
\begin{theorem*}[Soundness Theorem~\ref{thm:proof-system}]
For all hardware specifications $\quantumHardware$, hybrid programs $P$, and ensemble assertions $\phi$ and $\psi$, 
if the hardware-dependent 
Hoare triple $\{\precondition\}~\QProgram^{\quantumHardware}~\{\postcondition\}$ 
is derivable, 
then it is valid. 
\end{theorem*}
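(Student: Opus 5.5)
We prove soundness by induction on the height of the derivation of $\{\precondition\}~\QProgram^{\quantumHardware}~\{\postcondition\}$. At each step we assume the premises are valid triples and show the conclusion is valid. The hardware-specific rules HM00--HM11 and HUnitary are handled in the main-text proof of Theorem~\ref{thm:proof-system}, so this appendix only treats the rules of Figure~\ref{fig:proof_rules_no_hardware} (with the superscript $\quantumHardware$ added). Throughout, we fix $\distribution\vDash\precondition$ and set $\distribution'=\sem{\QProgram}_\quantumHardware(\distribution)$.

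\textbf{Routine cases.} These are Skip, Seq, CAssign, CVarAssign, Unitary and CProb.
\begin{itemize}
\item Skip is immediate.
\item Seq follows by applying the two induction hypotheses in turn to $\sem{\QProgram_1}_\quantumHardware(\distribution)$.
\item For CAssign and CVarAssign we first prove a substitution lemma by induction on assertions: $\distribution\vDash\precondition[\cVar\rightarrow t]$ iff $\sem{\cVar:=t}(\distribution)\vDash\precondition$. The atomic cases hold because $\sem{[\cVar_0\cdots]}$ evaluated after the write equals the substituted term evaluated before it. For the cases $\downarrow$, $\oplus_T$ and $\oplus$, we use that the write map is a bijection on hybrid states for the components it touches, that it preserves masses, and that it commutes with normalization and with convex combinations.
\item Unitary needs the analogous lemma for $\precondition[\overrightarrow{\qVar}\rightarrow U]$. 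The key identity is $\sem{U\cdot\overrightarrow{\qVar}}(h)=\sem{\overrightarrow{\qVar}}(\sem{\langle U,\overrightarrow{\qVar}\rangle}(h))$ on reduced density operators, extended through partial traces and $\ast$-terms. This requires that $U$ is unitary, so the map on hybrid states is a bijection and the normalization constant is $1$.
\item For CProb, the semantics gives $\distribution'=(1-\probVal)\distribution_1'+\probVal\distribution_2'$ with $\distribution_i'\vDash\postcondition_i$. This is exactly the witness required by $\sem{\postcondition_1\oplus_\probVal\postcondition_2}$ with constant $T=\probVal$. The guards $\probVal<1$ and $\probVal>0$ take care of degenerate branches.
\end{itemize}

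\textbf{Branching cases.} These are Cond and Meas.
\begin{itemize}
\item For Cond, the precondition gives $\distribution\downarrow^\cVar_1\vDash\precondition_1$ and $\distribution\downarrow^\cVar_0\vDash\precondition_2$. By the induction hypotheses the branch outputs satisfy $\postcondition_1$ and $\postcondition_2$. Then $\distribution'=w_1\cdot(\cdot)+w_0\cdot(\cdot)$ with $w_0+w_1=1$, so $\distribution'\in\sem{\postcondition_1\oplus_{w_0}\postcondition_2}\subseteq\sem{\postcondition_1\oplus\postcondition_2}$. (The order of the summands must be matched against the $\oplus_T$ convention.)
\item For Meas, we rewrite $\sem{\cVar:=\texttt{measure}(\qVar)}(\distribution)$ as $\sum_b m_b\cdot\sem{\cVar:=b}(\distribution\downarrow^\qVar_{M_b})$, where $m_b=\trace(M_b\ast[\qVar])$ evaluated at $\distribution$. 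This follows from linearity and the definition of $\downarrow^{\overrightarrow{\qVar}}_{\someMatrix}$, whose weighting $\trace(M\cdot\ket{\qubit}\bra{\qubit}\cdot M^\dagger)$ matches the Born weights in the measurement semantics. The premises then give the two components in $\postcondition_0,\postcondition_1$. Finally, after the write, $m_1=\sem{\probability(x=1)}(\distribution')$, because branch $b$ sets $\cVar=b$. The mixing coefficient demanded by $\oplus_{\probability(x=1)}$ is therefore evaluated correctly on $\distribution'$. Cases with $m_b=0$ are handled by the implication guards in the semantics of $\oplus_T$.
\end{itemize}

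\textbf{While: the main obstacle.} The semantics is a least fixed point, so we argue by approximants. Let $F^n(\bot)$ be the Kleene chain, where $\bot$ maps everything to $\emptyDistribution$. We show by induction on $n$ that for every $\distribution\vDash\precondition\downarrow^\cVar_1\land\precondition\downarrow^\cVar_0$, the subensemble $F^{n}(\bot)(\distribution)$ is a convex combination, with total weight at most 1, of ensembles satisfying $\precondition$.
\begin{itemize}
\item The exit part $w_0\cdot\distribution\downarrow^\cVar_0$ satisfies $\precondition$ by hypothesis.
\item The loop part uses the premise $\{\precondition\}\QProgram\{\precondition\}$ applied to $\distribution\downarrow^\cVar_1\vDash\precondition$.
\item Closure of $\precondition$ under $\oplus_\probVal$, assumed in the rule, lets us merge the pieces.
\end{itemize}
The delicate points are the following.
\begin{itemize}
\item The re-entry requires the iterate to satisfy both $\precondition\downarrow^\cVar_1$ and $\precondition\downarrow^\cVar_0$, not merely $\precondition$. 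We will strengthen the invariant accordingly, or restrict attention to $\precondition$ of the form already closed under conditioning.
\item Passing to the limit: we only need validity for terminating runs, where $\distribution'$ is a proper ensemble. Convex closure under finite mixtures must then be extended to the limit of the chain. We plan to use that loop-free unfoldings reach the limit in finitely many steps on the relevant mass, or else to add a closedness argument.
\end{itemize}
Since verification and synthesis only use loop-free programs, we will isolate this case and state precisely which closure hypothesis on $\precondition$ it needs.
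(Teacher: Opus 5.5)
\textbf{Rules other than While.} For every rule except \textsc{While}, your argument is the paper's. You run an induction over the derivation, defer the hardware rules to the main-text proof, and handle \textsc{Skip}, \textsc{Seq}, \textsc{CProb}, \textsc{Cond} and \textsc{Meas} with the same decompositions the appendix uses. You also make explicit the substitution lemmas that the paper dismisses as ``similar to regular Hoare logic''.

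One correction there: the write map $x:=b$ is not a bijection on hybrid states, because it merges states that differ only in $x$. Your substitution lemma needs the $\Leftarrow$ direction at $\oplus_T$, since assertions contain $\neg$. For that direction you must pull a decomposition of $\sem{x:=b}(\beta)$ back to $\beta$ by splitting each fibre proportionally. You also need to say how substitution acts on the subscript of $\downarrow^{x}_{b}$.

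\textbf{The While case is a genuine gap.} You leave it open, so as written the proposal does not establish the theorem. Your diagnosis is correct, but no strengthening of the invariant inside the proof can fix it, because the rule as stated does not preserve validity. Here is a counterexample.
\begin{itemize}
\item Take classical variables $x,y$ with the qubits idle, and the convex assertion $\phi := \mathbb{P}([y]=0)\geq \tfrac12$.
\item Let $P := (x:=0;\,y:=1)\oplus_{1/2}(x:=1;\,y:=0)$. Every output of $P$ has $\mathbb{P}(y=0)=\tfrac12$, so $\{\phi\}\,P\,\{\phi\}$ is valid.
\item Let $\beta=\{(x{=}0,y{=}0):\tfrac1{10},\,(x{=}1,y{=}0):\tfrac9{10}\}$. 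Both $\beta\downarrow^x_1$ and $\beta\downarrow^x_0$ satisfy $\phi$, and the loop terminates almost surely.
\item The least fixed point gives $\{(x{=}0,y{=}0):\tfrac1{10},\,(x{=}0,y{=}1):\tfrac9{10}\}$, which violates $\phi$.
\end{itemize}
The failure is exactly your point: $\gamma=\sem{P}(\beta\downarrow^x_1)$ satisfies $\phi$, but its exit part $\gamma\downarrow^x_0$ does not.

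The paper's own worked derivation uses consequence reasoning implicitly. With it, $\{\phi\}P\{\phi\}$ is derivable from \textsc{CAssign}, \textsc{Seq} and \textsc{CProb}, so the While conclusion above is a derivable but invalid triple.

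The paper's proof does not get around this. Its transformer $Y$ is seeded with $\sem{\phi}$ and asserts $\beta'\vDash\phi\oplus_{w_0}\phi$ for every iterate. That tacitly assumes $\beta\downarrow^x_1\vDash\phi$ and $\beta\downarrow^x_0\vDash\phi$ for ensembles known only to satisfy $\phi$, which is the same re-entry problem you flagged.

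\textbf{What is provable.} You can prove soundness for the loop-free fragment, which is all the verification and synthesis sections use. Alternatively, you can prove \textsc{While} sound with an added side condition such as $\phi\land\mathbb{P}([x]=b)>0\implies\phi\downarrow^x_b$ for $b\in\{0,1\}$. Under that condition your Kleene-chain invariant goes through.

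The limit step then needs no separate closedness assumption. A finite-support output forces every positively weighted exit part onto that finite support. Convex subsets of a finite-dimensional simplex are closed under countable convex combinations, so the output satisfies $\phi$.
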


We have already shown the soundness of the noise-aware rules. It remains to show the soundness of the noiseless rules.
\begin{proof}
\begin{itemize}
    \item Rule [\textsc{SKIP}]. Suppose $\distribution \vDash \precondition$. Then $\sem{\texttt{skip}}(\distribution) \vDash \precondition$ because $\sem{\texttt{skip}}(\distribution) = \distribution$.
    \item Rule [\textsc{CAssign}] and Rule[\textsc{CVarAssign}] lead to similar cases of regular Hoare logic.
    \item Rule [\textsc{Meas}]. Suppose $\distribution \vDash \precondition_0 \downarrow_{M_0}^{\qVar} \land
        \precondition_1 \downarrow_{M_1}^\qVar$. If $\distribution' = \sem{\cVar := \texttt{measure}(\qVar)}(\distribution)$ we have $\distribution'$ can be written as the sum of two subensembles that consists of applying a linear operator and then the corresponding classical write to the classical state.
    First consider the ensemble  $\distribution_0 = \sum _{(\ket{\qubit}, \classicalState)\in\allHybridStates} \distribution(\hybridState) \bra{\qubit} \cdot \projector_0 \cdot \ket{\qubit} \cdot \sem{\langle \projector_0, \qVar\rangle}(\hybridState)$ that we obtain after applying $\projector_0$, and similarly we define $\distribution_1$ using projector $\projector_1$.
     The ensemble $\distribution_0$ can be rewritten as $\sum_{(\ket{\qubit}, \classicalState)\in\allHybridStates} \distribution(\hybridState) \bra{\qubit} \cdot \projector_0 \cdot \ket{\qubit} \cdot \normalize(\distribution_0)$. Since $\sem{\langle\projector_0, \qVar\rangle}$ either returns an empty ensemble or a unit ensemble $n_0 = \sum_{(\ket{\qubit}, \classicalState)\in\allHybridStates} \distribution(\hybridState) \bra{\qubit} \cdot \projector_0 \cdot \ket{\qubit}$ turns out to be the normalization factor.
     Note that $n_0$ is the probability that we measure 0 that coincides with the probability that $\cVar = 0$ in the final ensemble.
     Since writes are applied linearly to every element of the ensemble and do not affect probabilities we get that $\distribution'=  n_0 \cdot \sem{\cVar := 0 } (\normalize(\distribution_0)) + n_1 \cdot \sem{\cVar := 1 } (\normalize(\distribution_1))$. By our precondition,  we know that $\normalize(\distribution_0)\vDash \precondition_0$  and that $\normalize(\distribution_1) \vDash \precondition_1$; and 
     by our premises we know that $\sem{\cVar := 0 } (\normalize(\distribution_0)) \vDash \postcondition_0$ and $\sem{\cVar := 1 } (\normalize(\distribution_1))\vDash \postcondition_1$.
     Thereafter, it follows that
     $\distribution' \vDash \postcondition_0 \oplus_{Pr(\cVar = 1)} \postcondition_1$.
    \item Rule [\textsc{Unitary}]. This case is similar to classical Hoare logic.
    Suppose that we have an ensemble $\distribution \vDash \precondition[\overrightarrow{\qVar} \rightarrow U]$.
    Each quantum state $\ket{\qubit}$ in $\distribution$ becomes $U\cdot\ket{\qubit}$, and the precondition was defined to hold exactly for those states.
    
    \item Rule [\textsc{Cond}]. Suppose we are given an ensemble $\distribution$ and let $\distribution_1 = \texttt{normalize}(\distribution, \distribution[\cVar=1])$ and 
	$\distribution_0 = \texttt{normalize}(\distribution, \distribution[\cVar=0])$.
	Furthermore, suppose $\distribution_1 \vDash \precondition_1$ and $\distribution_0 \vDash \precondition_2$.
    At the same time,  
    $
        \sem{\texttt{if(} \cVar\texttt{) } \{ \QProgram_1\} \texttt{ else } \{ \QProgram_2\}}(\distribution) =  \probVal \cdot \sem{\QProgram_1}_\quantumHardware(\distribution_1) + (1-p) \cdot \sem{\QProgram_2}_\quantumHardware(\distribution_0),
    $
    where $\probVal = \subsetProb(\distribution, \distribution[\cVar=1])$. 
    Both $\sem{\QProgram_1}_\quantumHardware(\distribution_1)\vDash \postcondition_1$ and $\sem{\QProgram_2}_\quantumHardware(\distribution_0)\vDash\postcondition_2$, thus it follows that there exists a convex combination of ensembles that satisfy $\postcondition_1$ and $\postcondition_2$ respectively after the \texttt{if ... else} statement is executed.
    
    \item Rule [\textsc{Seq}]. Suppose $\distribution \vDash \precondition_1$. 
    From our premises we know that $\sem{\QProgram_1}_\quantumHardware(\distribution) \vDash \precondition_2$,
    and therefore $\sem{\QProgram_2}_\quantumHardware(\sem{\QProgram_1}_\quantumHardware(\distribution)) \vDash \precondition_3$. Since $\sem{\QProgram_2}_\quantumHardware(\sem{\QProgram_1}_\quantumHardware(\distribution)) = \sem{\QProgram_1;\QProgram_2}(\distribution)$, then $\sem{\QProgram_1;\QProgram_2}(\distribution) \vDash \precondition_3$.
    
    \item Rule [\textsc{CProb}]. Suppose $\distribution \vDash \precondition$.
    Then $\sem{\QProgram_1~\oplus_\probVal~\QProgram_2}(\distribution) = (1-\probVal)\cdot \sem{\QProgram_1}_\quantumHardware(\distribution) + \probVal\cdot\sem{\QProgram_2}_\quantumHardware(\distribution)$. 
    From our premises we know that $\sem{\QProgram_1}_\quantumHardware(\distribution) \vDash \postcondition_1$ and $\sem{\QProgram_2}_\quantumHardware(\distribution) \vDash \postcondition_2$.  It then follows that $\sem{\QProgram_1~\oplus_\probVal~\QProgram_2}(\distribution) = (1-\probVal)\cdot \sem{\QProgram_1}_\quantumHardware(\distribution) + \probVal\cdot\sem{\QProgram_2}_\quantumHardware(\distribution) \vDash \postcondition_1 \oplus_\probVal \postcondition_2$.

    \item Rule [\textsc{While}].
    Let us define the transformer $Y : \mathcal{P}(\allProbDistributions{\allHybridStates}) \rightarrow \mathcal{P}(\allProbDistributions{\allHybridStates})$ which is a function that maps sets of ensembles to sets of ensembles and is defined as
    $
        Y(S) = S \cup \{ w_1 \cdot \sem{\QProgram} (\distribution_1) + w_0 \cdot \distribution_0 \mid \distribution \in S \},
    $
    where $w_1 = \texttt{prob}(\distribution, \distribution[\cVar=1])$, and $\distribution_1 = \normalize(\distribution, \distribution[\cVar=1])$ (and similarly for $w_0$  and $\distribution_0$). $Y(S)$ contains the set of all reachable program states in a single time-step.
    By the definition of $Y$ and by our premises we know that for an arbitrary ensemble $\distribution' \in Y^n(\sem{\precondition})$, 
    we have $\distribution' \vDash \precondition \oplus_{w_0} \precondition$, thus $\distribution' \vDash \precondition$ due to the premise that for all probabilities $\probVal \in [0,1] $ we have $\precondition \oplus_\probVal \precondition \implies \precondition$.
\end{itemize}
\end{proof}

\section{APPENDIX: Verification}
\label{appendix:verification}
Recall the set of atomic programs $AP = \{\cVar := 0, \cVar:=1, \cVar := \texttt{measure}(\qVar), \skipInstruction,  \cVar := \cVar, U(\overrightarrow{\qVar})\}$.
\begin{lemma}[Linearity of program semantics]
For all hybrid quantum-classical programs $\QProgram$, quantum hardware specifications $\quantumHardware$, and ensembles $\distribution$,
for all $a,b \in [0,1]$ with $a+b = 1$,  and ensembles $\distribution_1, \distribution_2 \in \allProbDistributions{\allHybridStates}$;
if $\distribution = a\cdot \distribution_1 + b\cdot \distribution_2$ then 
\begin{equation}
\sem{\QProgram}_\quantumHardware(\distribution) = a\cdot \sem{\QProgram}_\quantumHardware(\distribution_1) + b\cdot\sem{\QProgram}_\quantumHardware(\distribution_2).
    \label{eq:helper-fin-ensem}
\end{equation}
\label{lemma:helper-final-ensemble}
\end{lemma}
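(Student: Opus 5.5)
The plan is a structural induction on $\QProgram$, proving the slightly stronger statement that the identity holds for arbitrary convex combinations $\distribution=\sum_i a_i\distribution_i$ with finitely many terms. The two-term case is the special case $i\in\{1,2\}$, and the finite form is what the conditional case needs. I will also adopt the convention that a term whose weight is $0$ contributes $\emptyDistribution$, so that combinations involving the empty subensemble cause no trouble.

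\emph{Atomic programs.} For every $\QProgram\in AP$, Figure~\ref{fig:semantics_dist_transformer} defines $\sem{\QProgram}_\quantumHardware(\distribution)$ as $\sum_{\hybridState\in\supp(\distribution)}\distribution(\hybridState)\cdot f(\hybridState)$ for a fixed map $f:\allHybridStates\to\allensZ{\allHybridStates}$. Examples are $f=\sem{\langle U,\overrightarrow{\qVar}\rangle}_\quantumHardware$, $f=\sem{\langle\qVar,\cVar\rangle}_\quantumHardware$, and $f(h)=\sem{\langle \cVar_0,h(\cVar_1)\rangle}(h)$. Ensembles are added pointwise, and $\distribution(\hybridState)=a\distribution_1(\hybridState)+b\distribution_2(\hybridState)$. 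A hybrid state outside a support contributes weight $0$, so the sum may be taken over $\supp(\distribution_1)\cup\supp(\distribution_2)$. Linearity then follows by distributing the sum.

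\emph{Composite programs.} Sequencing follows by applying the induction hypothesis twice: first to $\QProgram_0$ to split $\sem{\QProgram_0}_\quantumHardware(\distribution)$, then to $\QProgram_1$ on the resulting convex combination. The probabilistic choice $\QProgram_0\oplus_\probVal\QProgram_1$ is immediate, since the defining formula is itself a fixed linear combination of $\sem{\QProgram_0}_\quantumHardware$ and $\sem{\QProgram_1}_\quantumHardware$.

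\emph{Conditional.} This case carries the main bookkeeping obstacle, because the semantics renormalises. Let $w_1^{(k)}=\subsetProb(\distribution_k,\distribution_k[\cVar=1])$ for $k=1,2$. Since $\distribution[\cVar=1]$ is the restriction to the states with $\cVar=1$, we get $w_1=aw_1^{(1)}+bw_1^{(2)}$. The unnormalised restriction satisfies $w_1\cdot\distribution\downarrow^\cVar_1=a w_1^{(1)}\distribution_1\downarrow^\cVar_1+b w_1^{(2)}\distribution_2\downarrow^\cVar_1$. If $w_1>0$, dividing by $w_1$ exhibits $\distribution\downarrow^\cVar_1$ as a convex combination of $\distribution_1\downarrow^\cVar_1$ and $\distribution_2\downarrow^\cVar_1$ with weights $aw_1^{(1)}/w_1$ and $bw_1^{(2)}/w_1$. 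Here a term whose weight vanishes is dropped, so that $\distribution_k\downarrow^\cVar_1=\emptyDistribution$ never appears with positive weight. Applying the induction hypothesis to $\QProgram_0$ and multiplying back by $w_1$ gives
\[
w_1\sem{\QProgram_0}_\quantumHardware(\distribution\downarrow^\cVar_1)=a\,w_1^{(1)}\sem{\QProgram_0}_\quantumHardware(\distribution_1\downarrow^\cVar_1)+b\,w_1^{(2)}\sem{\QProgram_0}_\quantumHardware(\distribution_2\downarrow^\cVar_1).
\]
If $w_1=0$, both sides are $\emptyDistribution$. The branch $\cVar=0$ is handled symmetrically, and adding the two branches gives the claim.

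\emph{While loops.} This is the only genuinely non-finitary case. I would take the least fixed point as the supremum of the Kleene chain $F^n(\bot)$, where $\bot$ maps everything to $\emptyDistribution$. A secondary induction on $n$ shows that each iterate is linear, using the same renormalisation computation as in the conditional case together with the outer induction hypothesis for the loop body $\QProgram$. Linearity then passes to the supremum, because the limit is taken pointwise and is monotone in each hybrid state's mass, and limits commute with fixed nonnegative linear combinations. For the loop-free programs used later, this case is not needed at all.
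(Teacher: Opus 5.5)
Your proof is correct and is essentially the paper's argument: structural induction over the atomic, sequencing, probabilistic-choice and conditional cases. The conditional case rests on the same identity $w_1\cdot\distribution\downarrow^\cVar_1 = a\,w_1^{(1)}\,\distribution_1\downarrow^\cVar_1 + b\,w_1^{(2)}\,\distribution_2\downarrow^\cVar_1$, with the induction hypothesis applied at weights $a\,w_1^{(1)}/w_1$ and $b\,w_1^{(2)}/w_1$, and you handle zero-weight branches more carefully than the paper does. The paper omits \texttt{while} entirely. If you keep that case, state the induction hypothesis for subensembles rather than ensembles: a loop can lose mass, so $\sem{\QProgram_0}_\quantumHardware(\distribution_k)$ in $\QProgram_0;\QProgram_1$ may be a proper subensemble, even $\emptyDistribution$ carrying positive weight, which your zero-weight convention does not cover; with that strengthening your argument goes through unchanged.
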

\begin{proof}
    \begin{itemize}
        \item Case $\QProgram \in AP$. It holds since instruction are applied to hybrid states and transformations are applied independently on each hybrid state.
        \item Case $\QProgram = \QProgram_1; \QProgram_2$.
        Assume that 
		$\sem{\QProgram_1}_\quantumHardware(\distribution) = a \cdot \sem{\QProgram_1}_\quantumHardware(\distribution_1) + b \cdot \sem{\QProgram_1}_\quantumHardware(\distribution_2)$ and that $\sem{\QProgram_2}_\quantumHardware(\distribution ) = a\cdot\sem{\QProgram_2}_\quantumHardware(\distribution_1) + b\cdot\sem{\QProgram_2}(\distribution_2)$ for all ensembles $\distribution$.
        Then by induction hypothesis we get
        \begin{equation}
            \begin{aligned}
            \sem{\QProgram_1;\QProgram_2}_\quantumHardware(\distribution) &= \sem{\QProgram_2}_\quantumHardware(\sem{\QProgram_1}_\quantumHardware(\distribution)) = \sem{\QProgram_2}_\quantumHardware(a\cdot\sem{\QProgram_1}_\quantumHardware(\distribution_1) + b\cdot\sem{\QProgram_1}_\quantumHardware(\distribution_2)) = \\
            &a\cdot \sem{\QProgram_2}_\quantumHardware(\sem{\QProgram_1}_\quantumHardware(\distribution_1)) + b\cdot \sem{\QProgram_2}_\quantumHardware(\sem{\QProgram_1}_\quantumHardware(\distribution_2))=\\
            &a\cdot\sem{\QProgram_1;\QProgram_2}_\quantumHardware(\distribution_1) + b\cdot\sem{\QProgram_1; \QProgram_2}_\quantumHardware(\distribution_2).
            \end{aligned}
        \end{equation}
        \item Case $\QProgram = \texttt{if(} \cVar \texttt{) }\{\QProgram_1\}\texttt{ else }\{\QProgram_2\}$. 
        Assume that $\sem{\QProgram_1}_\quantumHardware(\distribution) = a\cdot\sem{\QProgram_1}_\quantumHardware(\distribution_1) + b\cdot\sem{\QProgram_1}_\quantumHardware(\distribution_2)$ and
        $\sem{\QProgram_2}_\quantumHardware(\distribution) = a\cdot\sem{\QProgram_2}_\quantumHardware(\distribution_1) + b\cdot\sem{\QProgram_2}_\quantumHardware(\distribution_2)$ for all ensembles $\distribution$.
        Then the left-hand side of Equation~\ref{eq:helper-fin-ensem} expands to
        \begin{equation*}
            \begin{aligned}
                &\subsetProb(\distribution, \distribution[\cVar = 1]) \cdot \sem{\QProgram_1}_\quantumHardware(\distribution\downarrow^\cVar_1) + \subsetProb(\distribution, \distribution[\cVar = 0]) \cdot \sem{\QProgram_2}_\quantumHardware(\distribution\downarrow^\cVar_0) = \\
                &\subsetProb(\distribution, \distribution[\cVar = 1]) \cdot \sem{\QProgram_1}_\quantumHardware((a\cdot\distribution_1 + b\cdot \distribution_2)\downarrow^\cVar_1) + \subsetProb(\distribution, \distribution[\cVar = 0]) \cdot \sem{\QProgram_2}_\quantumHardware((a\cdot\distribution_1 + b\cdot \distribution_2)\downarrow^\cVar_0) = \\
                & \subsetProb(\distribution, \distribution[\cVar = 1]) \cdot \sem{\QProgram_1}_\quantumHardware\left(\frac{(a\cdot\subsetProb(\distribution_1, \distribution_1[\cVar = 1])\cdot\distribution_1\downarrow^\cVar_1 + b\cdot \subsetProb(\distribution_2, \distribution_2[\cVar = 1])\cdot \distribution_2\downarrow^\cVar_1)}{\subsetProb(\distribution, \distribution[\cVar = 1])}\right) \\
                &+ \subsetProb(\distribution, \distribution[\cVar = 0]) \cdot \sem{\QProgram_2}_\quantumHardware\left(\frac{(a\cdot\subsetProb(\distribution_1, \distribution_1[\cVar = 0])\cdot\distribution_1\downarrow^\cVar_0 + b\cdot \subsetProb(\distribution_2, \distribution_2[\cVar = 0])\cdot\distribution_2\downarrow^\cVar_0)}{\subsetProb(\distribution, \distribution[\cVar = 0])}\right) = \\
                &a\cdot \subsetProb(\distribution_1, \distribution_1[\cVar = 1]) \cdot \sem{\QProgram_1}_\quantumHardware(\distribution_1\downarrow^\cVar_1) + b\cdot \subsetProb(\distribution_2, \distribution_2[\cVar = 1]) \cdot \sem{\QProgram_1}_\quantumHardware(\distribution_2\downarrow^\cVar_1) \\
                &+ a\cdot \subsetProb(\distribution_1, \distribution_1[\cVar = 0]) \cdot \sem{\QProgram_2}_\quantumHardware(\distribution_1\downarrow^\cVar_0) + b\cdot \subsetProb(\distribution_2, \distribution_2[\cVar = 0]) \cdot \sem{\QProgram_2}_\quantumHardware(\distribution_2\downarrow^\cVar_0)\\
                &= a\cdot \sem{\QProgram}_\quantumHardware(\distribution_1) + b\cdot\sem{\QProgram}_\quantumHardware(\distribution_2) 
            \end{aligned}
        \end{equation*}
        \item Case $\QProgram = \QProgram_1 \oplus_\probVal \QProgram_2$. 
        Assume that $\sem{\QProgram_1}_\quantumHardware(\distribution) = a\cdot\sem{\QProgram_1}_\quantumHardware(\distribution_1) + b\cdot\sem{\QProgram_1}_\quantumHardware(\distribution_2)$ and
        $\sem{\QProgram_2}_\quantumHardware(\distribution) = a\cdot\sem{\QProgram_2}_\quantumHardware(\distribution_1) + b\cdot\sem{\QProgram_2}_\quantumHardware(\distribution_2)$.
        Then 
        \begin{equation*}
            \begin{aligned}
                a\cdot \sem{\QProgram_1 \oplus_\probVal \QProgram_2} (\distribution_1) + b\cdot \sem{\QProgram_1 \oplus_\probVal \QProgram_2}(\distribution_2) = \\
                a\cdot [(1-\probVal) \cdot \sem{\QProgram_1}_\quantumHardware(\distribution_1) + \probVal\cdot \sem{\QProgram_2}_\quantumHardware(\distribution_1)] + b\cdot [(1-\probVal) \cdot \sem{\QProgram_1}_\quantumHardware(\distribution_2) + \probVal\cdot \sem{\QProgram_2}_\quantumHardware(\distribution_2)] = \\
                (1-\probVal)\cdot [a \cdot \sem{\QProgram_1}_\quantumHardware(\distribution_1) + b\cdot \sem{\QProgram_1}_\quantumHardware(\distribution_2)] + \probVal \cdot [a \cdot \sem{\QProgram_2}_\quantumHardware(\distribution_1) + b\cdot\sem{\QProgram_2}_\quantumHardware(\distribution_2)] = \\
                (1-\probVal)\cdot \sem{\QProgram_1}_\quantumHardware(\distribution) + \probVal \cdot \sem{\QProgram_2}_\quantumHardware(\distribution).
            \end{aligned}
        \end{equation*}
    \end{itemize}
\end{proof}

\begin{lemma}
	Let $\QProgram$ be a loop-free hybrid quantum–classical program, 
	$\quantumHardware$ a hardware specification, and $\distribution$ an ensemble.
	Then the function $
	\getFinalEnsemble(EG(\quantumHardware), \langle \QProgram, \distribution\rangle)
	$
	is linear in $\distribution$.  
	\label{lemma:linearity-verification}
\end{lemma}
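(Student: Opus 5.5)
The plan is to reduce the statement to Lemma~\ref{lemma:helper-final-ensemble}. First I show that $\getFinalEnsemble(EG(\quantumHardware), \langle \QProgram, \distribution\rangle) = \sem{\QProgram}_\quantumHardware(\distribution)$ for every ensemble $\distribution$, and not only for the representable ensembles of singular assertions covered by Lemma~\ref{lemma:final-ensemble}. Then linearity, understood as preservation of convex combinations $a\cdot\distribution_1 + b\cdot\distribution_2$ with $a+b=1$ (the sense fixed by Lemma~\ref{lemma:helper-final-ensemble}), transfers immediately.

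For the identification step I would use structural induction on the loop-free program $\QProgram$, following the operational rules of Figure~\ref{fig:operational_rules}.
\begin{itemize}
\item If $\QProgram\in AP$, the only transition is $\langle \QProgram,\distribution\rangle\rightarrow_1^\quantumHardware\langle\halt,\sem{\QProgram}_\quantumHardware(\distribution)\rangle$, so $\getFinalEnsemble$ returns $\sem{\QProgram}_\quantumHardware(\distribution)$ directly.
\item For $\QProgram_0\oplus_\probVal\QProgram_1$, the two transitions have weights $1-\probVal$ and $\probVal$ (or a single weight-$1$ transition if the branches coincide). The recursive sum then matches the semantic clause once the hypothesis is applied to $\QProgram_0$ and $\QProgram_1$.
\item For $\texttt{if}$, the transitions go to $\langle\QProgram_b,\distribution\downarrow^\cVar_b\rangle$ with weight $\subsetProb(\distribution,\distribution[\cVar=b])$. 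This agrees with the $w_b$-weighted semantic clause. A branch whose normalization gives $\emptyDistribution$ has weight $0$ and is correctly absent.
\item Sequencing needs an auxiliary claim, proved by induction on the number of steps of $\QProgram_0$: for every $\QProgram_1$, $\getFinalEnsemble(\langle\QProgram_0;\QProgram_1,\distribution\rangle)=\getFinalEnsemble(\langle\QProgram_1,\getFinalEnsemble(\langle\QProgram_0,\distribution\rangle)\rangle)$.
\end{itemize}

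The auxiliary claim for sequencing is the main obstacle. Inside a $\QProgram_0$ that branches, the continuation $\QProgram_1$ is executed separately on each branch's ensemble rather than on their mixture. I therefore need $\getFinalEnsemble(\langle\QProgram_1,\cdot\rangle)=\sem{\QProgram_1}_\quantumHardware(\cdot)$, which holds by the induction hypothesis on $\QProgram_1$. I also need Lemma~\ref{lemma:helper-final-ensemble} applied to $\QProgram_1$, so that the weighted sum of $\sem{\QProgram_1}_\quantumHardware$ over the branch ensembles equals $\sem{\QProgram_1}_\quantumHardware$ of their convex mixture. The claim for sequencing therefore has to be proved jointly with the main identity, inducting on the size of the program.

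Once the identification is established, take $\distribution=a\cdot\distribution_1+b\cdot\distribution_2$. Then $\getFinalEnsemble(EG(\quantumHardware),\langle\QProgram,\distribution\rangle)=\sem{\QProgram}_\quantumHardware(\distribution)=a\cdot\sem{\QProgram}_\quantumHardware(\distribution_1)+b\cdot\sem{\QProgram}_\quantumHardware(\distribution_2)$ by Lemma~\ref{lemma:helper-final-ensemble}. Rewriting each term on the right back as $\getFinalEnsemble$ concludes. By induction, the same identity extends to finite convex combinations, which is the form used in Algorithm~\ref{alg:checkLin}.
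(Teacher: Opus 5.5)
Your proof is correct, but it takes a different route from the paper. The paper argues directly on the ensemble graph. The base case $\getFinalEnsemble(\markovChain,\langle\halt,\distribution\rangle)=\distribution$ is the identity. At a non-terminal configuration, $\getFinalEnsemble$ is a finite $\QMCTransitionF$-weighted sum of values at successor configurations, each linear by induction, so the sum is linear. The denotational semantics never enters.

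You instead first identify $\getFinalEnsemble(EG(\quantumHardware),\langle\QProgram,\distribution\rangle)$ with $\sem{\QProgram}_\quantumHardware(\distribution)$ for every ensemble, and then import convexity-preservation from Lemma~\ref{lemma:helper-final-ensemble}.

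The paper's argument is shorter, but as written it treats $\QMCTransitionF(\configuration,\configuration')$ as a constant and the successor configurations as depending linearly on $\distribution$. Neither holds for the \texttt{if} rule. There the weight is $\subsetProb(\distribution,\distribution[\cVar=\bit])$, the successor ensemble is the normalized $\distribution\downarrow^{\cVar}_{\bit}$, and which successors exist at all depends on $\distribution$. Making the paper's argument rigorous therefore requires the renormalization computation from the \texttt{if} case of Lemma~\ref{lemma:helper-final-ensemble}.

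Your route reuses that computation rather than redoing it. Your auxiliary sequencing claim also makes explicit a step the paper's proof of Lemma~\ref{lemma:final-ensemble} only asserts: the continuation is executed branch-wise rather than on the mixture.

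Your route also yields a useful by-product: $\getFinalEnsemble=\sem{\QProgram}_\quantumHardware$ on arbitrary ensembles, not only representable ones. The proof of Theorem~\ref{thm:linear-prec-verification} tacitly needs this. Validity there concerns $\sem{\QProgram}_\quantumHardware$ at arbitrary points of a polytope, while Lemma~\ref{lemma:convex-reach-ver} speaks only about $\getFinalEnsemble$.

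The cost is a nested induction: the inner one over residual programs of $\QProgram_0$ (including forms like $\halt;\QProgram_1$), the outer one over program size so the hypothesis applies to $\QProgram_1$. You set this up correctly.
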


\begin{proof}
Because $\QProgram$ is loop-free, its operational semantics unfolds into a 
finite acyclic graph of configurations.  
We prove linearity by induction on the program $\QProgram$.

\begin{itemize}
	\item Base Case. If $\QProgram = \halt$, then
\[
\getFinalEnsemble(\markovChain,\langle\halt,\distribution\rangle)=\distribution,
\]
which is the identity map, hence linear.
	\item Non-terminal configuration. The definition gives
\[
\getFinalEnsemble(\markovChain,\configuration)
  = \sum_{(\configuration,\configuration')\in\supp(\QMCTransitionF)}
      \QMCTransitionF(\configuration,\configuration') \cdot
      \getFinalEnsemble(\markovChain,\configuration'),
\]
where $\QMCTransitionF(\configuration,\configuration')$ is a scalar and by induction hypothesis $\getFinalEnsemble(\markovChain,\configuration')$ is a linear function.
A finite sum of linear maps multiplied by scalars is again linear.
\end{itemize}
\end{proof}

\begin{lemma}
Let $\QProgram$ be a loop-free hybrid quantum--classical program, 
$\quantumHardware$ a hardware specification,
$\quantumPolytope$ a quantum polytope of initial ensembles, and
$\QMarkovChain$ be the ensemble graph induced by $\quantumHardware$.

Then the set of all ensembles reachable from initial ensembles in 
$\operatorname{conv}(\quantumPolytope)$ is exactly the convex hull of the 
ensembles reachable from the corner points of $\quantumPolytope$.  
Formally,
\[
	\bigl\{
	\getFinalEnsemble(\QMarkovChain, \langle\QProgram, \initialDistribution\rangle)
	\;\bigm|\;
	\initialDistribution \in \operatorname{conv}(\quantumPolytope)
	\bigr\}
	=
	\operatorname{conv}
	\bigl(
	\{
	\getFinalEnsemble(\QMarkovChain, \langle\QProgram, \initialDistribution\rangle)
	\mid
	\initialDistribution \in \quantumPolytope
	\}
	\bigr).
\]
\label{lemma:convex-reach-ver}
\end{lemma}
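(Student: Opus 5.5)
We prove the equality by two inclusions, using only the fact that $f(\distribution) := \getFinalEnsemble(\QMarkovChain, \langle\QProgram, \distribution\rangle)$ preserves convex combinations. First we lift the two-point statement of Lemma~\ref{lemma:helper-final-ensemble} to finitely many points. We combine Lemma~\ref{lemma:final-ensemble}, extended from singular to arbitrary ensembles by the same induction on $\QProgram$, with Lemma~\ref{lemma:linearity-verification}. Together they give $f(\sum_i \alpha_i \distribution_i) = \sum_i \alpha_i f(\distribution_i)$ whenever $\alpha_i \ge 0$ and $\sum_i \alpha_i = 1$.

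The lift is an induction on the number $N$ of corners. The case $N=1$ is trivial. For $N>1$, if $\alpha_{N-1}=1$ we are done. Otherwise we write
\[
\sum_{i<N}\alpha_i\distribution_i = (1-\alpha_{N-1})\cdot\Bigl(\sum_{i<N-1}\tfrac{\alpha_i}{1-\alpha_{N-1}}\distribution_i\Bigr) + \alpha_{N-1}\distribution_{N-1}.
\]
The inner sum is a genuine ensemble, so the two-point case applies, followed by the induction hypothesis.

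For the inclusion ``$\subseteq$'', take $\initialDistribution \in \operatorname{conv}(\quantumPolytope)$ and write $\initialDistribution = \sum_j \alpha_j \distribution_j$ over the corners. The lifted linearity gives $f(\initialDistribution) = \sum_j \alpha_j f(\distribution_j)$, which is a convex combination of the corner images and hence lies in the right-hand side. For ``$\supseteq$'', take $\sum_j \alpha_j f(\distribution_j)$ with convex coefficients $\alpha_j$. The ensemble $\sum_j \alpha_j \distribution_j$ lies in $\operatorname{conv}(\quantumPolytope)$, and by the same identity its image equals the given point. The images of the corners lie in $\operatorname{conv}$ of themselves, so no separate case is needed for them.

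The main obstacle is the subensemble and normalization subtlety in the linearity lemmas. Branching via $\distribution\downarrow^{\cVar}_{\bit}$ is not literally linear when a branch has zero weight, since $\normalize$ then returns $\emptyDistribution$. I would handle this by observing that each such term is multiplied by its weight $\subsetProb(\distribution,\distribution[\cVar=\bit])$, so zero-weight branches contribute nothing on either side. Hence the if-else computation in the proof of Lemma~\ref{lemma:helper-final-ensemble} goes through for all cases, including those where $\distribution_1$ or $\distribution_2$ lacks a branch. Once this is settled, the remaining argument is elementary convexity.
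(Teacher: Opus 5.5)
Your proposal is correct and takes essentially the paper's route: show that $F\colon\distribution\mapsto\getFinalEnsemble(\QMarkovChain,\langle\QProgram,\distribution\rangle)$ commutes with finite convex combinations, then obtain ``$\subseteq$'' by pushing the coefficients of $\initialDistribution=\sum_i u_i\distribution_i$ through $F$, and ``$\supseteq$'' by realizing $\sum_i u_i F(\distribution_i)$ as $F(\sum_i u_i\distribution_i)$. The only difference is how linearity is obtained: the paper cites Lemma~\ref{lemma:linearity-verification} directly, whereas you derive the $N$-point identity from the two-point Lemma~\ref{lemma:helper-final-ensemble}, transfer it to $\getFinalEnsemble$ via Lemma~\ref{lemma:final-ensemble} extended to arbitrary ensembles, and explicitly handle zero-weight branches, which is a sound and more careful justification of the same step.
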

\begin{proof}
Following the results of Lemma~\ref{lemma:linearity-verification}, the map
$$F : \initialDistribution \;\mapsto\; 
\getFinalEnsemble(\QMarkovChain,\langle\QProgram,\initialDistribution\rangle)$$
is linear. We now show inclusion in both directions.

Let $\quantumPolytope = \{\distribution_1,\ldots,\distribution_\lenEnsembleSet\}$ be the set of corner points of the initial polytope.  
Every $\initialDistribution \in \operatorname{conv}(\quantumPolytope)$ admits a
representation
\[
\initialDistribution = \sum_{i=1}^\lenEnsembleSet u_i \distribution_i,
\qquad 
u_i \ge 0,\quad \sum_{i=1}^\lenEnsembleSet u_i = 1.
\]

By linearity of $F$,
\[
F(\initialDistribution)
=
F\Bigl(\sum_{i=1}^\lenEnsembleSet u_i \distribution_i\Bigr)
=
\sum_{i=1}^\lenEnsembleSet u_i\, F(\distribution_i).
\]
Hence, every reachable ensemble from a point of 
$\operatorname{conv}(\quantumPolytope)$ lies in 
$\operatorname{conv}(\{F(\distribution_i)\mid \distribution_i \in \quantumPolytope\})$.

To show inclusion in the converse way, let 
\[
\distribution = \sum_{i=1}^\lenEnsembleSet u_i\, F(\distribution_i)
\qquad
(u_i \ge 0,\; \sum u_i =1).
\]
Consider the initial ensemble 
\(
\initialDistribution = \sum_{i=1}^\lenEnsembleSet u_i \distribution_i
\in \operatorname{conv}(\quantumPolytope).
\)
Then by linearity again,
\[
F(\initialDistribution) 
= \sum_{i=1}^\lenEnsembleSet u_i F(\distribution_i) 
= \distribution.
\]
\end{proof}

\subsection{Proof of Lemma~\ref{lemma:final-ensemble}}
\label{appendix:proof-final-ensemble}
\begin{lemma*}[Lemma~\ref{lemma:final-ensemble}]For all loop-free hybrid programs $\QProgram$, 
hardware specifications $\quantumHardware$, 
and singular ensemble assertions  $\precondition$,
 $\getFinalEnsemble(EG(H), \langle\QProgram, \distribution_{\precondition}\rangle) = \sem{\QProgram}_\quantumHardware(\distribution_\precondition)$.
\end{lemma*}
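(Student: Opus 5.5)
We prove a slightly stronger statement: for every loop-free program $\QProgram$ and every ensemble $\distribution$ (not only $\distribution_\precondition$), $\getFinalEnsemble(EG(H),\langle\QProgram,\distribution\rangle)=\sem{\QProgram}_\quantumHardware(\distribution)$. The lemma then follows by instantiating $\distribution:=\distribution_\precondition$. The argument is by structural induction on $\QProgram$. Since $\QProgram$ is loop-free, the part of $EG(H)$ reachable from $\langle\QProgram,\distribution\rangle$ is a finite acyclic graph, so $\getFinalEnsemble$ is well defined by well-founded recursion on the remaining depth. The recursion we use is therefore on the length of the path to $\halt$, which decreases strictly in every step.

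\emph{Base and simple cases.}
\begin{itemize}
\item For $\QProgram\in AP$, the only operational rule gives $\langle\QProgram,\distribution\rangle\rightarrow_1^\quantumHardware\langle\halt,\sem{\QProgram}_\quantumHardware(\distribution)\rangle$. Unfolding $\getFinalEnsemble$ once yields $\sem{\QProgram}_\quantumHardware(\distribution)$.
\item For $\QProgram_0\oplus_\probVal\QProgram_1$ with $\QProgram_0\neq\QProgram_1$, the two successors have weights $1-\probVal$ and $\probVal$. The induction hypothesis and the semantics in Figure~\ref{fig:semantics_dist_transformer} then give the claim directly.
\item If $\QProgram_0=\QProgram_1$, the single transition with weight $1$ yields $\sem{\QProgram_0}_\quantumHardware(\distribution)=(1-\probVal)\sem{\QProgram_0}_\quantumHardware(\distribution)+\probVal\sem{\QProgram_0}_\quantumHardware(\distribution)$.
\item For \texttt{if}, the two rules fire exactly when $\distribution\downarrow^\cVar_b$ is a proper ensemble, with weight $w_b$. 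When $w_b=0$, the semantic summand $w_b\cdot\sem{\cdot}(\emptyDistribution)$ vanishes, so it matches the missing edge. The induction hypothesis applied to the branches closes this case.
\end{itemize}

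\emph{Sequential composition.} This is the main obstacle, because the operational rule steps $\QProgram_0$ inside the context ${-};\QProgram_1$ rather than running it to completion. To handle it, we first prove an auxiliary claim by induction on the depth of $\QProgram_0$, where $\QProgram_0$ ranges over loop-free programs and $\halt$:
\[
\getFinalEnsemble(EG(H),\langle\QProgram_0;\QProgram_1,\distribution\rangle)=\sum_{\distribution'}\mu(\distribution')\cdot\getFinalEnsemble(EG(H),\langle\QProgram_1,\distribution'\rangle),
\]
where $\mu$ is the distribution over the ensembles at which $\QProgram_0$ halts. In other words, the final ensemble of the composed program equals $\getFinalEnsemble$ of $\QProgram_1$ started from $\getFinalEnsemble$ of $\QProgram_0$, but linearly averaged over the halting ensembles.

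The auxiliary claim is proved as follows.
\begin{itemize}
\item If $\QProgram_0=\halt$, there is one transition of weight $1$ to $\langle\QProgram_1,\distribution\rangle$, which gives the claim immediately.
\item Otherwise, every transition $\langle\QProgram_0,\distribution\rangle\rightarrow_p\langle\QProgram_0',\distribution''\rangle$ lifts, through the congruence rule, to $\langle\QProgram_0;\QProgram_1,\distribution\rangle\rightarrow_p\langle\QProgram_0';\QProgram_1,\distribution''\rangle$ with the same weight. Conversely, every transition of the composed configuration arises in this way.
\end{itemize}
One subtlety is that several transitions could collapse onto the same target configuration, since $\QMCTransitionF$ is a function. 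We assume, as the $\QProgram_0=\QProgram_1$ rule for $\oplus$ suggests, that distinct rules yield distinct targets. The induction then closes this step.

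Having the auxiliary claim, we linearize the sum over halting ensembles. By Lemma~\ref{lemma:linearity-verification}, $\getFinalEnsemble(EG(H),\langle\QProgram_1,\cdot\rangle)$ is linear. Applying it, the averaged sum collapses to $\getFinalEnsemble(EG(H),\langle\QProgram_1,\getFinalEnsemble(EG(H),\langle\QProgram_0,\distribution\rangle)\rangle)$. Two applications of the induction hypothesis then turn this into $\sem{\QProgram_1}_\quantumHardware(\sem{\QProgram_0}_\quantumHardware(\distribution))=\sem{\QProgram_0;\QProgram_1}_\quantumHardware(\distribution)$. Alternatively, in place of Lemma~\ref{lemma:linearity-verification}, one could invoke Lemma~\ref{lemma:helper-final-ensemble} on $\sem{\QProgram_1}_\quantumHardware$ after first rewriting with the induction hypothesis.
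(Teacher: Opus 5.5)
Your proposal is correct and follows the paper's proof. Both use structural induction with the hypothesis quantified over all ensembles, and in the sequential case both decompose the run of $P_0$ into its halting ensembles with weights $p_i$, then collapse $\sum_i p_i\cdot\getFinalEnsemble(EG(H),\langle P_1,\beta'_i\rangle)$ using the induction hypothesis and linearity. Your auxiliary induction on depth makes explicit the decomposition the paper simply asserts ``following the first operational rule.'' Your fallback via Lemma~\ref{lemma:helper-final-ensemble} is exactly the paper's route, and it is the sturdier of your two options: the paper's argument for Lemma~\ref{lemma:linearity-verification} treats the weights $\Delta(\kappa,\kappa')$ as constants, although they depend on $\beta$ (e.g.\ for \texttt{if}).
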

Let $\QProgram \in \allPrograms$ be a hybrid quantum-classical program, 
$\quantumHardware$ a quantum hardware specification,
$\markovChain$ the corresponding ensemble graph, and
$\precondition$ be a singular ensemble assertion.
Then 
\begin{itemize}
    \item Case $\QProgram \in AP$. We have 
    \begin{equation*}
        \getFinalEnsemble(\markovChain, \langle\QProgram, \distribution_{\precondition}\rangle) = \getFinalEnsemble(\markovChain, \langle\halt, \sem{\QProgram}_\quantumHardware(\distribution_{\precondition})\rangle) = \sem{\QProgram}_\quantumHardware(\distribution_{\precondition}).
    \end{equation*}
    \item Case $\QProgram = \QProgram_1;\QProgram_2$. We assume that, for all ensembles $\distribution \in \allProbDistributions{\allHybridStates}$, we have 
	$$\getFinalEnsemble(\markovChain, \langle\QProgram_1, \distribution\rangle) = \sem{\QProgram_1}_\quantumHardware(\distribution) \quad \text{ and } \quad \getFinalEnsemble(\markovChain, \langle\QProgram_2, \distribution\rangle) = \sem{\QProgram_2}_\quantumHardware(\distribution).$$
    Also, suppose that $\sem{\QProgram_1}_\quantumHardware(\distribution_\precondition) = \distribution'$ leads to $n$ different nodes in the ensemble graph. 
    Then following the first operational rule for sequences of programs we get the
    \begin{equation*}
        \begin{aligned}
            \getFinalEnsemble(\markovChain,\langle \QProgram, \distribution_\precondition\rangle) = \sum _{ 0 \leq i < n} \probVal_i \cdot \getFinalEnsemble(\markovChain, \langle \QProgram_2 ,\distribution'_i\rangle)
        \end{aligned}
    \end{equation*}
    where $\distribution' = \sum_{0 \leq i < n}\probVal_i \cdot \distribution'_i$ for some $\probVal_i \in [0,1]$.
    By induction hypothesis and by applying Lemma~\ref{lemma:helper-final-ensemble} we get that
    \begin{equation*}
        \begin{aligned}
            \getFinalEnsemble(\markovChain, \langle \QProgram, \distribution_\precondition\rangle) = \sum _{ 0 \leq i < n} \probVal_i \cdot \sem{\QProgram_2}_\quantumHardware(\distribution'_i) =
            \sem{\QProgram_2}_\quantumHardware(\distribution') = \sem{\QProgram_2}_\quantumHardware(\sem{\QProgram_1}_\quantumHardware(\distribution_\precondition)).
        \end{aligned}
    \end{equation*}
    \item Case $\QProgram = \texttt{if(} \cVar \texttt{) }\{\QProgram_1\}\texttt{ else }\{\QProgram_2\}$.  Assume that for all ensembles $\distribution\in \allProbDistributions{\allHybridStates}$ we have  $\getFinalEnsemble(\markovChain, \langle\QProgram_1, \distribution\rangle) = \sem{\QProgram_1}_\quantumHardware(\distribution)$ and $\getFinalEnsemble(\markovChain, \langle\QProgram_2, \distribution\rangle) = \sem{\QProgram_2}_\quantumHardware(\distribution)$.
    Then 
    \begin{equation*}
    \begin{aligned}
        \getFinalEnsemble(\markovChain, \langle\QProgram, \distribution\rangle) = & \subsetProb(\distribution, \distribution[\classicalVar=1]) \cdot\getFinalEnsemble(\markovChain, \langle\QProgram_1, \distribution\downarrow^\cVar_1\rangle) +\\
        & \subsetProb(\distribution, \distribution[\classicalVar=0])\cdot \getFinalEnsemble(\markovChain, \langle\QProgram_2, \distribution\downarrow^\cVar_0\rangle)\\
         &=\subsetProb(\distribution, \distribution[\classicalVar=1]) \cdot\sem{\QProgram_1}_\quantumHardware(\distribution\downarrow^\cVar_1) + \subsetProb(\distribution, \distribution[\classicalVar=0])\cdot \sem{\QProgram_2}_\quantumHardware(\distribution\downarrow^\cVar_0)\\
		 &=\sem{\texttt{if(} \cVar \texttt{) }\{\QProgram_1\}\texttt{ else }\{\QProgram_2\}}(\distribution)
    \end{aligned}
    \end{equation*}
    by our induction hypothesis.
    \item Case $\QProgram = \QProgram_1~\oplus_\probVal~\QProgram_2$ with $\probVal \in [0,1]$.
    We assume, for all $\distribution \in \allProbDistributions{\allHybridStates}$ that $\getFinalEnsemble(\markovChain, \langle\QProgram_1, \distribution\rangle) = \sem{\QProgram_1}_\quantumHardware(\distribution)$ and $\getFinalEnsemble(\markovChain, \langle\QProgram_2, \distribution\rangle) = \sem{\QProgram_2}_\quantumHardware(\distribution)$.
    Then 
    \begin{equation*}
        \begin{aligned}
            \getFinalEnsemble(\markovChain, \langle\QProgram, \distribution\rangle) &= (1-\probVal) \cdot \getFinalEnsemble(\markovChain,\langle\QProgram_1, \distribution\rangle) + \probVal\cdot \getFinalEnsemble(\markovChain, \langle\QProgram_2, \distribution\rangle) \\
            &= (1-\probVal) \cdot \sem{\QProgram_1}_\quantumHardware(\distribution) + \probVal\cdot \sem{\QProgram_2}_\quantumHardware(\distribution)\\
			&=\sem{\QProgram_1 \oplus_p \QProgram_2}(\distribution)
        \end{aligned}
    \end{equation*}
    by our induction hypothesis.
\end{itemize}

The computation of $\getFinalEnsemble$ can be done with arbitrary precision since at each iteration (i.e. transition on the Markov chain) we are multiplying the probabilities represented by rational numbers that are multiples of $1/\ell$ for some $\ell > 0$. 
More concretely, a multiplication between such probabilities yield a new probability multiple of $1/\ell^2$, and in general $\horizon$ multiplications yield a probability that is multiple of $1/\ell^\horizon$. 
So if we need precision $\precision$ to represent $\ell$, then, if for example we consider bits as the unit for precision, then $\precision = \log_2(\ell)$ and the final precision required is given by $\log_2(\ell^\horizon) = \horizon \cdot \log_2(\ell) = \horizon \cdot \precision$.

\subsection{Proof of Lemma~\ref{lemma:real-formula}}
\begin{lemma*}[Lemma~\ref{lemma:real-formula}]For all representable ensembles $\distribution$ and all ensemble assertions $\postcondition$, the first-order formula $\textsc{realForm}(\distribution, \postcondition)$
is satisfiable in the theory of real-closed fields iff $\distribution\vDash \postcondition$.
\end{lemma*}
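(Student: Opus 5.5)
We prove a stronger statement by structural induction on $\postcondition$. The stronger statement is needed because the recursive cases of \textsc{realForm} call it on symbolic ensembles with free real variables, not only on representable ensembles. For a symbolic ensemble $\gamma$ whose free variables are $\vec u$, and an assignment $\sigma$ of reals to $\vec u$ such that $\gamma[\sigma]$ is an ensemble (nonnegative weights summing to one), we claim
\[
\mathbb{R},\sigma \models \textsc{realForm}(\gamma,\postcondition) \iff \gamma[\sigma]\vDash\postcondition .
\]
In the same induction we prove the companion fact for terms: $\textsc{realTerm}(\gamma,T)$ evaluated under $\sigma$ equals $\sem{T}(\gamma[\sigma])$. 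The lemma is then the special case of a closed formula with $\gamma=\distribution$ representable. For closed formulas, satisfiability in the theory of real-closed fields coincides with truth in $\mathbb{R}$, because that theory is complete.

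\textbf{Term lemma.} We argue by induction on $T$.
\begin{itemize}
\item Constants are immediate.
\item For $\probability(\stateAssertion)$, the set $S(\gamma)$ contains the support of $\gamma[\sigma]$, and $\sem{\probability(\stateAssertion)}$ sums $\gamma[\sigma](h)$ over $h\vDash\stateAssertion$. States outside $S(\gamma)$ contribute $0$, so the two sums agree.
\item $\trace(\upsilon)$ is handled the same way, since $\trace(\sem{\upsilon}(h))$ is a fixed real constant for each $h$.
\item Arithmetic operations commute with evaluation.
\end{itemize}

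\textbf{Formula lemma.} We argue by induction on $\postcondition$.
\begin{itemize}
\item Atomic comparisons $T_0\,\allRelOps\,T_1$ follow directly from the term lemma.
\item The Boolean connectives are routine. Negation uses the fact that both sides are two-valued under a fixed $\sigma$.
\item For $\postcondition'\downarrow^\cVar_\bit$, the constraint $\eta>0$ holds exactly when $\distribution\downarrow^\cVar_\bit$ is a genuine ensemble rather than $\emptyDistribution$. Under $\sigma$, the symbolic ensemble $\gamma'$ evaluates to $\normalize(\gamma[\sigma],\gamma[\sigma][\cVar=\bit])$, so the induction hypothesis applies.
\item For $\postcondition'\downarrow^{\overrightarrow{\qVar}}_\someMatrix$ the argument is analogous. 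We check that $\gamma'$ aggregates the weights $\probVal_\qubit\gamma(h)/\eta$ on the post-operator states $\Pi^\dagger M\ket{\qubit}/\sqrt{\probVal_\qubit}$. This matches the definition of $\distribution\downarrow^{\overrightarrow{\qVar}}_\someMatrix$, because $\sem{\langle\someMatrix,\overrightarrow{\qVar}\rangle}(h)$ is exactly that unit ensemble. Coinciding states merge their weights additively, as the sum in $\gamma'$ does.
\end{itemize}

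\textbf{The $\oplus_T$ case.} This is the main obstacle: the semantics quantifies over arbitrary ensembles $\distribution_0,\distribution_1$, while the formula quantifies only over weights $u_j,u'_j$ on the $N$ states of $S(\gamma)$. For the backward direction, a satisfying assignment to $u,u'$ yields ensembles $\gamma'_0[\sigma],\gamma'_1[\sigma]$. The induction hypothesis gives the required memberships, and the last conjunct gives the decomposition, using that $\probVal$ evaluates to $\sem{T}(\gamma[\sigma])$ by the term lemma.

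For the forward direction, take witnesses $\distribution_0,\distribution_1$. We must show that whenever the coefficient of $\distribution_i$ is positive, $\supp(\distribution_i)\subseteq\supp(\gamma[\sigma])\subseteq S(\gamma)$. This holds because all weights are nonnegative and $\gamma[\sigma]=(1-\probVal)\distribution_0+\probVal\distribution_1$, so no mass can be placed outside the support. Setting $u_j=\distribution_0(h_j)$ and $u'_j=\distribution_1(h_j)$ then works.

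The subtle subcase is a zero coefficient. If $\probVal=1$, then $\distribution_0$ is unconstrained by the semantics and may have support outside $S(\gamma)$. However, the formula then only requires $u$ to be some distribution on $S(\gamma)$, and $\textsc{realForm}(\gamma'_0,\postcondition_0)$ is guarded by $\probVal<1$, so any choice of $u$ works. The case $\probVal=0$ is symmetric. We need $S(\gamma)\neq\emptyset$ to pick such a distribution, which holds since $\gamma[\sigma]$ is an ensemble.

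\textbf{The $\oplus$ case.} The unparameterized $\oplus$ follows in the same way, by additionally existentially quantifying $\probVal\in[0,1]$.
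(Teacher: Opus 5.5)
Your route is the same as the paper's: structural induction on $\psi$, with \textsc{realTerm} agreeing with $\sem{\cdot}$ on terms. Your strengthening to open formulas over symbolic ensembles, evaluated under an assignment $\sigma$, is what the paper's induction actually needs but never states. The paper's hypothesis quantifies only over ensembles, yet it is applied to $\gamma'_0,\gamma'_1$, which have free $u_j,u'_j$. Your support argument and zero-coefficient subcase in the $\oplus_T$ case are also more careful than the paper's one-line appeal to ``witness ensembles''.

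One step fails, however. In the forward direction of $\oplus_T$ you argue that no mass can lie outside $\supp(\gamma[\sigma])$ because ``all weights are nonnegative''. Those weights are $1-p$ and $p$, and nothing forces $p=\sem{T}(\beta)$ into $[0,1]$: $T$ is an arbitrary probability term, for instance the constant $2$. For $p>1$ (or $p<0$), the witnesses can carry mass outside the support that cancels in $(1-p)\beta_0+p\beta_1$. The encoding only ranges over $S(\gamma)$, so it cannot see that mass.

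Here is a concrete counterexample. Take one qubit $q_0$, $h_a=(\ket{0},c)$, $h_b=(\ket{1},c)$, $\beta=\{h_a:1\}$, and $\psi:=(1=1)\oplus_{2}(\mathbb{P}([q_0]=\ket{1})=0.5)$. The witnesses $\beta_0=\{h_b:1\}$ and $\beta_1=\{h_a:0.5,\,h_b:0.5\}$ satisfy $\beta=-\beta_0+2\beta_1$, and $\beta_1$ meets the right operand, so $\beta\vDash\psi$. On the other side, $S(\beta)=\{h_a\}$ forces $u_0=u'_0=1$, and then $\textsc{realForm}(\gamma'_1,\cdot)$ reduces to $0=0.5$, so the closed formula is unsatisfiable. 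The backward direction of your argument is unaffected.

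So the lemma as stated is false, and the paper's proof has the same blind spot. Both arguments are valid only under the side condition that every $T$ occurring in a subformula $\psi_0\oplus_T\psi_1$ evaluates into $[0,1]$ on every ensemble. This holds for every instance the paper actually uses: $\mathbb{P}(x=1)$ and the coefficients in \textsc{HM}$ij$ and \textsc{HUnitary}. State that hypothesis explicitly; with it, your proof is complete.
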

\label{sec:proof-real-formula}
Suppose we are given an ensemble assertion $\postcondition$, and a representable symbolic ensemble $\distribution$.
Also, we note that the function {\sc realTerm} matches exactly the real-valued semantics of $\sem{\cdot}$.

\paragraph{Case $\postcondition = \neg \postcondition'$}
The induction hypothesis is that for all ensembles $\distribution$ we have $\distribution \vDash \postcondition'$ iff $\textsc{realForm}(\distribution, \postcondition')$ is sat.
\begin{itemize}
	\item[$(\Rightarrow)$] Suppose that $\textsc{realForm}(\distribution, \postcondition)= \neg \textsc{realForm}(\distribution, \postcondition')$ is sat.
	Then it follows that $\textsc{realForm}(\distribution, \postcondition')$ is unsat, so  $\distribution \vDash\postcondition$ follows.
	\item[$(\Leftarrow)$]  Suppose that $\distribution \vDash \postcondition$, then it follows that $\distribution \nvDash \postcondition'$ and by induction hypothesis, $\textsc{realForm}(\distribution, \postcondition')$ is unsat and therefore $\textsc{realForm}(\distribution, \postcondition)$ is sat.
\end{itemize}

\paragraph{Case $\postcondition = \postcondition_1 \land \postcondition_2$} 
The induction hypothesis is that for all ensembles $\distribution$ we have $\distribution \vDash \postcondition_1$ iff $\textsc{realForm}(\distribution, \postcondition_1)$ is sat, and similar for $\postcondition_2$.
\begin{itemize}
	\item[$(\Rightarrow)$] Suppose that $\textsc{realForm}(\distribution, \postcondition) = \textsc{realForm}(\distribution, \postcondition_1) \land \textsc{realForm}(\distribution, \postcondition_2)$ is sat. 
	Then it follows that both $\textsc{realForm}(\distribution, \postcondition_1)$ and $\textsc{realForm}(\distribution, \postcondition_2)$ are sat. 
	By the induction hypothesis we then know that $\distribution\vDash \postcondition_1$ and $\distribution \vDash \postcondition_2$, and therefore $\distribution \vDash \postcondition$.
	\item[$(\Leftarrow)$] Suppose $\distribution \vDash \postcondition$ then it follows that $\distribution \vDash \postcondition_1$ and $\distribution \vDash \postcondition_2$. 
	By the induction hypothesis, $\textsc{realForm}(\distribution, \postcondition_1)$ and $\textsc{realForm}(\distribution, \postcondition_2)$ are sat, and therefore $\textsc{realForm}(\distribution, \postcondition)$ is sat.
\end{itemize}

\paragraph{Case $\postcondition = \postcondition_1 \lor \postcondition_2$} 
The induction hypothesis is the same as the previous case.
\begin{itemize}
	\item[$(\Rightarrow)$] Suppose that $\textsc{realForm}(\distribution, \postcondition) = \textsc{realForm}(\distribution, \postcondition_1) \lor \textsc{realForm}(\distribution, \postcondition_2)$ is sat. 
	Then it follows that either $\textsc{realForm}(\distribution, \postcondition_1)$ or $\textsc{realForm}(\distribution, \postcondition_2)$ are sat. 
	By the induction hypothesis we then know that $\distribution\vDash \postcondition_1$ or $\distribution \vDash \postcondition_2$, and therefore $\distribution \vDash \postcondition$.
	\item[$(\Leftarrow)$] Suppose $\distribution \vDash \postcondition$ then it follows that $\distribution \vDash \postcondition_1$ or $\distribution \vDash \postcondition_2$. 
	By the induction hypothesis, either $\textsc{realForm}(\distribution, \postcondition_1)$ or $\textsc{realForm}(\distribution, \postcondition_2)$ are sat, and therefore $\textsc{realForm}(\distribution, \postcondition)$ is sat.
\end{itemize}

\paragraph{Case $\postcondition = \postcondition'\downarrow^\cVar_\bit$}
The induction hypothesis is that for all ensembles $\distribution$ we have $\distribution \vDash \postcondition'$ iff $\textsc{realForm}(\distribution, \postcondition')$ is sat.
\begin{itemize}
	\item[$(\Rightarrow)$] Consider the $\gamma'$ and  and the $\eta$ as defined in the main text.
	Suppose that $\textsc{realForm}(\distribution, \postcondition) = \eta > 0 \land \textsc{realForm}(\gamma', \postcondition')$. Then $\textsc{realForm}(\gamma', \postcondition')$ must be sat. By the induction hypothesis we then obtain that the normalized subensemble $\gamma'$ when $\cVar = \bit$ satisfies $\postcondition'$.
	\item[$(\Leftarrow)$] Suppose that $\distribution \vDash \postcondition$. Then it follows that $\distribution\downarrow^\cVar_\bit \vDash \postcondition'$. By induction hypothesis we then have that $\textsc{realForm}(\distribution, \postcondition)$ is sat.
\end{itemize}

\paragraph{Case $\postcondition = \postcondition'\downarrow ^{\overrightarrow{\qVar}}_\someMatrix$} This case is proven similarly to the previous case.
\begin{itemize}
	\item[$(\Rightarrow)$]
	Suppose that $\textsc{realForm}(\distribution, \postcondition) = 
        \eta > 0 \land 
            \textsc{realForm}( \distribution\downarrow^{\overrightarrow{\qVar}}_\someMatrix,\postcondition')$. 
	Thereafter, $\textsc{realForm}(\distribution\downarrow^{\overrightarrow{\qVar}}_\someMatrix, \postcondition')$ must be sat. By the induction hypothesis we then obtain that the normalized subensemble satisfies $\postcondition'$.
	\item[$(\Leftarrow)$] Suppose that $\distribution \vDash \postcondition$. Then if follows that $\distribution\downarrow^{\overrightarrow{\qVar}}_\someMatrix \vDash \postcondition'$. 
	By induction hypothesis we then have that $\textsc{realForm}(\distribution, \postcondition)$ is sat.
\end{itemize}

\paragraph{Case $\postcondition = \postcondition_1 \oplus_T \postcondition_2$}
The induction hypothesis is that for all ensembles $\distribution$ we have $\distribution \vDash \postcondition_1$ (resp. $\distribution \vDash \postcondition_2$) iff $\textsc{realForm}(\distribution, \postcondition_1)$ is sat (resp. $\textsc{realForm}(\distribution, \postcondition_2)$).
 \begin{itemize}
	\item[$(\Rightarrow)$] Suppose that $\textsc{realForm}(\distribution, \postcondition)$ equals the expression shown in Equation~\ref{eq:real-form1} and is sat, and that $\probVal = \textsc{realTerm}(\distribution, T)$ is the resolved probabilistic term. 
	Then, there must exist witness ensembles $\distribution_1$ and $\distribution_2$ that satisfy $\postcondition_1$ and $\postcondition_2$ respectively, and such that $\distribution = \distribution_1 \oplus_{T} \distribution_2$. Thus, $\distribution \vDash \postcondition$.
	\item[$(\Leftarrow)$] Similarly, if $\distribution \vDash \postcondition_1 \oplus_T \postcondition_2$, then there must exist witness ensembles $\distribution_1\vDash \postcondition_1$ and $\distribution_2\vDash \postcondition_2$ such that $\distribution = \distribution_1 \oplus_{T} \distribution_2$. Thus, the symbolic ensembles $\gamma_1$ and $\gamma_2$ can be resolved to satisfy the FOL formula of Equation~\ref{eq:real-form1}.
 \end{itemize}

 \paragraph{Case $\postcondition = \postcondition_1 \oplus \postcondition_2$}
This case is shown similarly to the previous case.

\paragraph{Case $\postcondition = T_1~\allRelOps~T_2$} Replicates exactly the real-valued semantics.
Therefore, the first-order formula is satisfiable precisely when the ensemble satisfies the assertion.

Since the procedure $\getFinalEnsemble$ returns representable ensembles, then comparisons and additions between probabilities can be performed exactly when constructing the first order formula.
Note, that the states assertions should consider the precision utilized.

\subsection{Proof of Theorem~\ref{thm:linear-prec-verification}}
\label{proof:linear-prec-verification}

\begin{theorem*}[Correctness Theorem~\ref{thm:linear-prec-verification}] For all hardware specifications $\quantumHardware$, loop-free hybrid programs $\QProgram$, and ensemble assertions $\precondition$ and $\postcondition$,
if $\precondition$ is linear,
then $\{\precondition\}~\QProgram^\quantumHardware~\{\postcondition\}$ is valid iff $\textsc{checkLin}(\quantumHardware, \QProgram, \precondition, \postcondition)=\textsc{True}$.
\end{theorem*}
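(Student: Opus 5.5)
The plan is to reduce validity of the triple to a statement about the finitely many polytopes in $\preconditionPolytopes(\precondition)$, and then to a statement about the corners of each polytope. Three earlier results do the work: Lemma~\ref{lemma:final-ensemble}, which identifies $\getFinalEnsemble$ with the semantics; Lemma~\ref{lemma:convex-reach-ver}, on convex reachability; and Lemma~\ref{lemma:real-formula}, on faithfulness of the real-closed-field encoding.

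First I would unfold $\sem{\precondition}$ for a linear $\precondition = \precondition_{\QSetPolytopes}$. Since $\sem{\precondition_\distribution} = \{\distribution\}$ and $\sem{\cdot}$ turns $\lor$ into union, the main claim is $\sem{\bigoplus_{\distribution\in\quantumPolytope}\precondition_\distribution} = \operatorname{conv}(\quantumPolytope)$. I would prove it by induction on the number of corners, using the semantics of $\oplus = \bigcup_{\probVal}\oplus_\probVal$: an iterated binary convex combination of ensembles reaches exactly the convex combinations of the corners. The degenerate coefficients $\probVal\in\{0,1\}$ are harmless because of the implication guards in the definition of $\oplus_\RealDTerm$. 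It follows that $\sem{\precondition} = \bigcup_{\quantumPolytope\in\preconditionPolytopes(\precondition)}\operatorname{conv}(\quantumPolytope)$, so the triple is valid iff, for every polytope $\quantumPolytope$ and every $\distribution\in\operatorname{conv}(\quantumPolytope)$, we have $\sem{\QProgram}_\quantumHardware(\distribution)\vDash\postcondition$.

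Second, for a fixed polytope with corners $\distribution_0,\dots,\distribution_{N-1}$, I would invoke Lemma~\ref{lemma:final-ensemble}. That lemma is stated for singular assertions, but its inductive proof applies to any ensemble, and all we use is that corners are representable. It identifies $\distribution_i'$ computed on line 4 with $\sem{\QProgram}_\quantumHardware(\distribution_i)$. Lemma~\ref{lemma:convex-reach-ver}, together with the linearity of Lemma~\ref{lemma:helper-final-ensemble} extended to finite convex combinations by induction, then gives
\[
\{\sem{\QProgram}_\quantumHardware(\distribution)\mid \distribution\in\operatorname{conv}(\quantumPolytope)\} = \operatorname{conv}(\{\distribution_0',\dots,\distribution_{N-1}'\}).
\]
So the per-polytope condition is equivalent to saying that $\sum_i u_i\distribution_i'\vDash\postcondition$ for all $u$ in the simplex. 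That is exactly the sentence tested on line 6.

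Third, I need that the check on line 6 decides this sentence faithfully. I would encode it as $\forall u.\,(\text{simplex}(u)\Rightarrow \textsc{realForm}(\sum_i u_i\distribution_i',\postcondition))$, where $\sum_i u_i\distribution_i'$ is a symbolic ensemble. I would then lift Lemma~\ref{lemma:real-formula} from representable ensembles to symbolic ensembles under a real assignment: for each fixed value of $u$, the instantiated formula holds in $\mathbb{R}$ iff the instantiated ensemble satisfies $\postcondition$. The inductive argument of that lemma goes through unchanged because every clause is pointwise in the weights. Since the theory of real-closed fields is complete and decidable, deciding the sentence is then equivalent to the semantic condition. With that in hand, the algorithm returns \texttt{false} exactly when some polytope has a violating convex combination, and \texttt{true} otherwise, which gives both directions of the equivalence.

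I expect the main obstacle to be the third step. The difficulty is that Lemma~\ref{lemma:real-formula} is phrased as ``satisfiable'' for concrete ensembles, whereas here it must hold uniformly under a universal quantifier over symbolic weights. Some cases of \textsc{realForm} depend on the support $S(\gamma)$, for instance the case $\downarrow^{\cVar}_{\bit}$, which requires $\eta>0$, and the case $\oplus$, which introduces one fresh variable per support element. For particular values of $u$ some weights may vanish, so the support used syntactically is larger than the actual support. I would handle this by checking that every clause tolerates zero weights in the syntactic support. In the $\oplus$ case, I would argue that the witness variables $u_j,u'_j$ may be taken over the full syntactic support, since a witness ensemble for a convex decomposition must have support contained in the support of the decomposed ensemble anyway.
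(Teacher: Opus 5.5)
Your proposal is correct and follows the paper's proof. Like the paper, you fix a polytope, use Lemma~\ref{lemma:convex-reach-ver} to identify the final ensembles reachable from $\operatorname{conv}(\quantumPolytope)$ with the convex hull of the corners' final ensembles, and read both directions off the sentence on line~6. Your steps~1 and~3 spell out what the paper takes for granted with ``by the procedure formula'': that $\sem{\precondition}$ is the union of the hulls $\operatorname{conv}(\quantumPolytope)$, and that the universally quantified \textsc{realForm} encoding stays faithful for symbolic weights, including zero weights in the syntactic support. One small nit: support containment of $\oplus$-witnesses holds only for $0<\probVal<1$; at $\probVal\in\{0,1\}$ the unconstrained witness can simply be chosen on $S(\gamma)$.
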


Suppose we are given a hardware specification $\quantumHardware$, a loop-free hybrid program $\QProgram$, and ensemble assertions $\precondition$ and $\postcondition$ such that $\precondition$ is linear.  
Thereafter, we can show Theorem~\ref{thm:linear-prec-verification} as follows.
Let us fix an arbitrary polytope and its set of corners $\quantumPolytope \in \preconditionPolytopes(\precondition)$.
The set of ensembles we can reach is then given by the polytope with corners $\quantumPolytope' = \{\distribution_1, \cdots, \distribution_\lenEnsembleSet\} = \{\getFinalEnsemble(\markovChain, \langle\QProgram, \initialDistribution\rangle) \mid \initialDistribution \in \quantumPolytope \}$ due to Lemma~\ref{lemma:convex-reach-ver}.
\begin{itemize}
	\item [$(\Rightarrow)$]  Suppose $\textsc{checkLin}(\quantumHardware, \QProgram, \precondition, \postcondition)=\textsc{True}$. 
	If we take any initial ensemble $\initialDistribution \in \operatorname{conv}(\quantumPolytope)$, following Lemma~$\ref{lemma:convex-reach-ver}$, the final ensemble can be expressed as $$\distribution = \getFinalEnsemble(\markovChain, \langle\QProgram, \initialDistribution\rangle) = \sum_{1\leq i\le N} u_i\cdot\distribution_i.$$
	By the procedure formula, we have $\distribution \vDash\postcondition$.
	\item [$(\Leftarrow)$] Suppose that $\{\precondition\}~\QProgram^\quantumHardware~\{\postcondition\}$ is valid. 
	To show that $\textsc{checkLin}(\quantumHardware, \QProgram, \precondition, \postcondition)=\textsc{True}$ we choose arbitrary reals $u_i$ such that $\sum_i u_i = 1$.
	According to Lemma~$\ref{lemma:convex-reach-ver}$, we can then pick an arbitrary final ensemble $\distribution = \sum_{1 \leq i \leq \lenEnsembleSet} u_i \cdot \distribution_i \in \operatorname{conv}(\quantumPolytope')$ by choosing an arbitrary convex combination in terms of $\quantumPolytope'$. 
	Since $\distribution \vDash \postcondition$, we have 
$$(\forall 1\le i\le N.\ u_i\ge 0 \land \sum_{1\leq i\le N} u_i = 1) \Rightarrow (\sum_{1\leq i \le N}u_i\cdot\distribution_i \vDash \postcondition)$$
	is valid in this, and the other polytopes in $\preconditionPolytopes(\precondition)$.
\end{itemize}

\subsection{Verification Experiments}
\label{sec:verification-experiments}
\begin{figure}[H]
    \centering
    % --- First program ---
    \begin{subfigure}[b]{0.45\textwidth}
        \centering
        \includegraphics[width=\linewidth]{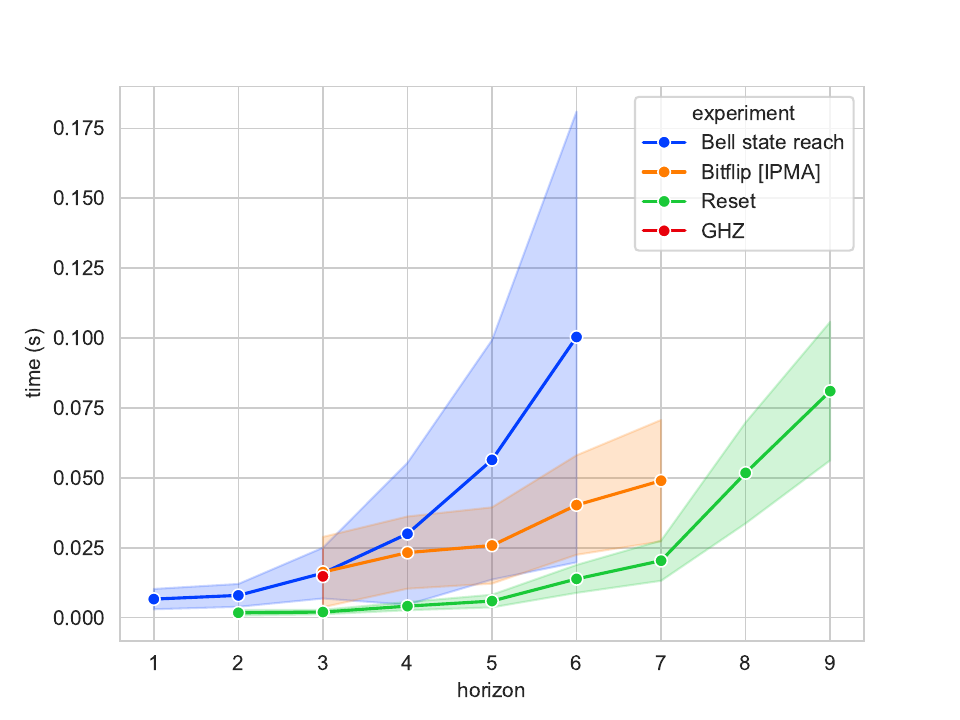}
        \label{fig:verify_bellman}
    \end{subfigure}
    \hfill
    \begin{subfigure}[b]{0.45\textwidth}
        \includegraphics[width=\linewidth]{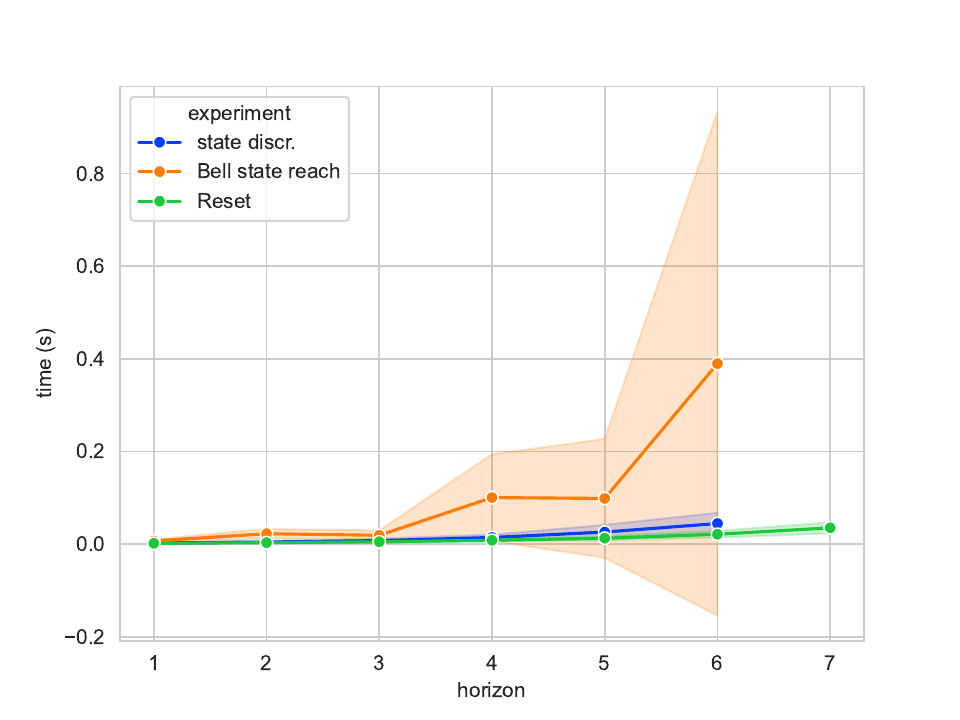}
        \label{fig:verify_convex}
    \end{subfigure}
    \Description{Line plot of verification experiments.}
    \caption{Running times (in seconds) for verifying the programs synthesized in the experiments exposed in the main text using the procedure $\textsc{checkLin}$.
	The left plot shows the verification times for experiments with singular preconditions, while the right plot shows those with linear preconditions.
	Each point represents the average time required to verify a program at a given horizon, averaged over all embeddings considered in the main experiments. 
	Lines on top and bottom of each point indicate one standard deviation around the mean.}
    \label{figs:verification_exps}
\end{figure}

Running times to verify the programs of all experiments (except CX+H and IPMA') considering all of its corresponding hardware specifications and embeddings are illustrated in Figure~\ref{figs:verification_exps}.
To solve first-order logic formulas we use Z3 version 4.15.4.

\section{APPENDIX: Synthesis}
\label{appendix:synthesis}
\subsection{Proof of Theorem~\ref{thm:lp_correctness}}
\begin{theorem*}[LP Correctness Theorem~\ref{thm:lp_correctness}]
Given a hardware specification $\quantumHardware$, 
an instruction set $\finiteSetPrograms$ with
an instruction guard $\instructionGuard$,
a horizon $\horizon$,
a linear ensemble assertion $\precondition$,
and a state assertion $\stateAssertion$,
let $\probVal$ and $x$ be the probability value and probability vector found by the linear program $LP(\precondition, \stateAssertion)$.
Then $\{\precondition\}~ (\bigoplus_{x[i]}~\QProgram_i)^H~\{\postcondition(\stateAssertion, \probVal)\}$ is valid.
Moreover, 
$\bigoplus_{x[i]}~\QProgram_i$ is optimal for the precondition $\phi$ and target~$\varphi$.
\end{theorem*}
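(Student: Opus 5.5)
We prove the two claims separately: validity of the triple, then optimality.

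\emph{Validity.} Let $\QProgram^\ast = \bigoplus_{x_i}\QProgram_i$ be the mixture of the programs $\QProgram_i \in \numAlgorithms^{\phi,\stateAssertion}$ with the LP weights. Fix $\distribution \vDash \precondition$. By definition of $\precondition_{\QSetPolytopes}$ there is a polytope $\quantumPolytope \in \preconditionPolytopes(\precondition)$ with $\distribution = \sum_j \alpha_j \distribution_j$, where the $\distribution_j$ are its corners, $\alpha_j\ge 0$ and $\sum_j\alpha_j = 1$.

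Unfolding the semantics of $\oplus_p$ gives $\sem{\QProgram^\ast}_\quantumHardware(\distribution) = \sum_i x_i \sem{\QProgram_i}_\quantumHardware(\distribution)$. Lemma~\ref{lemma:helper-final-ensemble}, extended by induction to finite convex combinations, then yields $\sum_i x_i \sum_j \alpha_j \sem{\QProgram_i}_\quantumHardware(\distribution_j)$. By Lemma~\ref{lemma:final-ensemble} we may replace each $\sem{\QProgram_i}_\quantumHardware(\distribution_j)$ by $\getFinalEnsemble$ on the corner.

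The map $\beta \mapsto \sum_{h\vDash\stateAssertion}\beta(h)$ is linear, so the target mass of the final ensemble is $\sum_j \alpha_j \sum_i x_i M_{ij}$. Each LP constraint gives $\sum_i x_i M_{ij} \ge \probVal$, so this convex combination is at least $\probVal$. Hence $\sem{\QProgram^\ast}_\quantumHardware(\distribution) \vDash \postcondition(\stateAssertion,\probVal)$.

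Admissibility of $\QProgram^\ast$ follows because every $\QProgram_i$ is admissible with respect to $\precondition$. The reachable configurations of the mixture are exactly those of its components, and the mixture rule does not change the ensemble. Its length is $\max_i \lengthProgram(\QProgram_i)\le\horizon$.

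\emph{Optimality.} Let $\QProgram'$ be any candidate program of length at most $\horizon$ that is admissible with respect to $\precondition$, and suppose $\{\precondition\}\QProgram'^\quantumHardware\{\postcondition(\stateAssertion,\probVal')\}$ is valid. We must show $\probVal'\le\probVal$. Let $p^\ast$ be the best worst-case value of $\QProgram'$, i.e. the minimum over corners of its target mass. Applying Lemma~\ref{lemma:cvx-det-programs} with postcondition $\postcondition(\stateAssertion,p^\ast)$ gives deterministic programs in $\numAlgorithms$ and weights $y_i$ whose mixture also satisfies the postcondition. Evaluating at each corner $\distribution_j$ (each corner satisfies $\precondition$) shows $\sum_i y_i M'_{ij} \ge p^\ast$, where $M'$ ranges over all of $\numAlgorithms$.

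It remains to pass from $\numAlgorithms$ to the complete dominant set $\numAlgorithms^{\phi,\stateAssertion}$. Each program in $\numAlgorithms$ is $\leq_{\phi,\stateAssertion}$-equivalent to, or dominated by, a member of the dominant set. We would like to replace each $y_i$ by weight on a dominating program without lowering any column $j$. This is the main obstacle. Dominance as defined only compares worst-case values, that is minima over corners, not the individual entries $M_{ij}$. A column-wise replacement therefore needs dominance to be understood corner-wise, i.e. via the corner vectors $(M_{ij})_j$, which is how the paper says the dominant set is computed ("evaluating only the corners"). I would first state that reading explicitly and prove that the incremental construction yields, for every $\QProgram\in\numAlgorithms$, some $\QProgram_k$ in the dominant set with $M_{kj}\ge M_{\QProgram j}$ for all $j$.

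With that in hand, $(y_i)$ maps to a feasible LP point of value at least $p^\ast$. Since the LP optimum is $\probVal$, we get $p^\ast \le \probVal$, and therefore $\probVal' \le p^\ast \le \probVal$.
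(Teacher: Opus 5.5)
Your proof follows the same route as the paper's appendix argument. Validity comes from linearity of the semantics and of the target mass, so the LP's corner constraints suffice. Optimality comes from Lemma~\ref{lemma:cvx-det-programs} together with optimality of the LP. Your version is considerably more careful: you check admissibility and length of the mixture, and you make the reduction $\probVal'\le p^\ast$ explicit. In particular, the paper's one-line optimality argument (a better value ``is a feasible point that the linear program can find'') silently assumes exactly the column-wise completeness you isolate.

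The obstacle you name is genuine, and the corner-wise reading is necessary, not merely convenient. Write $M_{\QProgram j}$ for the target mass of $\QProgram$ from corner $j$. By linearity, $\{\precondition\}\QProgram^\quantumHardware\{\postcondition(\varphi,\probVal)\}$ is valid iff $\probVal\le\min_j M_{\QProgram j}$, so $\leq_{\precondition,\varphi}$ as defined is just the total preorder by worst-case value.

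The paper's own state-discrimination example shows the consequence. $\QProgram_0$ and $\QProgram_1$ both have worst-case value $0.5$, so they are $\simeq$-equivalent. Hence $\{\QProgram_0\}$ alone would be a complete set of dominant programs, and the LP would return $0.5$ instead of $0.75$.

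Now read dominance as $M_{\QProgram j}\le M_{\QProgram' j}$ for every corner $j$. Read completeness as containing a representative of every maximal $\simeq$-class; the literal definition would force every program of $\numAlgorithms$ to be dominant. Under these readings, the step you defer is immediate and needs nothing about the incremental construction. Since $\numAlgorithms$ is finite, every program lies below a maximal one, and that maximal one's representative in the complete set has the same corner vector. Moving each weight $y_i$ onto that representative gives an LP-feasible point of value at least $p^\ast$.

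One caveat affects your argument and the paper's equally: both take Lemma~\ref{lemma:cvx-det-programs} on faith. That lemma needs $\numAlgorithms$ to contain enough branching programs, but $\DCandidates$ introduces a conditional only when the ensemble reached from a \emph{single} corner has several classical states.

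Here is an instance where this fails. Take noise-free hardware, the unknown-state-reset precondition, the Reset instruction set, and $\horizon=2$. Every measurement outcome is deterministic from each corner, so $\numAlgorithms$ contains only straight-line programs. Their corner vectors are $(1,0)$ or $(0,1)$, and the LP value is $0.5$. Yet the program that measures $\qVar_0$, then applies $\notGate([\qVar_0])$ if the outcome is $1$ and re-measures otherwise, has length $2$. It reaches $[\qVar_0]=\ket{0}$ with probability $1$ from every initial ensemble.

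So in that instance the LP mixture is not optimal. The optimality half of both proofs is only as sound as that lemma, which would need $\numAlgorithms$ to branch on classical states reachable from any corner rather than from each corner in isolation.
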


Suppose we are given a hardware specification $\quantumHardware$, an instruction set $\finiteSetPrograms$, an instruction guard $\instructionGuard$, a horizon $\horizon$, a linear precondition $\precondition$ and a state assertion $\stateAssertion$.

By definition, we consider all deterministic programs that satisfy the instruction guard and are of length at most $\horizon$.
Also, as we saw in the previous section, it is enough to consider a mixture of these programs.
Furthermore, it is possible to evaluate just the corner points of the polytopes in $\preconditionPolytopes(\precondition)$ since also target postconditions are convex -- $(\probability(\stateAssertion \mid \lambda \cdot \distribution_1 + (1-\lambda)\cdot\distribution_2) = \lambda \cdot \probability(\stateAssertion \mid \distribution_1) + (1-\lambda)\cdot \probability(\stateAssertion \mid \distribution_2))$.
Thus, the worst case scenario occurs at corners.
Optimality follows, from the fact that the existence of a greater optimal value $\probVal'>\probVal$ contradicts the optimality of our solution as this is a feasible point that the linear program can find.

\subsection{Synthesis Experiments}
\label{appendix:programs}
This section contains programs referenced in the main text as well as other complementary Figures that help to visualize improvements gained in each experiment and at each horizon.

\subsubsection{Synthesis Experiment: {\sc Parity-bitflip} [IPMA] and [IPMA']}
\begin{figure}[H]
        \centering
        \begin{lstlisting}[basicstyle=\ttfamily\scriptsize,keepspaces=true,showstringspaces=false,tabsize=4, mathescape=true]
CNOT([q0,q2]); CNOT([q1,q2]); 
x2 := measure(q2); 
if x2 = 0:
    HALT
if x2 = 1:
	x2 := measure(q2); 
	if x2 = 0:
		HALT
	if x2 = 1:
		X([q0]); 
		HALT
        \end{lstlisting}
		\Description{~Example of the baseline program used for $\IPMAInstructionSet$  and $\IPMAInstructionSetPrime$ instruction sets for horizon 4.}
		\caption{Example of the baseline program used for $\IPMAInstructionSet$  and $\IPMAInstructionSetPrime$ instruction sets for horizon 4 using a majority vote strategy. For higher horizons, we only increase the number of measurements.}
		\label{fig:alg_perfect_ipma2}
\end{figure}

\begin{figure}[H]
\begin{minipage}[t]{0.48\linewidth}
\begin{lstlisting}[basicstyle=\ttfamily\tiny,keepspaces=true,showstringspaces=false,tabsize=4, mathescape=true]
CNOT([q0,q2]); CNOT([q1,q2]); 
x2 := measure(q2); 
if x2 = 0:
	x2 := measure(q2); 
	if x2 = 0:
		x2 := measure(q2); 
		if x2 = 0:
			X([q0]);
            HALT
		if x2 = 1:
			x2 := measure(q2); 
			if x2 = 0:
				X([q0]);
				HALT
			if x2 = 1:
				x2 := measure(q2); 
				if x2 = 0:
					X([q0]); 
					HALT
				if x2 = 1:
                    HALT
	if x2 = 1:
		x2 := measure(q2); 
		if x2 = 0:
			x2 := measure(q2); 
			if x2 = 0:
				X([q0]);
				HALT
			if x2 = 1:
				x2 := measure(q2); 
				if x2 = 0:
					X([q0]); 
					HALT
				if x2 = 1:
					HALT
		if x2 = 1:
			x2 := measure(q2); 
			if x2 = 0:
				x2 := measure(q2); 
				if x2 = 0:
					X([q0]); 
					HALT
				if x2 = 1:
                    HALT
			if x2 = 1:
               HALT
\end{lstlisting}
\end{minipage}\hfill
\begin{minipage}[t]{0.48\linewidth}
\begin{lstlisting}[basicstyle=\ttfamily\tiny,keepspaces=true,showstringspaces=false,tabsize=4, mathescape=true]
if x2 = 1:
	x2 := measure(q2); 
	if x2 = 0:
		x2 := measure(q2); 
		if x2 = 0:
			x2 := measure(q2); 
			if x2 = 0:
				X([q0]);
				HALT
			if x2 = 1:
				x2 := measure(q2); 
				if x2 = 0:
					X([q0]); 
					HALT
				if x2 = 1:
					HALT
		if x2 = 1:
			x2 := measure(q2); 
			if x2 = 0:
				x2 := measure(q2); 
				if x2 = 0:
					X([q0]); 
					HALT
				if x2 = 1:
					HALT
			if x2 = 1:
				HALT
	if x2 = 1:
		x2 := measure(q2); 
		if x2 = 0:
			x2 := measure(q2); 
			if x2 = 0:
				x2 := measure(q2); 
				if x2 = 0:
					X([q0]); 
					HALT
				if x2 = 1:
					HALT
			if x2 = 1:
				HALT
		if x2 = 1:
			HALT
\end{lstlisting}
\end{minipage}
\Description{Program achieving the highest improvement using instruction set $\IPMAInstructionSetPrime$ with respect to majority-vote strategy program (see example in Figure~\ref{fig:alg_perfect_ipma2}).}
\caption{Program achieving the highest improvement using instruction set $\IPMAInstructionSetPrime$ with respect to majority-vote strategy program (see example in Figure~\ref{fig:alg_perfect_ipma2}).}
\label{fig:alg_optimal_ipma2}
\end{figure}

\begin{figure}[H]
\tiny
\begin{minipage}[t]{0.48\linewidth}
\begin{lstlisting}[basicstyle=\ttfamily\scriptsize,keepspaces=true,showstringspaces=false,tabsize=4, mathescape=true]
CNOT([q0,q2]); CNOT([q1,q2]); 
x2 := measure(q2); 
if x2 = 0:
	x2 := measure(q2); 
	if x2 = 0:
		CNOT([q0,q2]); CNOT([q1,q2]); 
		x2 := measure(q2); 
		if x2 = 0:
			x2 := measure(q2); 
			if x2 = 0:
				x2 := measure(q2); 
				if x2 = 0:
					HALT
				if x2 = 1:
					HALT
			if x2 = 1:
				x2 := measure(q2); 
				if x2 = 0:
                    HALT
				if x2 = 1:
					X([q0]); 
					HALT
		if x2 = 1:
			CNOT([q0,q2]); CNOT([q1,q2]); 
			x2 := measure(q2); 
			if x2 = 0:
				X([q0]); 
				HALT
			if x2 = 1:
                HALT
	if x2 = 1:
		x2 := measure(q2); 
		if x2 = 0:
			x2 := measure(q2); 
			if x2 = 0:
				x2 := measure(q2); 
				if x2 = 0:
                    HALT
				if x2 = 1:
					x2 := measure(q2); 
					if x2 = 0:
						HALT
					if x2 = 1:
						X([q0]); 
						HALT
			if x2 = 1:
				x2 := measure(q2); 
				if x2 = 0:
					x2 := measure(q2); 
					if x2 = 0:
						HALT
					if x2 = 1:
						X([q0]); 
						HALT
				if x2 = 1:
					X([q0]);
                    HALT
		if x2 = 1:
			x2 := measure(q2); 
			if x2 = 0:
				x2 := measure(q2); 
				if x2 = 0:
					x2 := measure(q2); 
					if x2 = 0:
						HALT
					if x2 = 1:
						X([q0]); 
						HALT
				if x2 = 1:
					X([q0]);
					HALT
			if x2 = 1:
				X([q0]);
                HALT
\end{lstlisting}
\end{minipage}\hfill
\begin{minipage}[t]{0.48\linewidth}
\begin{lstlisting}[basicstyle=\ttfamily\scriptsize,keepspaces=true,showstringspaces=false,tabsize=4, mathescape=true]
if x2 = 1:
	x2 := measure(q2); 
	if x2 = 0:
		x2 := measure(q2); 
		if x2 = 0:
			x2 := measure(q2); 
			if x2 = 0:
				x2 := measure(q2); 
				if x2 = 0:
				    HALT
				if x2 = 1:
					x2 := measure(q2); 
					if x2 = 0:
						HALT
					if x2 = 1:
						X([q0]); 
						HALT
			if x2 = 1:
				x2 := measure(q2); 
				if x2 = 0:
					x2 := measure(q2); 
					if x2 = 0:
						HALT
					if x2 = 1:
						X([q0]); 
						HALT
				if x2 = 1:
					X([q0]);
                    HALT
		if x2 = 1:
			x2 := measure(q2); 
			if x2 = 0:
				x2 := measure(q2); 
				if x2 = 0:
					x2 := measure(q2); 
					if x2 = 0:
						HALT
					if x2 = 1:
						X([q0]); 
						HALT
				if x2 = 1:
					X([q0]);
					HALT
			if x2 = 1:
				X([q0]);
                HALT
	if x2 = 1:
		CNOT([q0,q2]); CNOT([q1,q2]); 
		x2 := measure(q2); 
		if x2 = 0:
			x2 := measure(q2); 
			if x2 = 0:
				X([q0]);
				HALT
			if x2 = 1:
				x2 := measure(q2); 
				if x2 = 0:
					X([q0]); 
					HALT
				if x2 = 1:
                    HALT
		if x2 = 1:
			CNOT([q0,q2]); CNOT([q1,q2]); 
			x2 := measure(q2); 
			if x2 = 0:
				X([q0]); 
				HALT
			if x2 = 1:
				HALT
\end{lstlisting}
\end{minipage}
\caption{Program achieving the improvement in robust embeddings using instruction set $\IPMAInstructionSetPrime$, at horizon 8, with respect to majority-vote strategy program (see example in Figure~\ref{fig:alg_perfect_ipma2}).}
\Description{Program achieving the highest improvement in robust embeddings using instruction set $\IPMAInstructionSetPrime$.}
\label{fig:alg_optimal_ipma2_cambridge}
\end{figure}

\begin{figure}[H]
	\includegraphics[width=\linewidth]{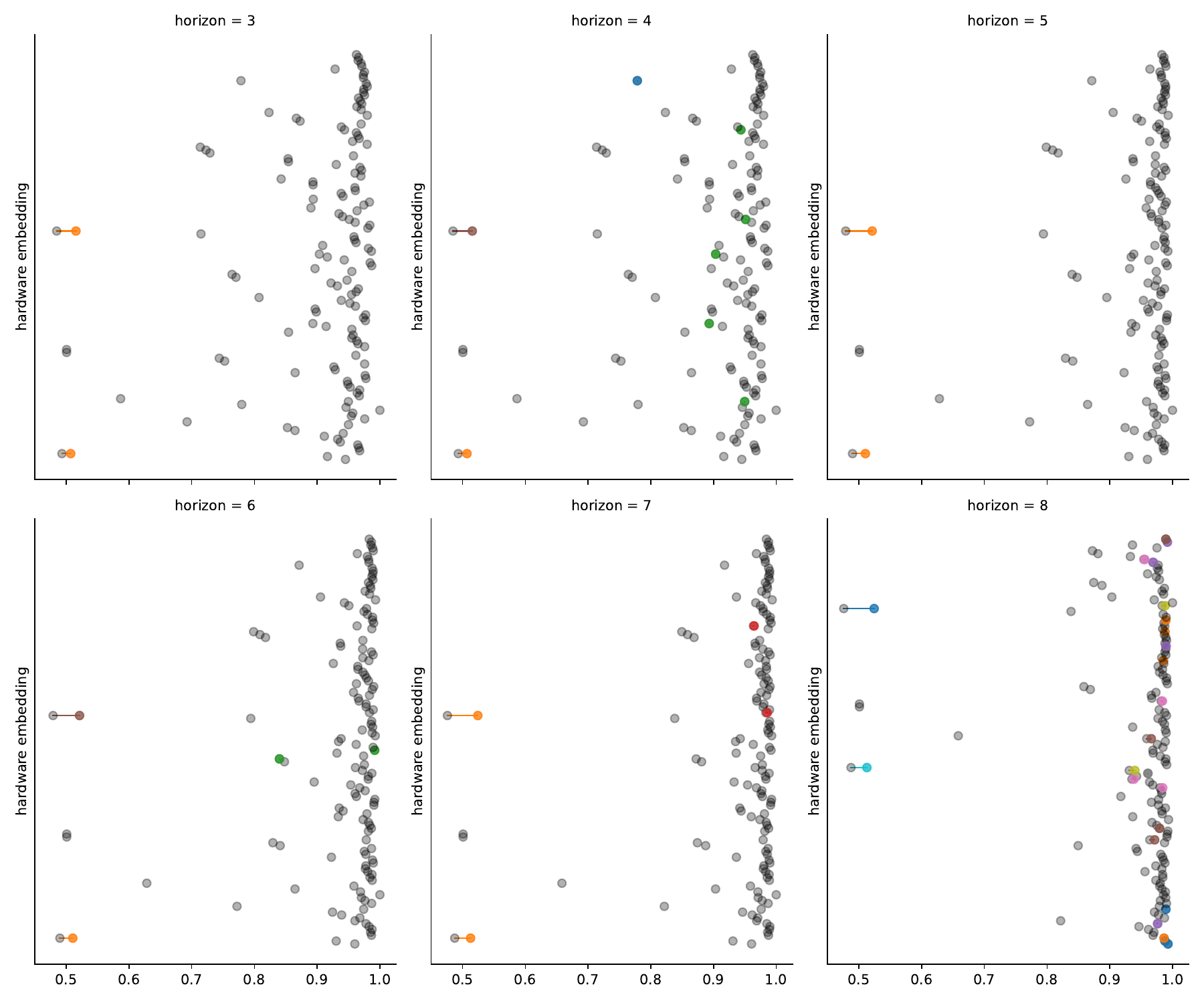}
	\Description{Scatter plots for each horizon showing improvements of synthesized programs using instruction set IPMA'.}
	\caption{Scatter plots for each horizon showing improvements of synthesized programs using instruction set IPMA'.
	Colors denote different programs within each sub-figure; gray indicates the baseline.
	The y-axis represents hardware embeddings and the x-axis the accuracy.
	Horizontal line segments connect baseline and synthesized program performance.
	(Instruction set IPMA yields a subset of the programs that IPMA' produces, thus the graphs are similar).}
	\label{fig:scatter_ipma2}
\end{figure}

\subsubsection{Synthesis Experiment: {\sc Parity-bitflip} [CX+H]}
\begin{figure}[H]
\centering
\begin{minipage}[b]{0.30\linewidth}
\vbox to 4cm{%
\begin{lstlisting}[basicstyle=\ttfamily\scriptsize,
                   keepspaces=true,
                   showstringspaces=false,
                   tabsize=4,
                   mathescape=true,
                   breaklines=true]
H([q2]); 
CNOT([q2,q1]); 
H([q2]); 
H([q1]); 
CNOT([q2,q1]); 
CNOT([q0,q1]); 
x2 := measure(q2);
\end{lstlisting}
\vfill
}
\end{minipage}\hfill
\begin{minipage}[b]{0.30\linewidth}
\vbox to 4cm{%
\begin{lstlisting}[basicstyle=\ttfamily\scriptsize,
                   keepspaces=true,
                   showstringspaces=false,
                   tabsize=4,
                   mathescape=true,
                   breaklines=true]
H([q2]); 
CNOT([q2,q1]); 
H([q2]); 
CNOT([q0,q1]); 
H([q1]); 
CNOT([q0,q1]); 
CNOT([q2,q1]); 
HALT
\end{lstlisting}
\vfill
}
\end{minipage}\hfill
\begin{minipage}[b]{0.30\linewidth}
\vbox to 4cm{%
\begin{lstlisting}[basicstyle=\ttfamily\scriptsize,
                   keepspaces=true,
                   showstringspaces=false,
                   tabsize=4,
                   mathescape=true,
                   breaklines=true]
H([q2]); 
CNOT([q2,q1]); 
H([q2]); 
H([q1]); 
CNOT([q0,q1]); 
x2 := measure(q2); 
if x2 = 0:
    H([q2]); 
    HALT
if x2 = 1:
    CNOT([q2,q1]); 
    HALT
\end{lstlisting}
\vfill
}
\end{minipage}
\caption{Synthesized program using the CX+H instruction set for the {\sc Parity-bitflip} problem. 
(a) Program synthesized for the perfect hardware. 
(b) Optimal program for a noisy embedding with the highest improvement using $\CXHInstructionSet$. 
(c) Program corresponding to a robust embedding, still achieving substantial improvement compared to (a).}
\Description{Synthesized program using the CX+H instruction set for the {\sc Parity-bitflip} problem. 
(a) Program synthesized for the perfect hardware.
(b) Optimal program for a noisy embedding with the highest improvement using $\CXHInstructionSet$. 
(c) Program corresponding to a robust embedding, still achieving substantial improvement compared to (a).}
\label{fig:cxh_programs}
\end{figure}

\begin{figure}[H]
	\centering
	\includegraphics[width=0.5\linewidth]{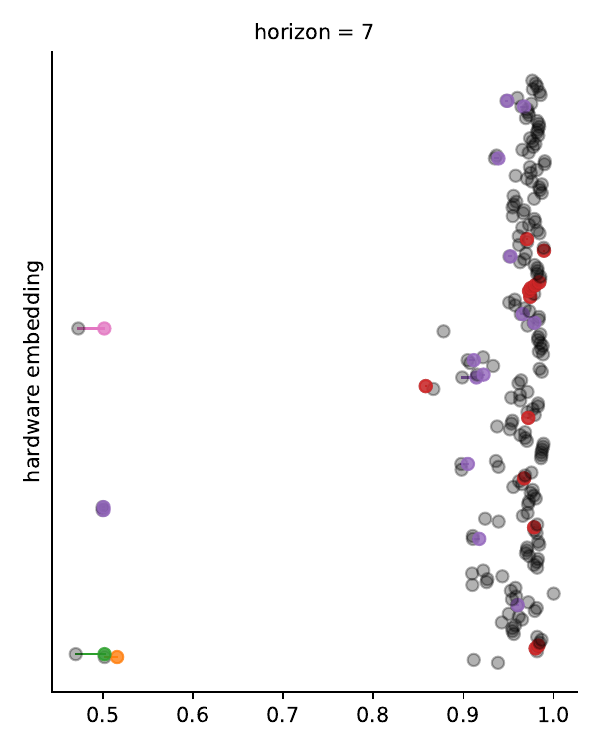}
	\Description{Scatter plots for each horizon showing improvements of synthesized programs using instruction set CX+H.
	Colors denote different programs; gray indicates the baseline.
	The y-axis represents hardware embeddings and the x-axis the accuracy.
	Horizontal line segments connect baseline and synthesized program performance.}
	\caption{Scatter plots for each horizon showing improvements of synthesized programs using instruction set CX+H.
	Colors denote different programs; gray indicates the baseline.
	The y-axis represents hardware embeddings and the x-axis the accuracy.
	Horizontal line segments connect baseline and synthesized program performance.}
	\label{fig:scatter_cxh}
\end{figure}

\subsubsection{Synthesis Experiment: {\sc GHZ state-preparation}}
\begin{figure}[H]
	\includegraphics[width=0.75\linewidth]{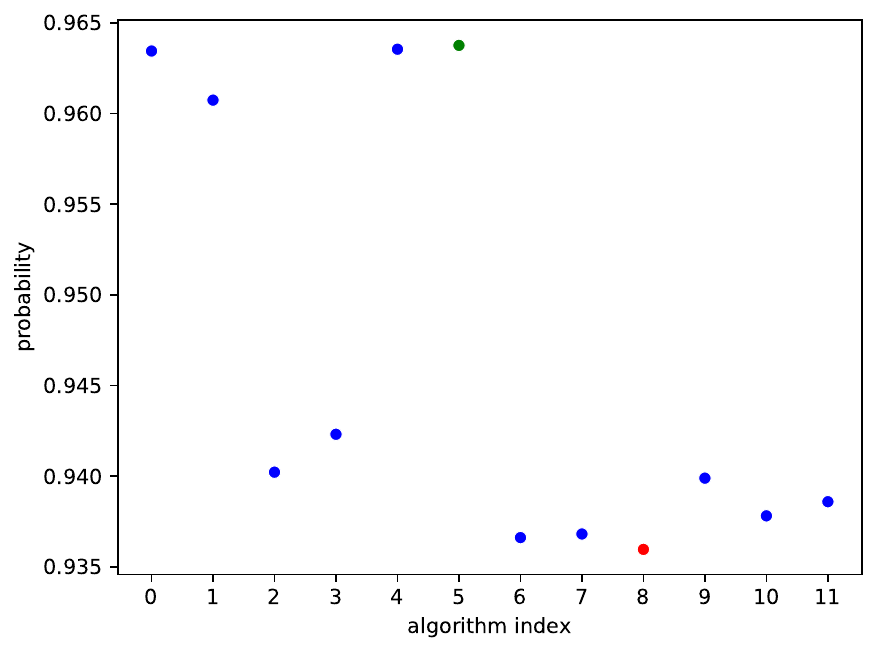}
	\Description{Accuracies achieved by all 12 different programs synthesized for the {\sc GHZ state-preparation} problem.
	It considers a specific embedding of Yorktown quantum hardware.
	The point in green represents the optimally-synthesized program,
	whereas the point in red is the one synthesized for the perfect hardware.
	}
	\caption{Accuracies achieved by all 12 different programs synthesized for the {\sc GHZ state-preparation} problem.
	It considers a specific embedding of Yorktown quantum hardware.
	The point in green represents the optimally-synthesized program,
	whereas the point in red is the one synthesized for the perfect hardware.
	}
	\label{fig:ghz_experiment}
\end{figure}

\subsubsection{Synthesis Experiment:{ \sc State-discrimination}}
\begin{figure}[H]
    \centering
    \begin{subfigure}[b]{0.48\textwidth}
        \centering
        \begin{lstlisting}[basicstyle=\ttfamily\tiny,keepspaces=true,showstringspaces=false,tabsize=4, mathescape=true]
{
	x0 := measure(q0); 
	if x0 = 0:
		x0 := measure(q0); 
		if x0 = 1:
			x0 := 0; 
			HALT
		if x0 = 0:
			HALT
	if x0 = 1:
		x0 := measure(q0); 
		HALT
} $\oplus_{0.5}$ {
	H([q0]); 
	x0 := measure(q0); 
	if x0 = 0:
		x0 := 1; 
		HALT
	if x0 = 1:
		x0 := 0; 
		HALT
}
        \end{lstlisting}
        \label{fig:alg-discr-baseline}
    \end{subfigure}
    \hfill
    % --- Third program ---
    \begin{subfigure}[b]{0.48\textwidth}
        \centering
        \begin{lstlisting}[basicstyle=\ttfamily\tiny,keepspaces=true,showstringspaces=false,tabsize=4,mathescape=true]
{
	H([q0]); 
	x0 := measure(q0); 
	if x0 = 1:
		HALT
	if x0 = 0:
		HALT

} $\oplus_{0.491155}$ {
	x0 := measure(q0); 
	if x0 = 0:
		x0 := 1; 
		HALT
	if x0 = 1:
		x0 := 0; 
		HALT
}
        \end{lstlisting}
        % program 252
        \label{fig:opt-discr}
    \end{subfigure}
	\Description{(a)~Baseline program used for the {\sc State discrimination} problem with horizon 3.
	(b)~Optimal program for horizon 2 for {\sc State discrimination} on a Torino embedding.}
    \caption{(a)~Baseline program used for the {\sc State discrimination} problem with horizon 3.
	(b)~Optimal program for horizon 2 for {\sc State discrimination} on a Torino embedding.}
    \label{fig:other_algs}
\end{figure}

\begin{figure}[H]
	\centering
	\includegraphics[width=\linewidth]{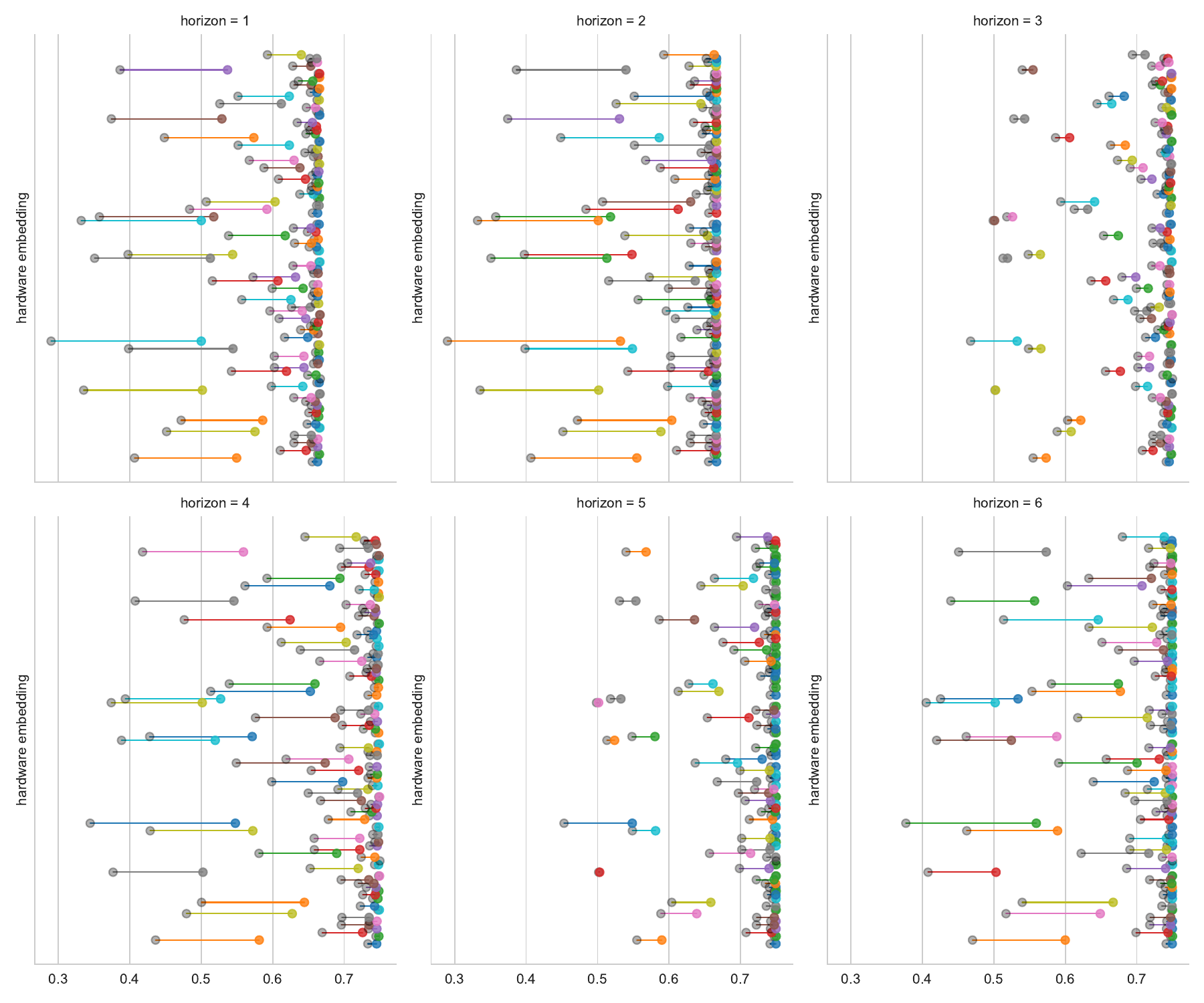}
	\Description{Scatter plots for each horizon showing improvements of synthesized programs for the state discrimination problem.
	Colors denote different programs within each sub-figure; gray indicates the baseline.
	The y-axis represents hardware embeddings and the x-axis the accuracy.
	Horizontal line segments connect baseline and synthesized program performance.}
	\caption{Scatter plots for each horizon showing improvements of synthesized programs for the state discrimination problem.
	Colors denote different programs within each sub-figure; gray indicates the baseline.
	The y-axis represents hardware embeddings and the x-axis the accuracy.
	Horizontal line segments connect baseline and synthesized program performance.}
	\label{fig:scatter_state_discr}
\end{figure}

\subsubsection{Synthesis Experiment: {\sc State-reset} and {\sc unknown-state reset}}
\begin{figure}[H]
    \centering
    \begin{subfigure}[b]{0.48\textwidth}
        \centering
        \begin{lstlisting}[basicstyle=\ttfamily\tiny,keepspaces=true,showstringspaces=false,tabsize=4, mathescape=true]
x0 := measure(q0); 
if x0 = 0:
    HALT
if x0 = 1:
	x0 := measure(q0); 
	if x0 = 0:
		HALT
	if x0 = 1:
		X([q0]); 
		HALT
		\end{lstlisting}
        \label{fig:alg-reset-baseline}
    \end{subfigure}
    \hfill
	\Description{Example of the baseline program used for the {\sc State-reset} problem with horizon 3.}
    \caption{Example of the baseline program used for the {\sc State-reset} problem with horizon 3.}
    \label{fig:reset_algs}
\end{figure}

\begin{figure}[H]
\begin{minipage}[t]{0.48\linewidth}
\begin{lstlisting}[basicstyle=\ttfamily\tiny,keepspaces=true,showstringspaces=false,tabsize=4, mathescape=true]
{
	x0 := measure(q0); 
	if x0 = 0:
		X([q0]); 
		x0 := measure(q0); 
		if x0 = 0:
			x0 := measure(q0); 
			if x0 = 0:
				x0 := measure(q0); 
				if x0 = 0:
					X([q0]); 
				if x0 = 1:
					x0 := measure(q0); 
					if x0 = 0:
						X([q0]); 
			if x0 = 1:
				x0 := measure(q0); 
				if x0 = 0:
					x0 := measure(q0); 
					if x0 = 0:
						X([q0]); 
		if x0 = 1:
			x0 := measure(q0); 
			if x0 = 0:
				x0 := measure(q0); 
				if x0 = 0:
					x0 := measure(q0); 
					if x0 = 0:
						X([q0]); 
	if x0 = 1:
		x0 := measure(q0); 
		if x0 = 0:
			x0 := measure(q0); 
			if x0 = 0:
				X([q0]); 
				x0 := measure(q0); 
				if x0 = 0:
					x0 := measure(q0); 
					if x0 = 0:
						X([q0]); 
			if x0 = 1:
				x0 := measure(q0); 
				if x0 = 0:
					x0 := measure(q0); 
					if x0 = 0:
						x0 := measure(q0); 
						if x0 = 0:
							X([q0]); 
		if x0 = 1:
			x0 := measure(q0); 
			if x0 = 0:
				x0 := measure(q0); 
				if x0 = 0:
					x0 := measure(q0); 
					if x0 = 0:
						x0 := measure(q0); 
						if x0 = 0:
							X([q0]); 
} $\oplus_{0.307053}$ 
\end{lstlisting}
\end{minipage}\hfill
\begin{minipage}[t]{0.48\linewidth}
\begin{lstlisting}[basicstyle=\ttfamily\tiny,keepspaces=true,showstringspaces=false,tabsize=4, mathescape=true]
{
	x0 := measure(q0); 
	if x0 = 0:
		X([q0]); 
		x0 := measure(q0); 
		if x0 = 0:
			x0 := measure(q0); 
			if x0 = 0:
				x0 := measure(q0); 
				if x0 = 0:
					X([q0]); 
				if x0 = 1:
					x0 := measure(q0); 
					if x0 = 0:
						X([q0]); 
			if x0 = 1:
				x0 := measure(q0); 
				if x0 = 0:
					x0 := measure(q0); 
					if x0 = 0:
						X([q0]); 
		if x0 = 1:
			x0 := measure(q0); 
			if x0 = 0:
				x0 := measure(q0); 
				if x0 = 0:
					x0 := measure(q0); 
					if x0 = 0:
						X([q0]); 
	if x0 = 1:
		x0 := measure(q0); 
		if x0 = 0:
			x0 := measure(q0); 
			if x0 = 0:
				X([q0]); 
				x0 := measure(q0); 
				if x0 = 0:
					x0 := measure(q0); 
					if x0 = 0:
						X([q0]); 
			if x0 = 1:
				x0 := measure(q0); 
				if x0 = 0:
					x0 := measure(q0); 
					if x0 = 0:
						X([q0]); 
					if x0 = 1:
						x0 := measure(q0); 
						if x0 = 0:
							X([q0]); 
				if x0 = 1:
					x0 := measure(q0); 
					if x0 = 0:
						x0 := measure(q0); 
						if x0 = 0:
							X([q0]); 
		if x0 = 1:
			x0 := measure(q0); 
			if x0 = 0:
				x0 := measure(q0); 
				if x0 = 0:
					x0 := measure(q0); 
					if x0 = 0:
						X([q0]); 
					if x0 = 1:
						x0 := measure(q0); 
						if x0 = 0:
							X([q0]); 
				if x0 = 1:
					x0 := measure(q0); 
					if x0 = 0:
						x0 := measure(q0); 
						if x0 = 0:
							X([q0]);
			if x0 = 1:
				x0 := measure(q0); 
				if x0 = 0:
					x0 := measure(q0); 
					if x0 = 0:
						x0 := measure(q0); 
						if x0 = 0:
							X([q0]); 
}
\end{lstlisting}
\end{minipage}
\Description{Program synthesized with horizon 7 for the {\sc Unknown-State reset } problem with a linear precondition and an embedding of Torino.}
\caption{Program synthesized with horizon 7 for the {\sc Unknown-State reset} with a linear precondition and an embedding of Torino.}
\label{fig:torino-convex-reset}
\end{figure}

\begin{figure}[H]
	\centering
	\includegraphics[width=\linewidth]{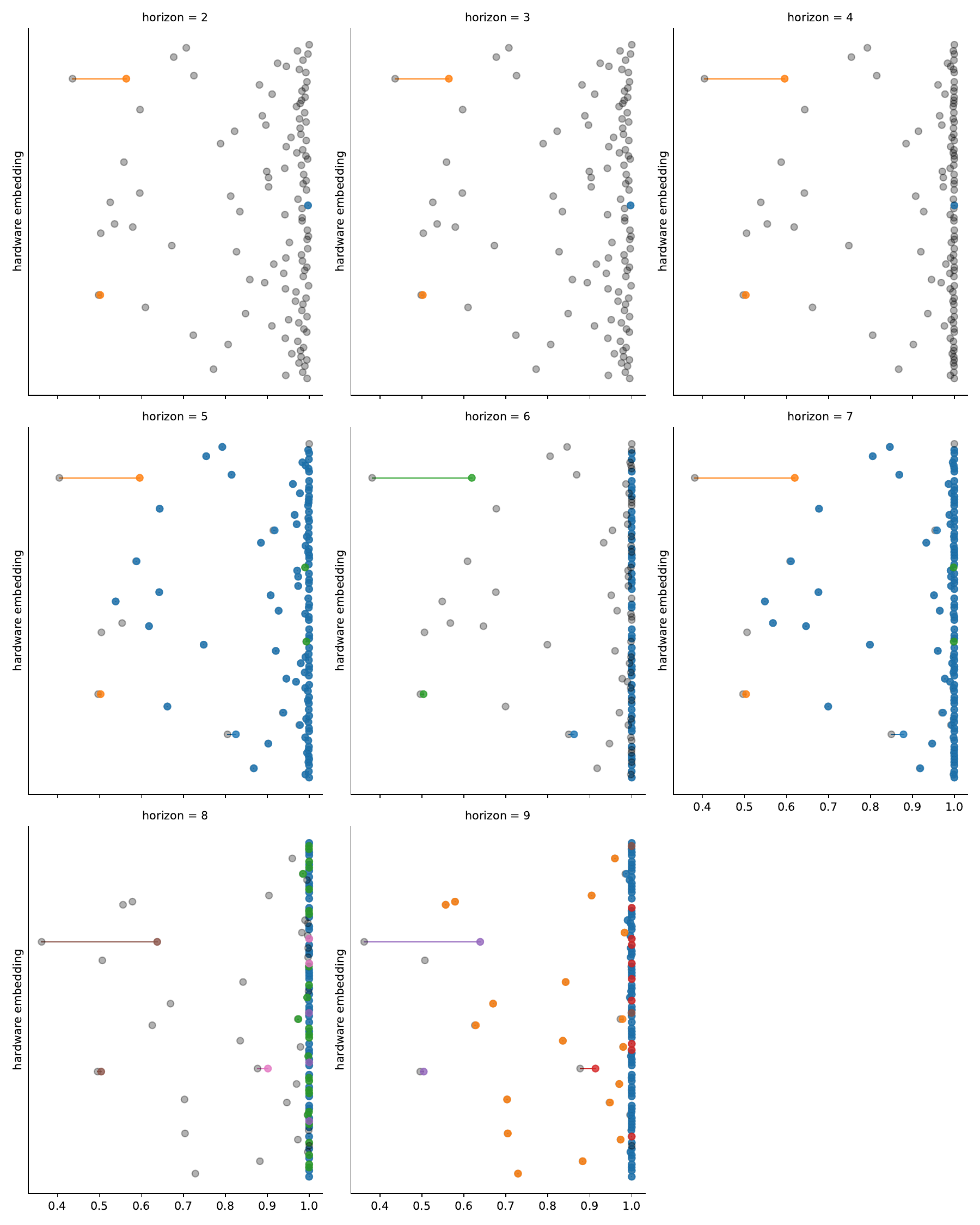}
	\Description{Scatter plots for each horizon showing improvements of synthesized programs for the  {\sc State-reset} problem (singular precondition).
	Colors denote different programs within each sub-figure; gray indicates the baseline.
	The y-axis represents hardware embeddings and the x-axis the accuracy.
	Horizontal line segments connect baseline and synthesized program performance.}
	\caption{Scatter plots for each horizon showing improvements of synthesized programs for the  {\sc State-reset } problem (singular precondition).
	Colors denote different programs within each sub-figure; gray indicates the baseline.
	The y-axis represents hardware embeddings and the x-axis the accuracy.
	Horizontal line segments connect baseline and synthesized program performance.}
	\label{fig:scatter_bellman_reset}
\end{figure}

\begin{figure}[H]
	\centering
	\includegraphics[width=\linewidth]{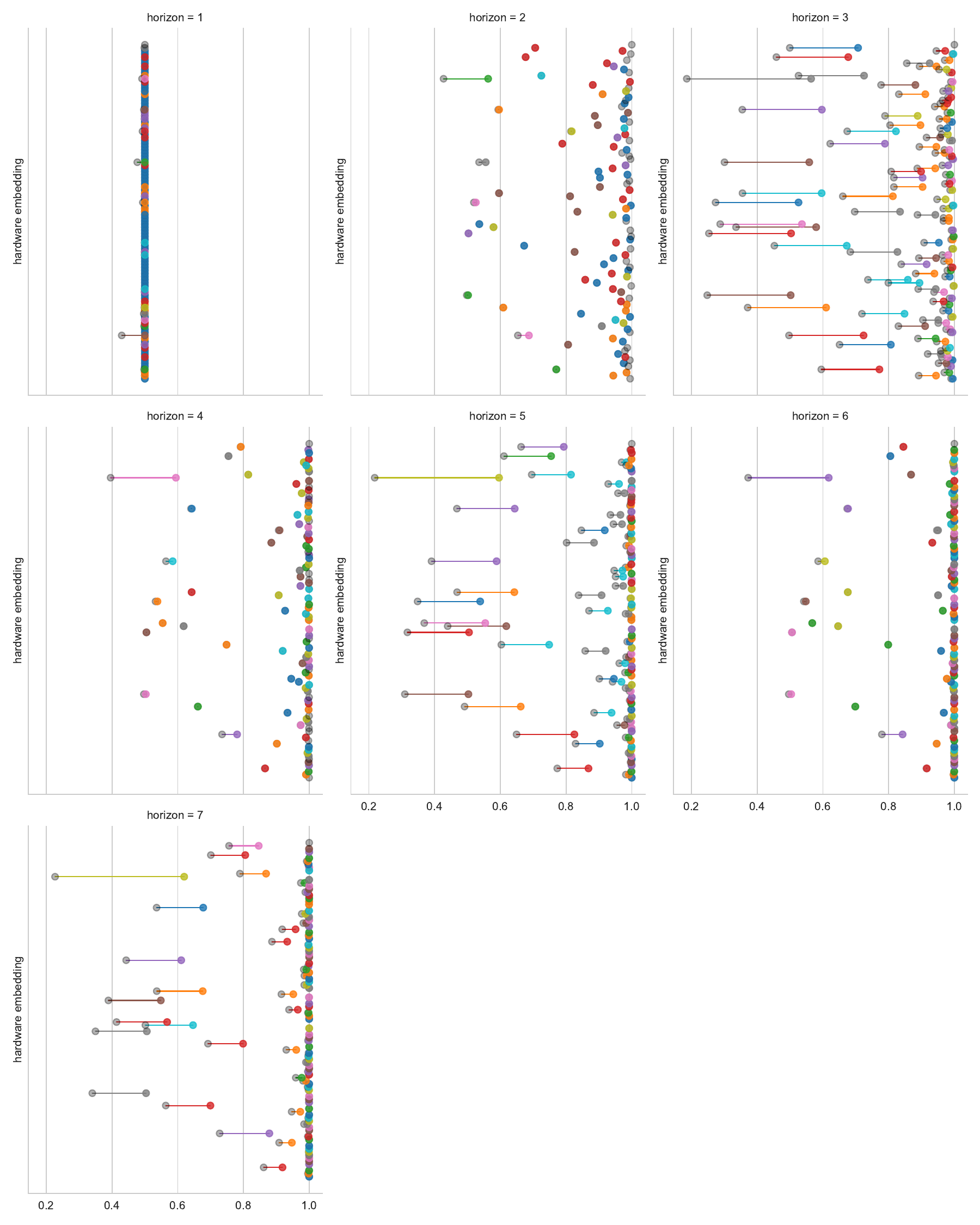}
	\Description{Scatter plots for each horizon showing improvements of synthesized programs for the {\sc Unknown-State reset} problem (linear precondition).
	Colors denote different programs within each sub-figure; gray indicates the baseline.
	The y-axis represents hardware embeddings and the x-axis the accuracy.
	Horizontal line segments connect baseline and synthesized program performance.}
	\caption{Scatter plots for each horizon showing improvements of synthesized programs for the {\sc Unknown-State reset} problem (linear precondition).
	Colors denote different programs within each sub-figure; gray indicates the baseline.
	The y-axis represents hardware embeddings and the x-axis the accuracy.
	Horizontal line segments connect baseline and synthesized program performance.}
	\label{fig:scatter_convex_reset}
\end{figure}

\subsubsection{Synthesis Experiment: {\sc Bell-state preparation} and {\sc local Bell-state preparation}}
\begin{figure}[H]
	\centering
        \begin{lstlisting}[basicstyle=\ttfamily\scriptsize,keepspaces=true,showstringspaces=false,tabsize=4, mathescape=true]
x0 := measure(q0); 
if x0 = 0:
    H([q1]); 
	CNOT([q1,q2]); 
    HALT
if x0 = 1:	
	CNOT([q1,q2]); 
	HALT
        \end{lstlisting}
	\Description{Example of the baseline program with horizon 3 for the {\sc Bell-state preparation} problem. Higher horizons increase only the number of measurements performed.}
	\caption{Example of the baseline program with horizon 3 for the {\sc Bell-state preparation} problem. Higher horizons increase only the number of measurements performed.}
	\label{fig:alg_perfect_bell_reach_bellman}
\end{figure}

\begin{figure}[H]
	\centering
	\includegraphics[width=\linewidth]{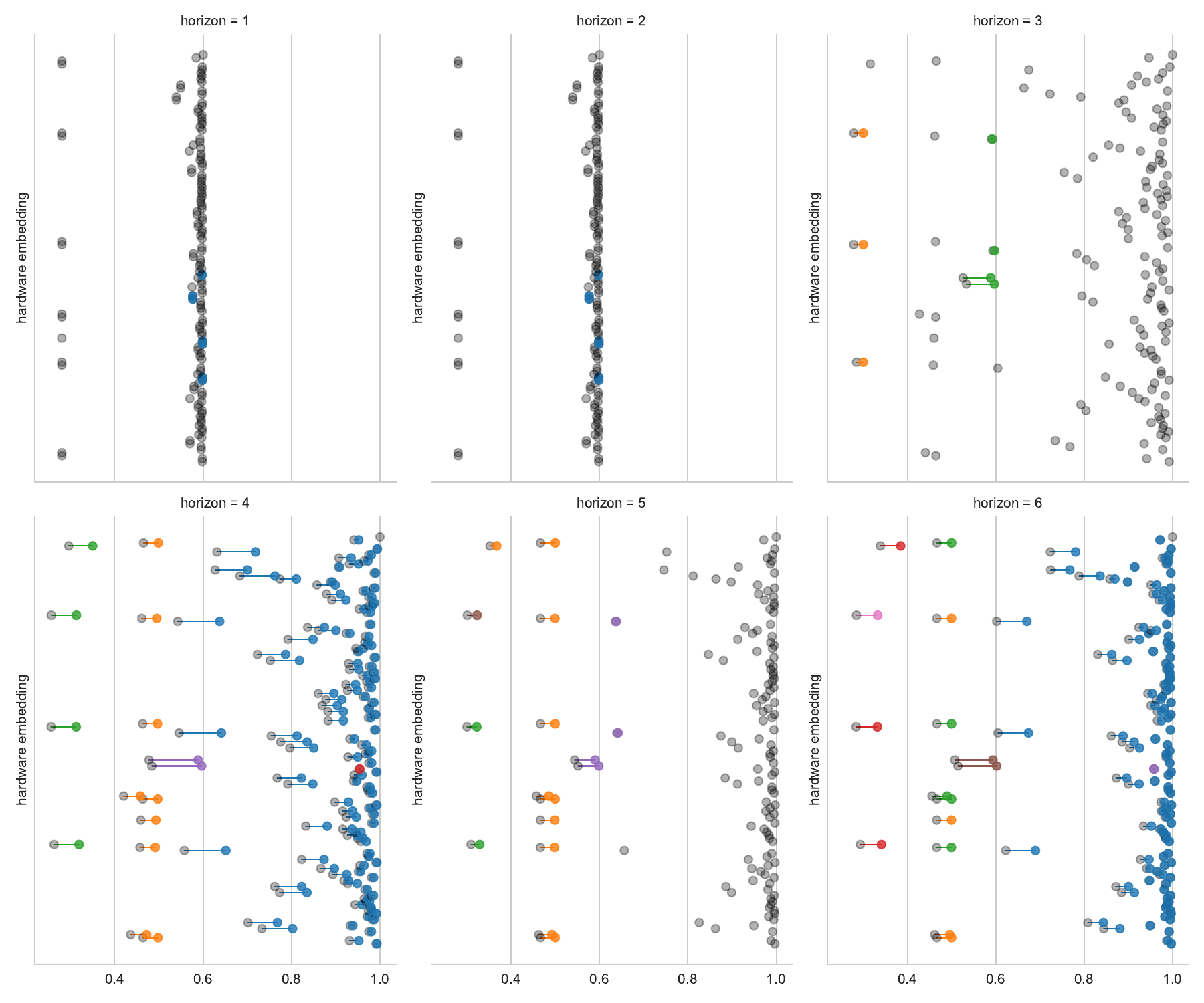}
	\Description{Scatter plots for each horizon showing improvements of synthesized programs for the {\sc Bell-state preparation} problem (singular precondition).
	Colors denote different programs within each sub-figure; gray indicates the baseline.
	The y-axis represents hardware embeddings and the x-axis the accuracy.
	Horizontal line segments connect baseline and synthesized program performance.}
	\caption{Scatter plots for each horizon showing improvements of synthesized programs for the {\sc Bell-state preparation} problem (singular precondition).
	Colors denote different programs within each sub-figure; gray indicates the baseline.
	The y-axis represents hardware embeddings and the x-axis the accuracy.
	Horizontal line segments connect baseline and synthesized program performance.}
	\label{fig:scatter_bellman_bell_prep}
\end{figure}

\begin{figure}[H]
	\centering
	\includegraphics[width=\linewidth]{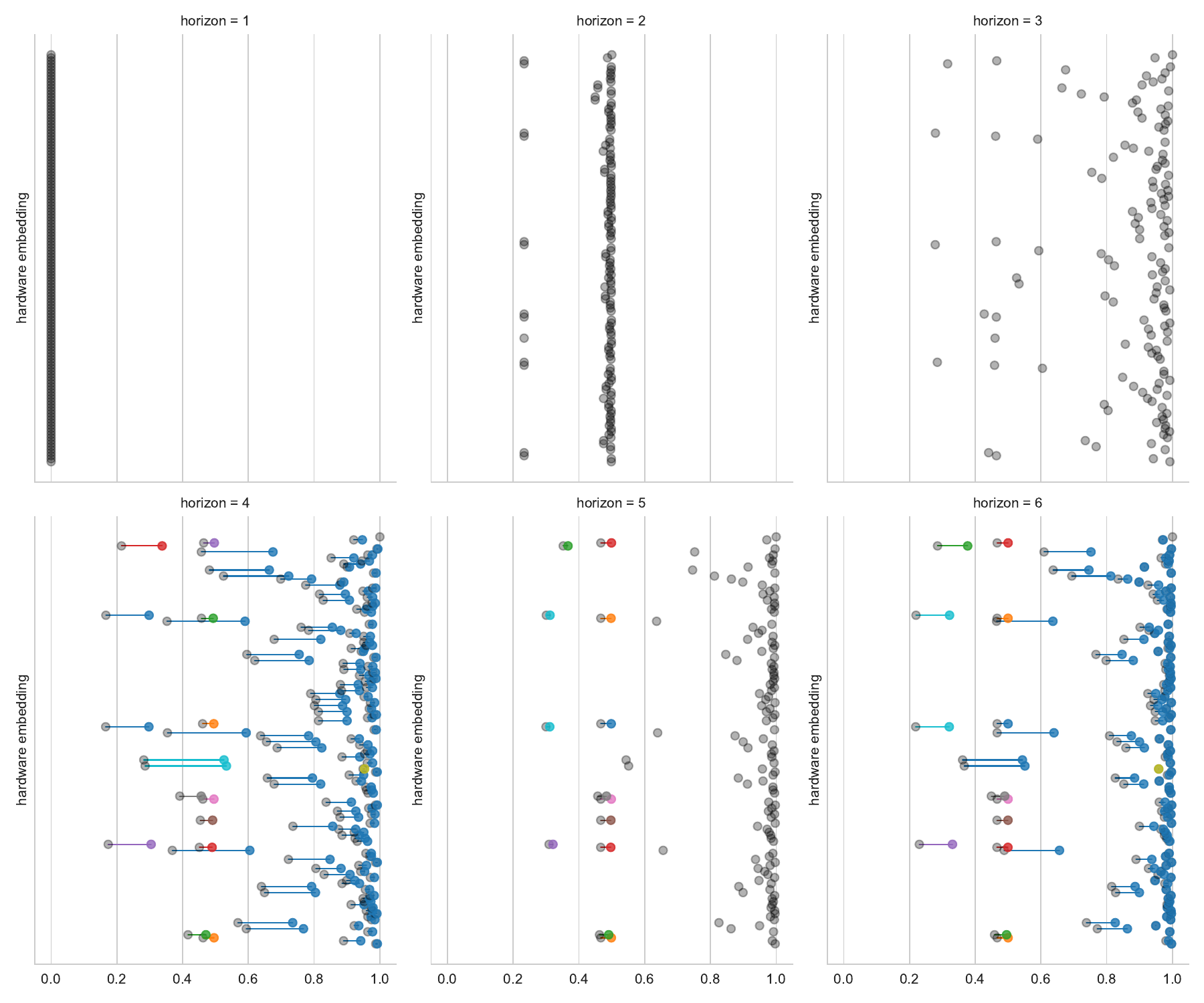}
	\Description{Scatter plots for each horizon showing improvements of synthesized programs for {\sc local Bell-state preparation} problem (linear precondition).
	Colors denote different programs within each sub-figure; gray indicates the baseline.
	The y-axis represents hardware embeddings and the x-axis the accuracy.
	Horizontal line segments connect baseline and synthesized program performance.}
	\caption{Scatter plots for each horizon showing improvements of synthesized programs for {\sc local Bell-state preparation} problem (linear precondition).
	Colors denote different programs within each sub-figure; gray indicates the baseline.
	The y-axis represents hardware embeddings and the x-axis the accuracy.
	Horizontal line segments connect baseline and synthesized program performance.}
	\label{fig:scatter_convex_bell_prep}
\end{figure}

\end{document}